\theoremstyle{definition}
\renewcommand{\arraystretch}{1.5}
\def\thm@space@setup{\thm@preskip=1.2\parskip \thm@postskip=0pt}
\theoremstyle{plain}
\newtheorem{theorem}{Theorem}
\theoremstyle{definition}
\theoremstyle{remark}
\newcommand{\etal}{\textit{et al}. }
\begin{document}
\pagenumbering{gobble} % Disable page numbering

%\section*{Graphical Abstract}
%
%\begin{figure}[htbp]
%	\centering
%	\includegraphics[width=\textwidth]{graphicalabstract}
%	\label{f0}
%\end{figure}
%\vspace{0.2cm}
%\begin{center}
%	%\small{\textit{Created in BioRender. Arbovirologie, A. (2025) https://BioRender.com/r9wek95. The map was downloaded as a shapefile from (https://gadm.org, version 4.1). Content not licensed under the Creative Commons Attribution (CC BY) license.}}
%\end{center}

%\newpage
%\section*{Highlights}
%
%\vspace{1em}
%\begin{itemize}
%	\item An Ordinary differential equation model for a wound healing process is adapted and extended to include systemic factors and chronic wounds. 
%	
%	\item Effects of oxygen, pathogen activity and inflammatory activity on wound healing are investigated. 
%	
%	\item Positivity of solutions and exsistence of equilibria are proved. 
%	
%	\item A global sensitivity analysis is conducted to identify the parameters with the greatest influence on the wound healing process.
%
%\end{itemize}
%
%\newpage

\clearpage % Start a new page
\pagenumbering{arabic} % Start page numbering from 1
\setcounter{page}{1} % Set the page number to 1

\title{A Predictive Model for Synergistic Oncolytic Virotherapy: Unveiling the Ping-Pong Mechanism and Optimal Timing of Combined Vesicular Stomatitis and Vaccinia Viruses}

\author{
    Joseph Malinzi$^{1,2,\dagger}$,  
    Amina Eladdadi$^{3,\ddagger}$, 
    Rachid Ouifki$^{4}$, 
    Raluca Eftimie$^{5}$, 
    Anotida Madzvamuse$^{6,7,8,9}$, 
    Helen M. Byrne$^{10}$ \\
    \small
    $^{1}$ Department of Mathematics, University of Eswatini, Kwaluseni M201, Eswatini\\
    $^{2}$ Institute of Systems Science, Durban University of Technology, Durban 4000, South Africa\\
    $^{3}$ Department of Mathematical Sciences, Rensselaer Polytechnic Institute, Troy, NY, USA\\
    $^{4}$ Department of Mathematics and Applied Mathematics, North-West University, Potchefstroom 2520, South Africa\\
    $^{5}$ Department of Mathematics, Université de Franche-Comté, France\\
    $^{6}$ Department of Mathematics, University of British Columbia, Vancouver, BC V6T 1Z2, Canada\\
    $^{7}$ Department of Mathematics and Computational Sciences, University of Zimbabwe, Harare, Zimbabwe\\
    $^{8}$ Department of Mathematics and Applied Mathematics, University of Pretoria, Pretoria 0132, South Africa\\
    $^{9}$ Department of Mathematics and Applied Mathematics, University of Johannesburg, PO Box 524 Auckland Park, Johannesburg 2006, South Africa\\
    $^{10}$ Wolfson Centre for Mathematical Biology, Mathematical Institute, University of Oxford, Oxford OX2 6GG, UK
}

\date{}
\maketitle

\renewcommand{\thefootnote}{\fnsymbol{footnote}}
\footnotetext[\value{footnote}]{$\dagger$Corresponding author: \texttt{josephmalinzi@aims.ac.za}}
\footnotetext[\value{footnote}]{$\ddagger$Corresponding author: \texttt{eladdadi@gmail.com}}
\renewcommand{\thefootnote}{\arabic{footnote}}

\vspace*{-1cm}
\begin{abstract}
\noindent We present a mathematical model that describes the synergistic mechanism of combined Vesicular Stomatitis Virus (VSV) and Vaccinia Virus (VV). The model captures the dynamic interplay between tumor cells, viral replication, and the interferon-mediated immune response, revealing a `ping-pong' synergy where VV-infected cells produce B18R protein that neutralizes interferon-$\alpha$, thereby enhancing VSV replication within the tumor. Numerical simulations demonstrate that this combination achieves complete tumor clearance in approximately 50 days, representing an 11\% acceleration compared to VV monotherapy (56 days), while VSV alone fails to eradicate tumors. Through bifurcation analysis, we identify critical thresholds for viral burst size and B18R inhibition, while sensitivity analysis highlights infection rates and burst sizes as the most influential parameters for treatment efficacy. Temporal optimization reveals that therapeutic outcomes are maximized through immediate VSV administration followed by delayed VV injection within a 1-19 day window, offering a strategic approach to overcome the timing and dosing challenges inherent in OVT.
\end{abstract}

\noindent {\small \textbf{Keywords:} Oncolytic Virotherapy, Mathematical Modeling, Synergy, Sensitivity Analysis, Optimal Control, Vaccinia Virus, Vesicular Stomatitis Virus, Bifurcation Analysis, Basic Reproduction Number, Therapeutic Optimization}
%\linenumbers

\section{Introduction}
\label{sec:intro}

Oncolytic virotherapy (OVT) uses engineered viruses to selectively infect and lyse cancer cells. While showing promise as a monotherapy \cite{russell2022advances,ottolino2010intelligent,martin2018oncolytic,bartlett2013oncolytic,malinzi2021prospect}, its efficacy is often limited by antiviral immune responses and poor tumor penetration. Combination strategies, either with conventional therapies or with other oncolytic viruses (OVs), are therefore being actively explored to improve outcomes \cite{le2010synergistic,russell2022advances,malinzi2021prospect}. Pairing two complementary OVs can exploit distinct mechanisms of action, broaden tumor targeting, and create self-amplifying therapeutic effects while evading host immunity \cite{le2010synergistic,zhang2014combination}.

Experimental work by Le Boeuf \textit{et al.} \cite{le2010synergistic} demonstrated strong synergy between Vesicular Stomatitis Virus (VSV) and Vaccinia Virus (VV). VV establishes local infection pockets and expresses immune-evading genes, while VSV replicates rapidly and spreads efficiently. Together they produce a 'ping-pong' effect: VV-infected cells secrete the B18R protein, which neutralizes interferon-$\alpha$ (IFN-$\alpha$) and thereby enhances VSV replication \cite{le2010synergistic}. Other studies have validated similar synergistic combinations \cite{zhang2014combination,vaha2014overcoming,vaha2013resistance}.

The efficacy of this combination is rooted in the complementary biology of the two viruses. VSV is an RNA virus with rapid replication kinetics and high oncolytic specificity for interferon-defective cells, while VV is a large, tumor-selective DNA virus capable of extensive genetic modification and potent immune modulation. Their interaction creates a self-amplifying cycle: VV can infect tumor cells refractory to VSV and, through proteins like B18R, condition the tumor microenvironment to be non-responsive to antiviral cytokines such as IFN-$\alpha$. This dismantles local antiviral defenses, thereby liberating VSV to replicate and spread aggressively---an effect so potent that VSV spread can surpass VV even at lower inoculums \cite{le2010synergistic}. VSV, in turn, enhances VV propagation, leading to synergistic tumor cell killing \cite{le2010synergistic, hastie2012vesicular, barber2004vesicular, guse2011oncolytic, thorne2011immunotherapeutic}.

Beyond direct oncolysis, OVs stimulate antitumor immunity by remodeling the microenvironment and expressing immunomodulatory transgenes \cite{russell2022advances,chattopadhyay2024review}. Several OVs, including VV, VSV, and measles virus, have entered clinical trials, with some (e.g., Talimogene laherparepvec) achieving regulatory approval \cite{heo2013randomized,falls2016murine,karapanagiotou2012phase,galanis2010phase,perez2012design,kaufman2015oncolytic}. Despite these advances, challenges remain in optimizing delivery, dosing, and viral replication control \cite{malinzi2021prospect}.

Mathematical modeling has become an essential tool for understanding OV dynamics and designing effective combination protocols \cite{sherlock2023oncolytic,ramaj2023treatment,guo2023mathematical,pooladvand2022,li2022modeling,malinzi2021mathematical,heidbuechel2020mathematical,lee2020application,li2020hopf,jang2022mathematical,friedman2003analysis,wein2003validation, wu2001modeling,tao2005competitive,eftimie2018tumour,Malinzi14}. Recent models have examined OVs paired with enhancer viruses \cite{jenner2021silico}, CAR-T cells \cite{mahasa2022combination}, chemotherapy \cite{malinzi2018enhancement,malinzi2017modelling}, and checkpoint inhibitors \cite{friedman2018combination}. These studies underscore the critical importance of administration timing, dosing sequences, and parameter sensitivity in achieving synergistic outcomes.

This study addresses a key gap by developing a mathematical model that elucidates the synergistic dynamics of combined VV and VSV therapy. We aim to: (1) characterize the `ping-pong' interaction and its impact on tumor dynamics; (2) identify conditions for complete tumor eradication; (3) generate testable hypotheses for experimental validation; (4) optimize administration timing and dosage; and (5) pinpoint the most sensitive parameters for therapeutic optimization. 

The paper is organized as follows. Section~\ref{sec:ModelFramwork} presents the mathematical model. Section~\ref{sec:optimizingtaus} optimizes injection timing. Section~\ref{sec:math-analysis} provides stability, reproduction number, and bifurcation analyses. Section~\ref{sec:Simulations} shows numerical simulations. Section~\ref{subsec:reproduction-numbers} analyzes viral replication thresholds. Section~\ref{sec:results} synthesizes key findings, and Section~\ref{sec:conclusion} discusses implications and future directions.

\section{Mathematical Model Framework}
\label{sec:ModelFramwork}

We propose a mathematical model to investigate the synergistic anti-tumor effects of combining Vaccinia Virus (VV) and Vesicular Stomatitis Virus (VSV) \cite{le2010synergistic,vaha2014overcoming,zhang2014combination}. The model describes how the two oncolytic viruses and tumor cells interact within an avascular tumor environment following simultaneous injection of both viruses \cite{le2010synergistic}. The network of biological interactions is depicted in Figure~\ref{fig:schematic_diagram}, while model variables and parameters are summarized in Tables~\ref{tab:modelvars} and~\ref{tab:parametervalues1}, respectively.

\subsection{Model Assumptions}

The mathematical model incorporates several key biological mechanisms:
\paragraph{(1) Synergistic viral interactions and co-infection dynamics} 
Experimental evidence demonstrates that VSV exhibits limited replication capacity in certain cancer cell lines, such as HT29, when administered alone. However, when combined with VV, the latter virus creates infection portals in these otherwise resistant tumor populations, sensitizing them to subsequent VSV infection. This complementary viral tropism enables the virus pair to target a broader spectrum of tumor cells than either virus could target individually \cite{le2010synergistic}. The resulting co-infection dynamics produce synergistic tumor cell killing, where the combined effect substantially exceeds the additive effects of each virus administered separately \cite{le2010synergistic,zhang2014combination}. Importantly, this synergy does not require simultaneous infection of individual tumor cells by both viruses, but rather emerges through sequential viral actions that collectively transform the tumor microenvironment.

\paragraph{(2) Molecular mechanisms of synergy}
At the molecular level, synergy is mediated via VV's expression of the B18R protein, which functions as a potent interferon antagonist. This viral gene product binds to type I interferon molecules, inhibiting interferon-mediated signal transduction and neutralizing a key component of the innate antiviral defense system \cite{le2010synergistic,fritz2014recombinant}. This mechanism is particularly important since interferon-alpha (IFN-$\alpha$) normally exerts potent inhibitory effects on VSV replication by inducing interferon-stimulated genes that establish an antiviral state within cells \cite{le2010synergistic,kueck2019vesicular}. The reciprocal nature of this viral partnership is further evidenced by observations that VSV can be genetically modified to enhance VV spread, creating a mutually beneficial amplification cycle that some researchers have termed as a `ping-pong' effect \cite{le2010synergistic}.

\paragraph{(3) Quasi-steady state approximation for molecular kinetics}
The expression patterns of IFN-$\alpha$ and B18R occur on substantially faster timescales than the population dynamics of tumor cells and viruses, justifying the application of quasi-steady state approximations for these molecular components within the mathematical framework \cite{le2010synergistic}. The model also accounts for the different replication rates of the two viruses, with VSV replicating in approximately 6 hours compared to 10--12 hours for VV, along with differences in their burst sizes and clearance rates \cite{le2010synergistic,zhu2009growth}. Through this representation of coordinated viral kinetics, the model captures how VV, via its expression of B18R, conditions the tumor microenvironment to become non-responsive to antiviral cytokines, thereby enabling sustained VSV replication even in the presence of ongoing interferon signaling \cite{le2010synergistic,vaha2014overcoming}.

\begin{figure}[htbp]
    \centering
    \includegraphics[width=0.8\linewidth]{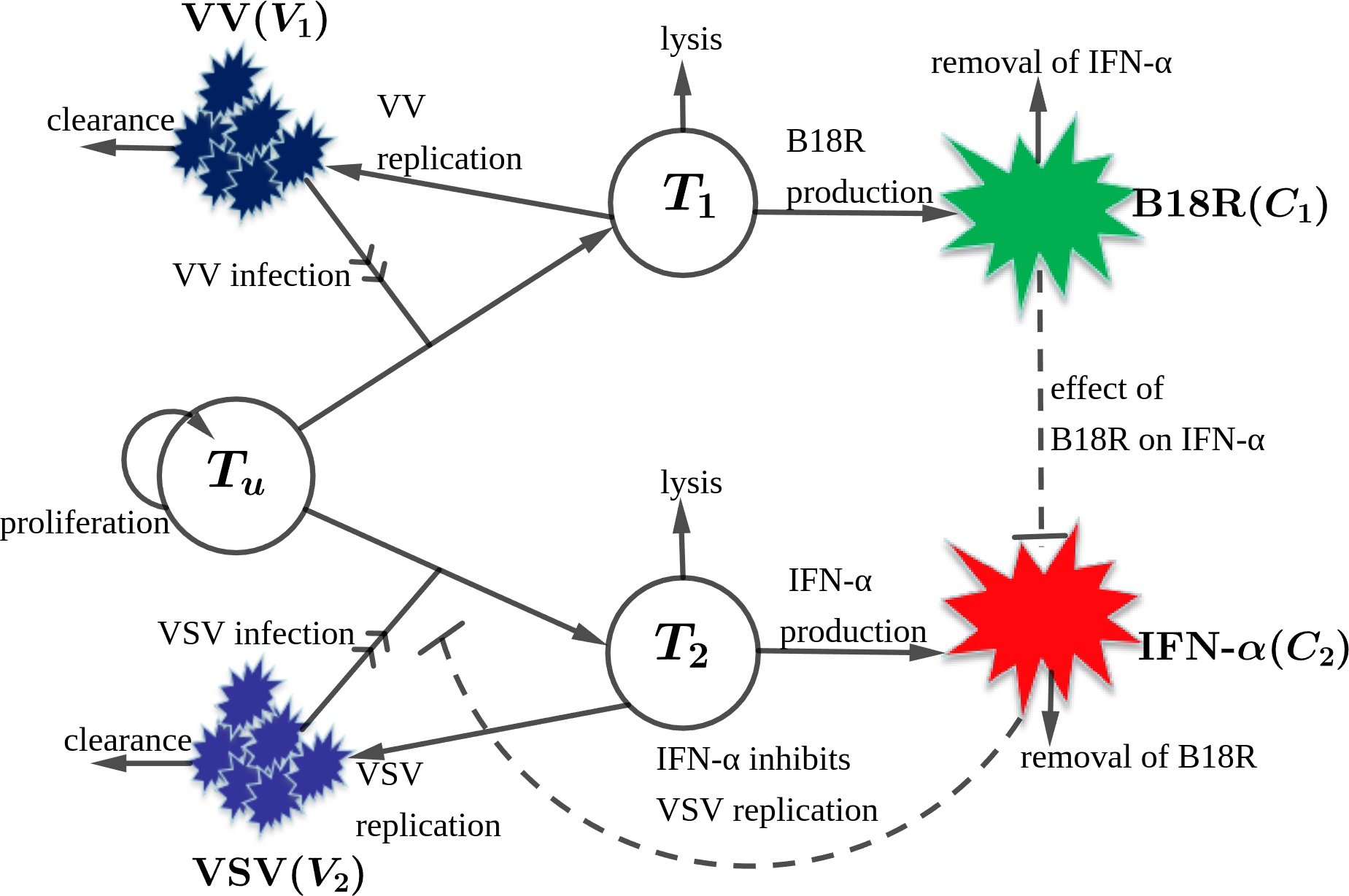}
	\caption{\small{\bf Schematic representation of the mathematical model describing the synergistic interaction between Vaccinia Virus (VV) and Vesicular Stomatitis Virus (VSV) with tumor cells.} The model consists of seven state variables: uninfected tumor cells ($T_u$), VV-infected tumor cells ($T_1$), VSV-infected tumor cells ($T_2$), free VV particles ($V_1$), free VSV particles ($V_2$), B18R protein concentration ($C_1$), and interferon-alpha concentration ($C_2$). \textbf{Solid lines} denote direct processes: viral infection, cell lysis, virus production, and molecular synthesis. \textbf{Dashed lines} represent inhibitory interactions. Key interactions include: proliferation of uninfected tumor cells ($T_u$) atm a rate $\alpha$ with carrying capacity $K$; infection of tumor cells by VV and VSV with Michaelis-Menten kinetics and infection rates $\beta_1$ and $\tilde{\beta}_2(C_2)$ respectively; lysis of infected tumor cells at rates $l_1$ and $l_2$; (4) virus production with burst sizes $b_1$ and $b_2$; virus clearance at rates $\gamma_1$ and $\gamma_2$; B18R production by VV-infected cells at rate $\alpha_1$ and B18R decay at rate $\mu_1$; IFN-$\alpha$ production by VSV-infected cells at rate $\alpha_2$ and IFN-$\alpha$ decay at rate $\mu_2$; inhibition of IFN-$\alpha$ by B18R at rate $\lambda$. The synergy arises as VV-infected cells produce B18R, which binds to and inhibits IFN-$\alpha$, thereby reducing the inhibitory effect on VSV infection described by $\tilde{\beta}_2(C_2) = \beta_2/(1 + C_2/C_2^*)$.}
    \label{fig:schematic_diagram}
\end{figure}

\subsection{Model Equations}

The mathematical model comprises seven coupled ordinary differential equations describing the temporal dynamics of the tumor cells, oncolytic viruses, and associated molecular mediators (Table~\ref{tab:modelvars}). The system tracks uninfected tumor cells ($T_u$), VV-infected cells ($T_1$), VSV-infected cells ($T_2$), free viruses ($V_1$, $V_2$), and molecular regulators B18R ($C_1$) and IFN-$\alpha$ ($C_2$) that mediate viral synergy. The complete system is described by:

\begin{align}
\frac{dT_u}{dt} & = \underbrace{\alpha T_u\left(1 - \frac{T_u + T_1+T_2}{K}\right)}_{\text{Tumor proliferation}} - \underbrace{\beta_1\frac{T_uV_1}{K_1 + T_u + T_1+T_2}}_{\text{Infection by VV}} - \underbrace{\tilde{\beta}_2(C_2)\frac{T_uV_2}{K_2 + T_u + T_1+T_2}}_{\substack{\text{IFN-}\alpha\text{-modulated} \\ \text{infection by VSV}}}, \label{eq:uninfected} \\
\frac{dT_1}{dt} & = \underbrace{\beta_1\frac{T_uV_1}{K_1 + T_u + T_1+T_2}}_{\text{Infection by VV}} - \underbrace{l_1T_1}_{\text{Lysis}}, \label{eq:infected1} \\
\frac{dT_2}{dt} & = \underbrace{\tilde{\beta}_2(C_2)\frac{T_uV_2}{K_2 + T_u + T_1+T_2}}_{\substack{\text{IFN-}\alpha\text{-modulated} \\ \text{infection by VSV}}} - \underbrace{l_2 T_2}_{\text{Lysis}}, \label{eq:infected2} \\
\frac{dV_1}{dt} & = \underbrace{b_1 l_1 T_1}_{\substack{\text{Virus release} \\ \text{from VV-infected cell lysis}}} - \underbrace{\beta_1\frac{T_uV_1}{K_1 + T_u + T_1+T_2}}_{\substack{\text{Loss due to} \\ \text{VV infection}}} - \underbrace{\gamma_1 V_1}_{\text{VV clearance}}\hspace*{-0.35cm}, \label{eq:V1} \\
\frac{dV_2}{dt} & = \underbrace{b_2 l_2 T_2}_{\substack{\text{Virus release} \\ \text{from VSV-infected cell lysis}}} - \underbrace{\tilde{\beta}_2 (C_2)\frac{T_uV_2}{K_2 + T_u + T_1+T_2}}_{\substack{\text{Loss due to} \\ \text{VSV infection}}} - \underbrace{\gamma_2 V_2}_{\text{VSV clearance}}\hspace*{-0.35cm}, \label{eq:V2} \\
\frac{dC_1}{dt} & = \underbrace{\alpha_1 T_1}_{\substack{\text{Production of B18R} \\ \text{by VV-infected cells}}} - \underbrace{\mu_1 C_1}_{\substack{\text{Natural decay} \\ \text{of B18R}}}\hspace*{-0.35cm}, \label{eq:C1} \\
\frac{dC_2}{dt} & = \underbrace{\alpha_2 T_2}_{\substack{\text{Production of IFN-}\alpha \\ \text{by VSV-infected cells}}} - \underbrace{\mu_2 C_2}_{\substack{\text{Natural decay} \\ \text{of IFN-}\alpha}} - \underbrace{\lambda C_1 C_2}_{\substack{\text{Neutralization of IFN-}\alpha \\ \text{by B18R}}}\hspace*{-0.8cm}, \label{eq:C2}
\end{align}
with initial conditions:
\begin{equation}
\begin{rcases}
T_u(0) = T_{u_{0}},\ T_1(0) = 0,\ T_2(0) = 0,\ V_1(0) = V_{1_{0}},\ V_{2}(0) = V_{2_{0}},\ C_1(0) = 0,\ C_2(0) = 0 \\
\end{rcases}
\label{eq:ICs}
\end{equation}
We propose 
\begin{align}
\tilde{\beta}_2(C_2) = \dfrac{\beta_2}{1 + \frac{C_2}{C_2^*}}
\label{eq:Synergy_function}
\end{align}
to model the IFN-$\alpha$-modulated infection rate of VSV. This expression captures the saturating inhibitory effect of IFN-$\alpha$ on viral infectivity: $\beta_2$ represents the maximal infection rate in a fully permissive environment ($C_2 = 0$), while the half-saturation constant $C_2^*$ quantifies VSV's sensitivity to the cytokine. The hyperbolic decrease with increasing $C_2$ is mechanistically grounded in interferon signaling biology, where IFN-$\alpha$ binding activates JAK/STAT pathways to establish an antiviral state, thereby reducing viral entry and replication in a concentration-dependent manner \cite{samuel2001antiviral}. This formulation can be justified by established models of ligand-receptor kinetics in virology \cite{padmanabhan2014mathematical} and directly represents the experimental observation that VV's B18R protein enhances VSV spread by neutralizing IFN-$\alpha$ \cite{le2010synergistic}.

It is worth noting that the following properties are assumed, based on \cite{le2010synergistic}: $\beta_2 \ll \beta_1$ when $C_2=0$, reflecting VV's higher initial infectivity; $l_1 \ll l_2$, representing VV's characteristically slower replication and lysis cycle compared to the rapid lytic cycle of VSV; and $\alpha_1, \mu_1 \gg 1$, ensuring that the dynamics of the B18R protein occur on a fast timescale, which supports the use of a quasi-steady state approximation. This parameterization establishes the baseline for the complementary synergy, where VV acts as the immune-modulating pioneer, enabling the potent oncolytic activity of VSV.

Model parameters (Table~\ref{tab:parametervalues1}) were calibrated against experimental data from~\cite{le2010synergistic} using nonlinear least squares optimization \cite{bates1988nonlinear}. Data extracted via WebPlotDigitizer \cite{Rohatgi2023} from HT29 and 4T1 tumor volume measurements under PBS control, VSV monotherapy, VV monotherapy, and combination therapy (Tables~\ref{tab:HT29} and \ref{tab:4T1}) were used for parameter estimation. The calibrated model accurately captures the characteristic synergistic tumor reduction (Figure~\ref{fig:modelfitting}), providing a validated foundation for predictive simulations.

\begin{table}[htbp]
\centering
\caption{\textbf{Model Variables}}
\label{tab:modelvars}
\begin{tabular}{@{}lll@{}}
\hline
Variable & Description & Units \\ 
\hline
$T_u(t)$ & Uninfected tumor density & Cells per mm$^3$ \\
$T_1(t)$ & VV-infected tumor cell density & Cells per mm$^3$ \\
$T_2(t)$ & VSV-infected tumor cell density & Cells per mm$^3$ \\
$V_1(t)$ & Free Vaccinia Virus (VV) & PFU\\
$V_2(t)$ & Free Vesicular Stomatitis Virus (VSV) & PFU\\
$C_1(t)$ & B18R gene product & Molecules per mm$^3$ \\
$C_2(t)$ & Type I interferon (IFN-$\alpha$) & Molecules per mm$^3$ \\
\hline
\end{tabular}
\end{table}

\begin{table}[htbp]
\centering
\caption{\textbf{Model Parameters, Descriptions, and Baseline Values}}
\label{tab:parametervalues1}
\begin{tabular}{@{}>{\raggedright}p{1.2cm}>{\raggedright}p{4cm}lll@{}}
\toprule
\textbf{Symbol} & \textbf{Description} & \textbf{Value} & \textbf{Units} & \textbf{Reference} \\ 
\midrule

\multicolumn{5}{l}{\textbf{Tumor Growth Parameters}} \\
\cmidrule(r){1-5}
$\alpha$ & Tumor growth rate & 0.2--0.9 & day$^{-1}$ & Estimated \\   
$K$ & Tumor carrying capacity & $10^8$--$10^9$ & cells/mm$^3$ & Estimated \\
\addlinespace

\multicolumn{5}{l}{\textbf{Viral Infection Parameters}} \\
\cmidrule(r){1-5}
$\beta_i$ & Tumor cell infection rates & 0.001--0.1 & cells.virus$^{-1}$.day$^{-1}$ & \cite{ZTK08} \\
$K_i$ & Michaelis-Menten constants & $10^8$ & cells/mm$^3$ & \cite{kirschner1998modeling} \\ 
\addlinespace

\multicolumn{5}{l}{\textbf{Infected Cell and Virus Dynamics}} \\
\cmidrule(r){1-5}
$l_i$ & Infected tumor cell death rates & 0.125--1 & day$^{-1}$ & \cite{hwang2013engineering,zeh2015first} \\
$\gamma_i$ & Virus decay rates & 0.01--0.1 & day$^{-1}$ & \cite{ZTK08} \\
$b_i$ & Virus burst sizes & 2--8000 & viruses/cell & \cite{zhu2009growth,rager1982role,BD90,almuallem2021oncolytic} \\
\addlinespace

\multicolumn{5}{l}{\textbf{B18R Protein Dynamics}} \\
\cmidrule(r){1-5}
$\alpha_1$ & Production rate of B18R & 0.1--1 & molecules cell$^{-1}$ day$^{-1}$ & Derived \\
$\mu_1$ & Decay rate of B18R & 0.001--0.5 & day$^{-1}$ & Derived \\  
\addlinespace

\multicolumn{5}{l}{\textbf{IFN-$\alpha$ Dynamics}} \\
\cmidrule(r){1-5}
$\alpha_2$ & IFN-$\alpha$ production rate & 0.1--1 & molecules cell$^{-1}$ day$^{-1}$ & Derived \\
$\mu_2$ & Decay rate of IFN-$\alpha$ & 0.001--0.5 & day$^{-1}$ & Derived \\ 
$\lambda$ & IFN-$\alpha$ inhibition rate by B18R & 0.01--1 & day$^{-1}$ & Derived \\ 

\bottomrule
\addlinespace
\multicolumn{5}{@{}l}{\footnotesize \textit{Note:} $i = 1,2$ for virus-specific parameters (VV: $i=1$, VSV: $i=2$)} \\
\end{tabular}
\end{table}

\subsection{Biological Mechanisms}

\paragraph{Tumor Cell Dynamics}
Tumor growth follows logistic dynamics \cite{ZTK08,bajzer1997mathematical,komarova2010ode,malinzi2018enhancement,malinzi2017modelling}, capturing the macroscopic expansion of tumor volume while accounting for complex microscopic processes including proliferation regulation and immune escape \cite{Benzekry2014}. Equations (\ref{eq:uninfected}), (\ref{eq:infected1}), and (\ref{eq:infected2}) describe the dynamics of the uninfected tumor cells ($T_u$) and virus-infected populations ($T_1$ for VV, $T_2$ for VSV). The logistic growth term $\alpha T_u\left(1 - \frac{T_u + T_1 + T_2}{K}\right)$ incorporates intrinsic growth rate $\alpha$ and carrying capacity $K$, representing resource-limited expansion. Viral infection follows Michaelis-Menten kinetics of the form $\beta_i T_u V_i/(K_i + T_u + T_1 + T_2)$, where $i=1,2$, to account for saturation effects \cite{komarova2010ode,wagner1973properties,yu2023exploring,de2006modeling,agrawal2014optimal,malinzi2018enhancement}. This formalism models the saturable, competitive nature of viral infection, where the sum $T_u + T_1 + T_2$ in the denominator accounts for competition between all cells (uninfected and infected) for virus binding, as they present similar surface receptors. The functional form $\frac{V}{K + T_{total}}$ captures the biological reality where infection rates plateau at high cell densities, as free virus particles become the limiting resource. Infection of uninfected tumor cells increases the number of infected tumor cells and decreases the number of viruses. The infected tumor cells, $T_{1}$ and $T_{2}$, have a lifespan $1/l_1$ and $1/l_2$ respectively.

\paragraph{Viral Replication and Synergy}
The dynamics of vaccinia virus (VV) and vesicular stomatitis virus (VSV) follow established mathematical formulations for oncolytic virus behavior \cite{komarova2010ode,bajzer1997mathematical,yu2023exploring,de2006modeling,agrawal2014optimal,malinzi2018enhancement}. Equations (\ref{eq:V1}) and (\ref{eq:V2}) describe the temporal evolution of free VV ($V_1$) and VSV ($V_2$) populations, respectively. Viral replication occurs at a rate $l_i b_i T_i$, where $b_i$ is the burst size, defined as the number of new viruses released upon the lysis of an infected cell. Concurrently, the depletion of free viruses due to infection of uninfected tumor cells is modeled by the terms $\beta_i T_u V_i/(K_u + T_u + T_I)$, which employ Michaelis-Menten kinetics to account for saturation effects. Both virus populations undergo natural clearance from the tissue environment at rates $\gamma_i$, and are supplied via controlled administration protocols.
The synergistic interaction between VV and VSV is mediated by the IFN-$\alpha$-dependent VSV infection rate, $\tilde{\beta}_2(C_2)$, defined in Eq. \eqref{eq:Synergy_function}. It is important to note that we consider $\beta_1$ to be constant, reflecting the assumption that VV's intrinsic infectivity is largely independent of local IFN-$\alpha$ due to its immune-evasion genes \cite{le2010synergistic}. A full system-level `ping-pong' effect, where VSV activity in turn benefits VV spread, can nonetheless emerge indirectly through the model dynamics. This can occur via mechanisms such as VSV-mediated reduction of the shared target cell pool, which alters the competitive landscape for VV.
 	
\paragraph{Gene Product and Interferon Dynamics}
The molecular mediators central to the synergistic mechanism are described by equations (\ref{eq:C1}) and (\ref{eq:C2}), which govern the dynamics of the vaccinia gene product B18R ($C_1$) and IFN-$\alpha$ ($C_2$). The B18R protein is produced by VV-infected tumor cells ($T_1$) at a rate which is proportional to $T_1$ with rate constant $\alpha_1$. This viral protein undergoes natural decay at a rate $\mu_1$. IFN-$\alpha$ is a key component of the innate antiviral response, which is produced by VSV-infected cells ($T_2$) at rate $\alpha_2$. Interferons induce the expression of interferon-stimulated genes (ISGs), many of which establish an antiviral state that inhibits viral replication \cite{kueck2019vesicular,fritz2014recombinant}. Although the precise mechanisms of action for many interferon-induced antiviral proteins remain incompletely characterized \cite{kueck2019vesicular}, IFN-$\alpha$ is known to inhibit VSV replication \cite{le2010synergistic}. The B18R protein antagonizes this antiviral effect represented by the term $\lambda C_1 C_2$ in Equation \eqref{eq:C2}, where $\lambda$ quantifies the inhibition rate. This molecular interaction creates the fundamental synergy mechanism: B18R-mediated reduction in IFN-$\alpha$ enhances VSV infection via the synergistic function $\tilde{\beta}_2(C_2)$, establishing a feed-forward loop that amplifies therapeutic efficacy.

\section{Optimization of Virus Injection Times}
\label{sec:optimizingtaus}

Building upon the established synergistic mechanisms between VV and VSV, we investigate the critical question of optimal administration timing to maximize therapeutic efficacy. The temporal coordination of viral delivery represents a crucial clinical consideration, as the sequence and timing of combination therapies can significantly impact treatment outcomes. This analysis investigates the challenge of determining whether simultaneous or sequential administration yields superior tumor control, and if sequential, what time delay optimizes the synergistic interaction.

\subsection{Modeling Delayed Injections}

To model the temporal aspects of virus administration, we incorporate time-delayed injection events into the viral dynamics equations. The instantaneous delivery of viral doses at time points $\tau_1$ (for VV) and $\tau_2$ (for VSV) is represented using Dirac delta functions $\delta(t - \tau_i)$. The modified virus equations become:

\begin{align}
\frac{dV_1}{dt} &= b_1 l_1 T_1 - \beta_1 \frac{T_u V_1}{K_1 + T_u + T_1 + T_2} - \gamma_1 V_1 + V_{1_0} \delta(t - \tau_1), \label{eq:V1_mod} \\
\frac{dV_2}{dt} &= b_2 l_2 T_2 - \tilde{\beta}_2 (C_2) \frac{T_u V_2}{K_2 + T_u + T_1 + T_2} - \gamma_2 V_2 + V_{2_0} \delta(t - \tau_2), \label{eq:V2_mod}
\end{align}
where $V_{1_0}$ and $V_{2_0}$ represent the viral loads administered at times $\tau_1$ and $\tau_2$, respectively. For numerical implementation, the delta functions are approximated by Gaussian distributions, 
\[\delta\left(t-\tau_i\right) \approx \frac{1}{\sigma_i \sqrt{2\pi}} \exp \left(-\frac{(t - \tau_i)^2}{2\sigma_i^2}\right), i=1,2,\]
where \(\sigma_i\) determines the spread of the Gaussian approximation and reflects the temporal uncertainty or duration of the virus injection at time \(\tau_i\). A smaller \(\sigma_i\) corresponds to a more acute, closer-to-instantaneous injection, while a larger \(\sigma_i\) smooths the injection over a wider time interval.

\subsection{Optimal Administration Strategy}

The optimization analysis reveals a clear pattern in the temporal coordination of VV and VSV administration. Figure \ref{fig:combined_timing} demonstrates that therapeutic outcomes are maximized through a specific sequential strategy: immediate administration of VSV ($\tau_2 = 0$) followed by delayed delivery of VV. Panel (a) shows that the optimal timing for VV administration ($\tau_1$) spans a broad window between 1 and 19 days post-initial treatment, providing clinical flexibility in scheduling. Conversely, panel (b) illustrates that delaying VSV administration progressively diminishes treatment efficacy, as evidenced by the monotonic increase in tumor concentration with increasing $\tau_2$.

Figure \ref{fig:varying_tau} provides additional validation of this strategy through comprehensive trajectory analysis. Panel (a) demonstrates that specific positive values of $\tau_1$ consistently yield lower tumor concentrations across the treatment timeline, while panel (b) confirms that immediate VSV administration ($\tau_2 = 0$) remains the most effective approach regardless of VV timing. This robust pattern emerges from the fundamental asymmetry in the viral interaction: VSV requires the interferon-antagonizing environment created by VV, but VV's therapeutic benefit depends on the presence of established VSV infection to maximize its synergistic potential.

\begin{figure}[htbp]
    \centering
    \begin{subfigure}{0.48\textwidth}
        \centering
        \includegraphics[width=\textwidth]{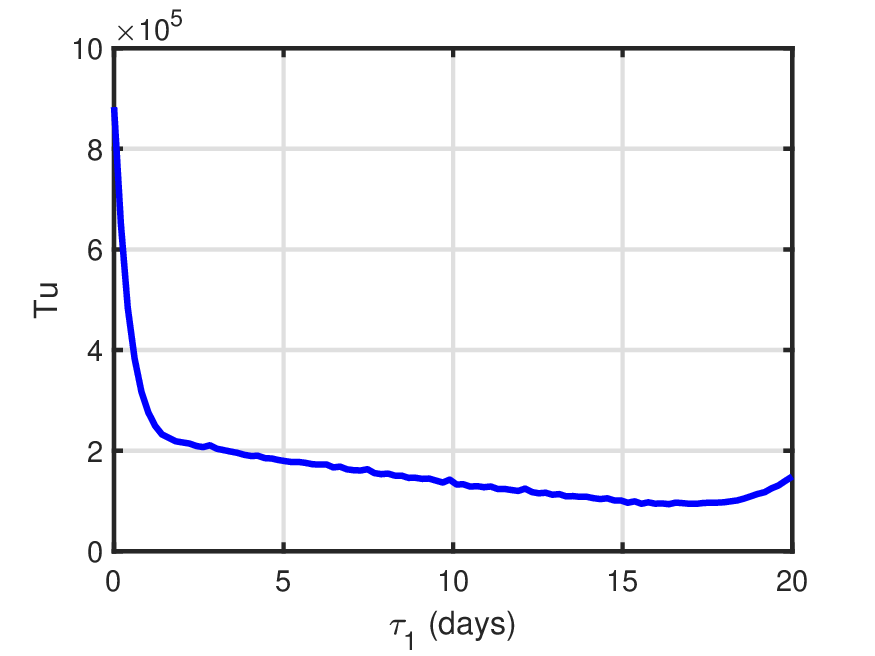}
        \caption{VV timing ($\tau_1$)}
        \label{fig:tau1-var}
    \end{subfigure}
    \hfill
    \begin{subfigure}{0.48\textwidth}
        \centering
        \includegraphics[width=\textwidth]{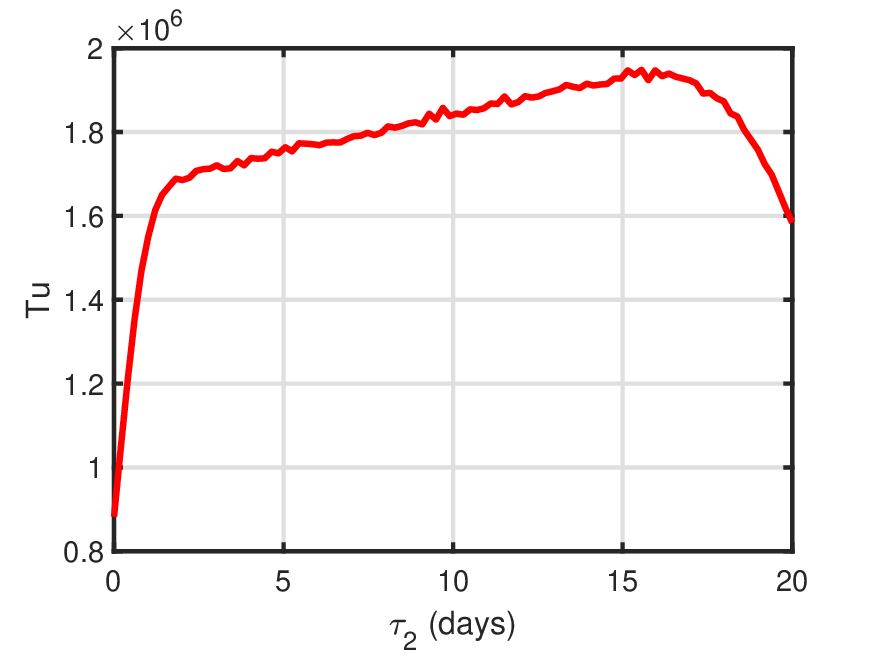}
        \caption{VSV timing ($\tau_2$)}
        \label{fig:tau2-var}
    \end{subfigure}    
    \caption{\small\textbf{Optimization of virus injection timing for VV-VSV combination therapy.} 
    (a) Final tumor burden as a function of VV injection time $\tau_1$ shows optimal therapeutic window between 1-19 days when VSV is administered immediately ($\tau_2 = 0$). 
    (b) Final tumor burden increases monotonically with VSV delay $\tau_2$, indicating immediate VSV administration is optimal. 
    The combined results establish the optimal strategy: immediate VSV followed by delayed VV injection (1-19 days).}
    \label{fig:combined_timing}
\end{figure}

\begin{figure}[htbp]
    \centering
    \begin{subfigure}{0.48\textwidth}
        \centering
        \includegraphics[width=\textwidth]{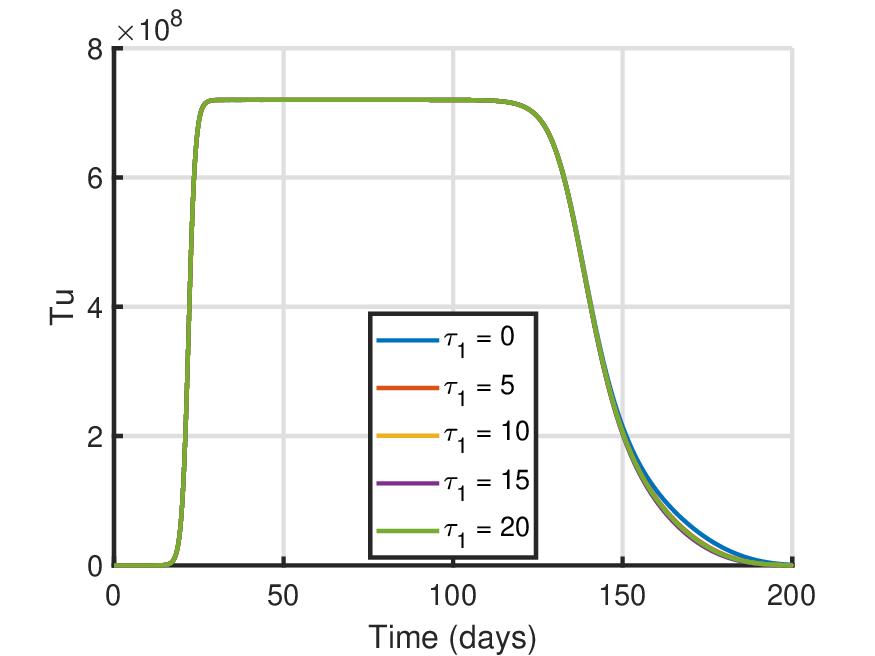}
        \caption{VV timing ($\tau_1$)}
        \label{fig:varying-tau1}
    \end{subfigure}
    \hfill
    \begin{subfigure}{0.48\textwidth}
        \centering
        \includegraphics[width=\textwidth]{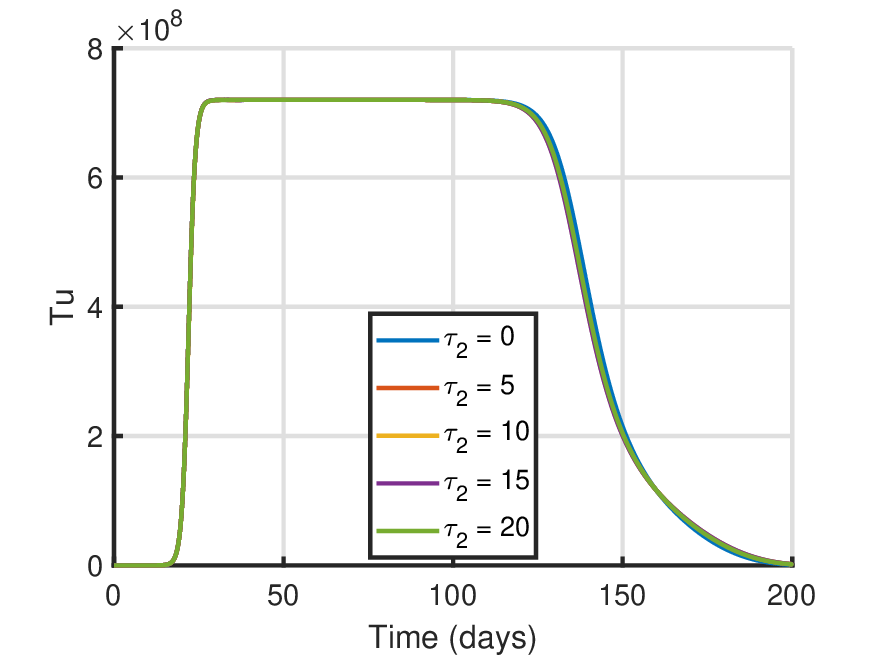}
        \caption{Temporal dynamics}
        \label{fig:varying-tau2}
    \end{subfigure}
    
    \caption{\small\textbf{Optimization of virus injection timing for VV-VSV combination therapy.} 
    (a) Temporal tumor dynamics demonstrate that immediate VSV injection ($\tau_2 = 0$) yields lower tumor concentrations across treatment timeline. 
    (b) Temporal tumor dynamics demonstrate that delayed VV injection ($\tau_1 > 0$) yields lower tumor concentrations across treatment timeline. 
    Collectively, these analyses establish the optimal strategy: immediate VSV injection followed by delayed VV administration within 1-19 days.}
    \label{fig:varying_tau}
\end{figure}

\section{Mathematical Analysis}
\label{sec:math-analysis}

The analysis begins by establishing fundamental well-posedness properties through Theorem~\ref{thm:well-posedness}, which confirms that solutions exist uniquely and remain confined within biologically realistic bounds which are essential prerequisites for meaningful biological interpretation and reliable numerical simulation.

To enhance analytical tractability and identify the fundamental dimensionless groups governing system behavior, we non-dimensionalize the system (complete derivations in Appendix \ref{app:non-dim}). This dimensional reduction reveals the key parameter combinations that dictate therapeutic outcomes, providing deeper insight into the system's intrinsic structure.

Building upon this foundation, we leverage the disparate timescales between molecular signaling and cellular dynamics by applying a quasi-steady state approximation to the B18R and IFN-$\alpha$ subsystems. As rigorously justified in Appendix \ref{app:qssa}, the rapid kinetics of these molecular components relative to tumor-virus interactions validate this reduction, which significantly simplifies the system while preserving its essential synergistic dynamics and enabling more efficient computational analysis.

The analytical foundation of our study centers on a comprehensive equilibrium analysis to characterize the long-term therapeutic landscape. Theorems~\ref{thm:stability12}, \ref{thm:stability22}, and \ref{thm:stability32} systematically establish the stability properties of both monotherapy and combination equilibria, delineating the mathematical conditions that enable viral persistence to sustain tumor control. These theoretical insights are further validated through numerical analysis using experimentally calibrated parameters, with a complete enumeration of biologically feasible steady states and their corresponding stability properties detailed in Table~\ref{tab:steady_states}. 

\subsection{Well-posedness}
\label{subsec:well-posedness}

\begin{theorem}[Well-posedness]\label{thm:well-posedness}
The dimensionless model equations~\eqref{eq:yu}-\eqref{eq:x2} satisfies the following mathematical properties:
\begin{enumerate}[label=(\alph*)]
	\item Existence and Uniqueness: There exists a unique solution to the model in the domain\\ $(y_u, y_1, y_2, x_1, x_2) \in \mathbb{R}^5_+$ on the maximal interval $[0, t_{\text{max}}]$ with $t_{\text{max}} >0$.
	\item Positivity: If $y_u(0) \geq 0$, $y_1(0) \geq 0$, $y_2(0) \geq 0$, $x_1(0) > 0$, and $x_2(0) > 0$, then the solutions satisfy $y_u(t) \geq 0$, $y_1(t) \geq 0$, $y_2(t) 	\geq 0$, $x_1(t) > 0$, and $x_2(t) > 0$ for all $t\in[0, t_{\text{max}}]$.
	\item Boundedness: The domain
	\[
		\Omega = \left\{(y_u, y_1, y_2, x_1, x_2) \in \mathbb{R}^5_+ ; y_u + y_1 + y_2 \leq 1, x_1(t) \leq \frac{b_1}{\delta_1}, x_2(t) \leq \frac{b_2}{\delta_2}\right\}
	\]
is positively invariant, ensuring all solutions remain bounded within biologically meaningful limits.
\end{enumerate}
\end{theorem}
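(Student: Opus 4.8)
I would prove the three assertions in the order (a) $\Rightarrow$ (b) $\Rightarrow$ (c), since each relies on the previous one. Write the dimensionless system as $\dot{z}=F(z)$ with $z=(y_u,y_1,y_2,x_1,x_2)$ and $F:\mathbb{R}^5_+\to\mathbb{R}^5$. The only non-polynomial ingredients of $F$ are (i) the Michaelis--Menten fractions, whose denominators are of the form $\kappa_i+y_u+y_1+y_2$ and hence bounded below by $\kappa_i>0$ on $\mathbb{R}^5_+$, and (ii) the IFN-$\alpha$-modulated VSV infection rate $\tilde\beta_2$, which after the quasi-steady-state substitution of Appendix~\ref{app:qssa} becomes an explicit expression in $(y_1,y_2)$: solving \eqref{eq:C1}--\eqref{eq:C2} at equilibrium gives $C_1$ proportional to $y_1$ and $C_2$ of the form $(\text{positive})\,y_2/\bigl((\text{positive})+(\text{positive})\,y_1\bigr)$, so $\tilde\beta_2$ is a smooth, bounded function of $(y_1,y_2)$ on the nonnegative orthant. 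Consequently $F$ is $C^1$ (in particular locally Lipschitz) on an open neighbourhood of $\mathbb{R}^5_+$, and the Picard--Lindel\"of theorem yields a unique solution on a maximal interval $[0,t_{\max})$ with $t_{\max}>0$, which is assertion (a).

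\paragraph{Positivity.} For (b) I would verify the Nagumo subtangential condition face-by-face on $\mathbb{R}^5_+$. On $\{y_u=0\}$ every term of $\dot y_u$ carries a factor $y_u$, so $\dot y_u=0$; on $\{y_1=0\}$ one has $\dot y_1=\beta_1 y_u x_1/(\kappa_1+y_u+y_2)\ge 0$ whenever the remaining coordinates are $\ge 0$, and symmetrically $\dot y_2\ge 0$ on $\{y_2=0\}$, $\dot x_1=b_1 l_1 y_1\ge 0$ on $\{x_1=0\}$, $\dot x_2=b_2 l_2 y_2\ge 0$ on $\{x_2=0\}$. Hence $\mathbb{R}^5_+$ is positively invariant, giving $y_u,y_1,y_2,x_1,x_2\ge 0$ on $[0,t_{\max})$. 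To upgrade the virus components to strict positivity, I would rewrite $\dot x_i = p_i(t) - g_i(t)\,x_i$ where $p_i\ge 0$ and $g_i$ (clearance rate plus the Michaelis--Menten loss divided by $x_i$) is nonnegative and bounded along the trajectory; then the variation-of-constants formula gives $x_i(t)\ge x_i(0)\exp\!\bigl(-\int_0^t g_i\bigr)>0$ when $x_i(0)>0$.

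\paragraph{Boundedness.} For (c) I would first control $N:=y_u+y_1+y_2$. Adding the first three dimensionless equations, the infection terms cancel in pairs, leaving $\dot N = (\text{const})\,y_u(1-N)-l_1 y_1-l_2 y_2$. Assuming $N(0)\le 1$, a first-exit-time argument applies: if $N$ reached the value $1$, then at that instant $\dot N = -l_1 y_1-l_2 y_2\le 0$ by (b), so $N$ cannot cross $1$; hence $\{N\le 1\}\cap\mathbb{R}^5_+$ is positively invariant and, in particular, $0\le y_i\le 1$. Using $y_1\le 1$, the source term in the dimensionless $x_1$-equation is bounded above by $b_1 y_1\le b_1$, so $\dot x_1\le b_1-\delta_1 x_1$; thus if $x_1$ reached $b_1/\delta_1$ then $\dot x_1\le 0$, and the same first-exit argument yields $x_1\le b_1/\delta_1$ whenever $x_1(0)\le b_1/\delta_1$; identically $x_2\le b_2/\delta_2$. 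Therefore $\Omega$ is positively invariant. Finally, since $\Omega$ is compact and the solution issuing from a point of $\Omega$ stays in $\Omega$, it cannot blow up in finite time, so the standard continuation criterion forces $t_{\max}=\infty$ and (a) sharpens to a global solution on $\Omega$.

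\paragraph{Main obstacle.} The routine part is the invariance bookkeeping; the one point needing genuine care is the regularity of $\tilde\beta_2$ after the QSSA reduction: I must confirm that the algebraic equations for the fast variables $C_1,C_2$ have a unique nonnegative solution whose denominator stays bounded away from zero for \emph{all} $(y_1,y_2)\in\mathbb{R}^2_+$, so that $F$ is Lipschitz on the whole orthant and no hidden finite-time singularity is introduced by the reduction. A secondary, purely logical, subtlety is that the estimates in (c) for $x_1,x_2$ use the a priori bound $y_i\le 1$, which is itself only available \emph{after} positivity and the invariance of $\{N\le1\}$ have been established; hence these invariant sets must be obtained in the stated sequence, not simultaneously.
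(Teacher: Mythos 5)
Your proposal is correct and follows essentially the same route as the paper's proof: Picard--Lindel\"of for existence and uniqueness, a quasi-positivity (Nagumo) check on each coordinate face for positivity, and boundedness via summing the three cell equations followed by a first-exit-time argument for $y_u+y_1+y_2\le 1$ and the differential inequalities $\dot x_i\le b_i-\delta_i x_i$. Your version is in places slightly more careful than the paper's --- you actually establish the \emph{strict} positivity of $x_1,x_2$ claimed in part (b) via variation of constants, you prove genuine positive invariance of the bounds $x_i\le b_i/\delta_i$ rather than only $\limsup_{t\to\infty}x_i\le b_i/\delta_i$, and you note that $\theta_2+y_1\ge\theta_2>0$ keeps the reduced infection rate smooth on the whole orthant --- but these are refinements of the same argument, not a different one.
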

The proof is provided in Appendix \ref{app:proofs-well-posedness}.

\paragraph{Biological Interpretation}
Existence and uniqueness guarantee deterministic tumor-virus dynamics, producing consistent predictions from given initial conditions. Positivity maintains non-negative population densities, preventing biologically impossible states. Most crucially, boundedness captures essential physical constraints: tumor growth remains limited by spatial constraints and resource availability, while viral populations are regulated by production rates and clearance mechanisms. 

\subsection{Qualitative Analysis of Equilibrium States} 
\label{subsec:Qualitative_analysis}

We now investigate the long term behavior of the model solutions for the full model \eqref{eq:yu}-\eqref{eq:x2}.

\subsubsection{VV monotherapy dynamics}

\begin{theorem}  \label{thm:stability12}
Without VSV, there are no VSV-infected tumor cells ($y_2=0$) and no IFN-$\alpha$ is produced ($z_2=0$). The VV-only model has three equilibrium points:
\begin{enumerate}
    \item $X^1_0 = (y_u, y_1, x_1) = (0,0,0)$ which is unstable,
    \item $X^1_1 = (y_u, y_1, x_1) = (1,0,0)$ which is stable if $a_1(\delta_1+\beta_1+\delta_1k_1) > b_1\beta_1$,
    \item A co-existence equilibrium point $X^1_c = (y_u^*, y_1^*, x_1^*)$ where $y_u^*, y_1^*, x_1^* > 0$, which exists when $a_1(\delta_1+\beta_1+\delta_1\kappa_1) < b_1\beta_1$ and is locally asymptotically stable when it exists.
\end{enumerate}
Detailed proof is in Appendix \ref{app:VV_only_model}. 
\end{theorem}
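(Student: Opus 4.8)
The plan is to work throughout on the invariant subspace in which no VSV is present, so that (by the uniqueness part of Theorem~\ref{thm:well-posedness}) $y_2$, $x_2$ and the interferon variable $z_2$ stay identically zero and the system~\eqref{eq:yu}--\eqref{eq:x2} collapses to a three-dimensional model in $(y_u,y_1,x_1)$ of the standard oncolytic form: logistic tumour growth, a saturating (Michaelis--Menten) infection term, linear lysis of infected cells, and burst-size production minus clearance of free virus. First I would locate all equilibria by setting the right-hand sides to zero. If $y_1=0$, the $\dot y_1$-balance forces either $y_u=0$ or $x_1=0$; feeding this back into the $\dot y_u$- and $\dot x_1$-balances leaves exactly $X^1_0=(0,0,0)$ and $X^1_1=(1,0,0)$. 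If $y_1>0$, then necessarily $y_u>0$ and $x_1>0$ (otherwise $\dot y_1=-\delta_1 y_1<0$), and the relations $\dot y_1=0$, $\dot x_1=0$, and $\dot y_u+\dot y_1=0$ let me express $x_1$ and $y_1$ affinely in terms of $y_u$, reducing the whole equilibrium problem to a single scalar equation $h(y_u)=0$ on $(0,1)$. A monotonicity plus intermediate-value argument then shows this equation has a unique admissible root $y_u^*\in(0,1)$, with all three components positive, precisely when $a_1(\delta_1+\beta_1+\delta_1 k_1)<b_1\beta_1$; the boundary case $a_1(\delta_1+\beta_1+\delta_1 k_1)=b_1\beta_1$ is exactly where $X^1_c$ merges with $X^1_1$, i.e. a transcritical bifurcation.

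For the two boundary equilibria I would linearise directly. At $X^1_0$ the infection terms are quadratic near the origin, so the Jacobian decouples and carries the eigenvalue $a_1>0$ along the $y_u$-direction; hence $X^1_0$ is a saddle and unstable. At $X^1_1=(1,0,0)$ the Jacobian is block lower-triangular: the $y_u$-direction contributes one manifestly negative eigenvalue, while the remaining $2\times2$ block in $(y_1,x_1)$ has negative trace and a determinant equal to a positive multiple of $a_1(\delta_1+\beta_1+\delta_1 k_1)-b_1\beta_1$. By the trace--determinant criterion, $X^1_1$ is locally asymptotically stable iff that determinant is positive, i.e. iff $a_1(\delta_1+\beta_1+\delta_1 k_1)>b_1\beta_1$ — the complement of the existence condition for $X^1_c$, as the transcritical picture requires, so stability is handed off from $X^1_1$ to $X^1_c$ at threshold.

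The remaining, and genuinely hardest, step is the local asymptotic stability of $X^1_c$ throughout its existence region. Here I would apply the Routh--Hurwitz criterion to the characteristic polynomial $\lambda^3+c_1\lambda^2+c_2\lambda+c_3$ of the $3\times3$ Jacobian at $X^1_c$, using the defining equilibrium identities to rewrite the coefficients. Positivity of $c_1$ (negative trace) and of $c_3$ ($-\det>0$) should follow fairly cleanly from the sign structure of the entries together with those identities; the real obstacle is the cross-inequality $c_1 c_2-c_3>0$, which needs careful bookkeeping with the equilibrium relations and with the signs of the partial derivatives of the saturating infection term. Should that direct computation become unwieldy, a fallback is to exploit the transcritical structure — the coexistence branch inherits stability from $X^1_1$ as the latter loses it at threshold — and then extend stability across the whole existence range by showing no eigenvalue can cross the imaginary axis, e.g. by checking that $c_3$ and the Hurwitz quantity $c_1 c_2-c_3$ retain their sign as $y_u^*$ ranges over $(0,1)$, thereby ruling out both a zero-eigenvalue and a Hopf crossing. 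Either route yields the claim that $X^1_c$ is locally asymptotically stable exactly on the parameter region where it is biologically present, with the full computation deferred to Appendix~\ref{app:VV_only_model}.
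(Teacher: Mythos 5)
Your treatment of the equilibria and of the boundary states $X^1_0$ and $X^1_1$ follows essentially the same route as the paper: direct linearisation, with the characteristic polynomial at $X^1_1$ factoring into $(\lambda+1)$ times a quadratic whose constant term is a positive multiple of $a_1(\delta_1+\beta_1+\delta_1\kappa_1)-b_1\beta_1$, so that stability of $X^1_1$ is equivalent to positivity of that quantity --- exactly your trace--determinant argument on the $2\times 2$ block. One small slip: at the origin the unstable eigenvalue is $+1$ (the dimensionless logistic growth rate, in the $y_u$-direction); the $y_1$-direction contributes $-a_1<0$, not $+a_1$. The conclusion (saddle, unstable) is unaffected.

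The genuine gap is in part (3). You correctly identify that local asymptotic stability of $X^1_c$ requires the Routh--Hurwitz conditions for the $3\times 3$ Jacobian, and you correctly single out $c_1c_2-c_3>0$ as the crux --- but you do not carry out that computation, and your fallback (inheriting stability across the transcritical point and ruling out subsequent imaginary-axis crossings) is likewise only sketched. As written, the hardest claim of the theorem is asserted rather than proved. For context, the paper does not close this gap either: its appendix argues only that since $X^1_0$ is always unstable and $X^1_1$ is unstable when $a_1(\delta_1+\beta_1+\delta_1\kappa_1)<b_1\beta_1$, trajectories ``must approach'' $X^1_c$, and infers stability from that. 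Such an exclusion argument does not establish local asymptotic stability (it does not, for instance, exclude a periodic orbit surrounding $X^1_c$), so your Routh--Hurwitz route is in fact the more rigorous plan --- but to count as a proof you would need to actually verify $c_1>0$, $c_3>0$ and $c_1c_2>c_3$ using the equilibrium identities (note that combining $\dot y_1=0$ with $\dot x_1=0$ gives the clean relation $\delta_1 x_1^*=(b_1-a_1)y_1^*$, which substantially simplifies that bookkeeping).
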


\paragraph{Biological Interpretation}
The mathematical analysis reveals why VV monotherapy shows limited efficacy: complete tumor eradication is unstable, while partial control through viral persistence represents the most likely outcome. The stability condition $a_1(\delta_1+\beta_1+\delta_1\kappa_1)>b_1\beta_1$ shows treatment failure occurs when viral burst size ($b_1$) is insufficient relative to viral decay ($\delta_1$), infection efficiency ($\beta_1$), and saturation effects ($\kappa_1$).

\subsubsection{VSV monotherapy dynamics} \label{subsec:VSV_only_model}

\begin{theorem} \label{thm:stability22}
In the absence of Vaccinia Virus (VV), there are no VV-infected tumor cells ($y_1 = 0$) and no B18R protein is produced ($z_1 = 0$). The VSV-only model has three biologically relevant equilibrium points:
\begin{enumerate}
    \item The tumor-free state $X^2_0 = (y_u, y_2, x_2) = (0, 0, 0)$, which is unstable.
    \item The tumor-dominant state $X^2_1 = (y_u, y_2, x_2) = (1, 0, 0)$, which is stable when $a_2(\delta_2 + \beta_2 + \delta_2\kappa_2) > b_2\beta_2$.
    \item  A co-existence equilibrium $X^2_c = (y_u^*, y_2^*, x_2^*)$ where $y_u^*, y_2^*, x_2^* > 0$, which exists when $a_2(\delta_2+\beta_2+\delta_2\kappa_2) < b_2\beta_2$ and is locally asymptotically stable when it exists.
\end{enumerate}
\end{theorem}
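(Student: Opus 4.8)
The plan is to follow the template of the proof of Theorem~\ref{thm:stability12}, exploiting that in the absence of VV no $V_1$ is ever present, so the set $\{x_1 = 0\}$ (hence $y_1 = 0$ and, by the quasi-steady state approximation, $z_1 = 0$) is invariant for the reduced system~\eqref{eq:yu}--\eqref{eq:x2}. On this set the dynamics collapse to a three-dimensional system in $(y_u, y_2, x_2)$ in which the VSV infection coefficient is $\tilde\beta_2$ evaluated at the quasi-steady IFN-$\alpha$ level; since that level is a decreasing function of $y_2$ that vanishes with $y_2$, one has $\tilde\beta_2 = \beta_2$ on $\{y_2 = 0\}$. Positive invariance and boundedness of this reduced system are inherited directly from Theorem~\ref{thm:well-posedness}.

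First I would locate the equilibria by equating the three right-hand sides to zero. The factors $y_u$ and $x_2$ appearing in the growth and infection terms force the usual trichotomy: $y_u = 0$ gives $X^2_0 = (0,0,0)$; $x_2 = y_2 = 0$ with $y_u$ at carrying capacity gives the tumor-dominant state $X^2_1 = (1,0,0)$; and an all-positive solution gives $X^2_c$. For the latter I would use the $x_2$-equation to write $x_2$ as a positive multiple of $y_2$, substitute into the $y_2$-equation to obtain a relation tying $y_u$ to $y_2$, and then use the logistic balance to reduce the problem to a single scalar equation $F(y_u)=0$ on $(0,1)$. Checking that $F$ is monotone and that its boundary values change sign — the sign at the $y_u\to 1$ end being controlled exactly by $b_2\beta_2 - a_2(\delta_2+\beta_2+\delta_2\kappa_2)$ — yields existence and uniqueness of $X^2_c$ precisely under the stated inequality, after which positivity of the remaining coordinates is immediate.

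For local stability I would linearize at each equilibrium. At $X^2_0$ the Jacobian is block-triangular with a $+a_2$ diagonal entry from $\partial_{y_u}\!\left[a_2 y_u(1-y_u-y_2)\right]$ evaluated at the origin, so $X^2_0$ is unstable. At $X^2_1$ the $y_u$-coordinate decouples with eigenvalue $-a_2<0$, while the $(y_2,x_2)$ block has negative trace and determinant proportional to $a_2(\delta_2+\beta_2+\delta_2\kappa_2)-b_2\beta_2$; hence $X^2_1$ is asymptotically stable exactly when this quantity is positive, and it exchanges stability with $X^2_c$ in a transcritical bifurcation at the threshold. At $X^2_c$ the characteristic polynomial is a cubic $\lambda^3+c_1\lambda^2+c_2\lambda+c_3$, and I would verify the Routh--Hurwitz conditions $c_1>0$, $c_3>0$, $c_1 c_2 > c_3$, using the equilibrium identities (in particular the relation fixing $b_2$) to cancel the dominant burst-size contributions; note that the extra negative feedback $\tilde\beta_2'<0$, absent in the VV-only case, only reinforces these inequalities, so the argument is strictly easier here than in Theorem~\ref{thm:stability12}.

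The main obstacle is the third Routh--Hurwitz inequality $c_1 c_2 > c_3$ at $X^2_c$: the expressions do not factor cleanly, so success depends on substituting the equilibrium relations carefully and on keeping track of the sign contributed by the IFN-$\alpha$ feedback term. If a fully symbolic verification across the whole parameter range becomes intractable, the fallback — as for the companion Theorem~\ref{thm:stability12} — is to prove local asymptotic stability analytically in a neighbourhood of the bifurcation point via exchange of stability, and to corroborate the ``stable whenever it exists'' claim numerically over the calibrated parameter set, consistent with the enumeration in Table~\ref{tab:steady_states}.
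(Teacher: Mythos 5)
Your treatment of $X^2_0$ and $X^2_1$ is essentially the paper's: direct linearization, with the $(y_2,x_2)$ block at $X^2_1$ having negative trace and determinant proportional to $a_2(\delta_2+\beta_2+\delta_2\kappa_2)-b_2\beta_2$, which is exactly the constant term of the quadratic factor the paper exhibits. Two small misattributions: in the dimensionless system the logistic term is $y_u(1-y_u-y_1-y_2)$ with no $a_2$ prefactor, so the unstable eigenvalue at the origin is $+1$ (not $+a_2$), and the decoupled eigenvalue at $X^2_1$ is $-1$ (not $-a_2$); neither affects your conclusions. Where you genuinely diverge is the coexistence state $X^2_c$. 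The paper's own argument there is soft: it asserts that since $X^2_0$ is always unstable and $X^2_1$ is unstable under the reverse inequality, trajectories ``must approach'' $X^2_c$, which tacitly assumes convergence to an equilibrium (no limit cycles) and is not a local stability proof. Your plan — reduce existence to a scalar equation $F(y_u)=0$ with a sign change controlled by $b_2\beta_2 - a_2(\delta_2+\beta_2+\delta_2\kappa_2)$, then verify Routh--Hurwitz at $X^2_c$, falling back on exchange of stability at the transcritical bifurcation plus the numerics of Table~\ref{tab:steady_states} — is more rigorous in intent and, if carried through, would actually close a gap the paper leaves open. The one unearned step in your sketch is the claim that the IFN-$\alpha$ feedback ($\tilde\beta_2'<0$ in $y_2$) ``only reinforces'' the Routh--Hurwitz inequalities: it adds terms of a definite sign to some Jacobian entries, but its net effect on $c_1c_2-c_3$ is not obvious and would need to be checked rather than asserted. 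As written, your fallback (local exchange of stability plus numerical corroboration) delivers no less than the paper's proof does.
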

Proof is in Appendix \ref{app:VSV_only_model}. 

\paragraph{Biological Interpretation}
The stability analysis reveals that VSV monotherapy has limitations: complete tumor eradication ($X^2_0$) is unstable, while tumor dominance ($X^2_1$) becomes stable when viral replication capacity is insufficient. The condition $a_2(\delta_2 + \beta_2 + \delta_2\kappa_2) > b_2\beta_2$ indicates treatment failure occurs when the viral burst size ($b_2$) is too small relative to viral decay rate ($\delta_2$), infection efficiency ($\beta_2$), and saturation effects ($\kappa_2$). 

\subsubsection{Combined VV-VSV therapy dynamics}
\label{subsec:VV_VSV_model}

\begin{theorem} \label{thm:stability32}
The complete VV-VSV model described by equations \eqref{eq:yu}-\eqref{eq:x2} exhibits at least five fundamental equilibrium states:
\begin{enumerate}
    \item Tumor-free state $X^3_0 = (y_u, y_1, y_2, x_1, x_2) = (0,0,0,0,0)$, which is mathematically unstable
    \item Tumor-dominant state $X^3_1 = (1,0,0,0,0)$, stable when both $a_1(\delta_1+\beta_1+\delta_1\kappa_1) > b_1\beta_1$ and $a_2(\delta_2+\beta_2+\delta_2\kappa_2) > b_2\beta_2$
    \item VV-only coexistence state $X^1_c = (y_u^*, y_1^*, 0, x_1^*, 0)$, stable when $a_1(\delta_1+\beta_1+\delta_1\kappa_1) < b_1\beta_1$
    \item VSV-only coexistence state $X^2_c = (y_u^*, 0, y_2^*, 0, x_2^*)$, stable when $a_2(\delta_2+\beta_2+\delta_2\kappa_2) < b_2\beta_2$
    \item Full co-existence equilibria $X^3_c$ representing both viruses persisting with partial tumor control, which exist when both single-virus coexistence states are unstable and are determined by solving the full system of algebraic equations obtained by setting the left-hand sides of \eqref{eq:yu}-\eqref{eq:x2} to zero.
\end{enumerate}
\end{theorem}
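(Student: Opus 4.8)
The plan is to separate the five equilibria into the four lying on the boundary of $\mathbb{R}^5_+$ and the single interior one. The two coordinate subspaces $\{y_2=0,\,x_2=0\}$ and $\{y_1=0,\,x_1=0\}$ are invariant, and the flow restricted to each of them is exactly the reduced dynamics studied in Theorems~\ref{thm:stability12} and~\ref{thm:stability22}; hence $X^3_0$, $X^3_1$, $X^1_c$ and $X^2_c$ carry over verbatim, and the task reduces to (i) re-examining their stability inside the full five-dimensional system and (ii) establishing existence of the interior equilibrium $X^3_c$. The key structural fact for (i) is that every term of the $\dot y_i$- and $\dot x_i$-equations carries a factor $y_i$ or $x_i$, so the Jacobian at any equilibrium with $y_j=x_j=0$ is block triangular: its spectrum is the union of the reduced-model spectrum and that of a $2\times 2$ transversal block governing invasion by the absent virus. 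For (ii) I would not pursue a closed form but instead deduce existence from a uniform-persistence argument.

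For the boundary equilibria: at $X^3_0=(0,0,0,0,0)$ the $y_u$-equation linearises to $\dot y_u\approx a_0 y_u$, where $a_0>0$ is the dimensionless tumour growth rate, giving a positive eigenvalue, so $X^3_0$ is unstable. At $X^3_1=(1,0,0,0,0)$ the Jacobian is block triangular with scalar $(y_u)$-block $-a_0<0$, and the rest splits into a $(y_1,x_1)$- and a $(y_2,x_2)$-block; each such block has automatically negative trace, and demanding a positive determinant (the Michaelis--Menten denominator being $\kappa_i+1$ at $y_u=1$) reproduces precisely the stated conditions $a_i(\delta_i+\beta_i+\delta_i\kappa_i)>b_i\beta_i$, $i=1,2$. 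At $X^1_c=(y_u^*,y_1^*,0,x_1^*,0)$ the block-triangular Jacobian returns the VV-only Jacobian --- which is Hurwitz exactly when $X^1_c$ exists, i.e.\ when $a_1(\delta_1+\beta_1+\delta_1\kappa_1)<b_1\beta_1$, by Theorem~\ref{thm:stability12} --- together with a $2\times2$ transversal block in $(y_2,x_2)$ whose determinant is positive exactly when VSV fails to invade $X^1_c$; the mirror-image argument, via Theorem~\ref{thm:stability22}, handles $X^2_c$. Thus, in the full model, $X^1_c$ (resp.\ $X^2_c$) is locally asymptotically stable iff it exists \emph{and} the corresponding non-invasion inequality holds --- the latter being the extra content beyond the single-virus results.

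To obtain $X^3_c$ I would first reduce the equilibrium equations, using $\dot x_i=0$ and $\dot y_i=0$ to express each $x_i$ as a positive multiple of $y_i$ and to obtain, for each virus, a balance relating the total load $Y=y_u+y_1+y_2$ to $y_u$ (the one for VSV also involving $\tilde\beta_2$, hence $y_1$ and $y_2$ through the quasi-steady $z_1,z_2$), and then closing with $\dot y_u=0$, i.e.\ $a_0 y_u(1-Y)=a_1y_1+a_2y_2$; this collapses $X^3_c$ to a small algebraic system in $(y_u,y_1,y_2)$. Rather than solving it explicitly I would argue existence via persistence: Theorem~\ref{thm:well-posedness} supplies a compact, positively invariant, absorbing set, so the semiflow is dissipative; the boundary of $\mathbb{R}^5_+$ is the union of the two invariant subspaces, whose only equilibria are $X^3_0$ (always unstable), $X^3_1$ (unstable whenever the monotherapy-coexistence inequalities hold, since its stability conditions then fail) and $X^1_c$, $X^2_c$ (unstable by hypothesis). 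When $X^1_c$ and $X^2_c$ are both unstable --- that is, each virus can invade the other's monotherapy equilibrium --- the boundary equilibria form an acyclic covering of the boundary invariant set, each a uniform repeller into the interior, so a standard uniform-persistence theorem (the Butler--McGehee lemma together with the Freedman--Waltman/Hofbauer acyclicity theory) forces the flow to be uniformly persistent and hence to possess at least one interior equilibrium, which is $X^3_c$; the phrase ``at least five'' allows for several such interior equilibria.

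The main obstacle is step (ii). Its difficulty is twofold. First, the quasi-steady synergy factor $\tilde\beta_2$ depends on \emph{both} $y_1$ and $y_2$, so neither the transversal block at $X^1_c$ nor the reduced algebraic system for $X^3_c$ decouples; no closed form is available, and the invasion thresholds can be written only implicitly in terms of $(y_u^*,y_1^*)$. Second, and more seriously, verifying the hypotheses of the persistence theorem demands a careful analysis of the internal dynamics on the lower-dimensional invariant boundary faces --- in particular, ruling out boundary periodic orbits and heteroclinic cycles among $X^3_0$, $X^3_1$, $X^1_c$ and $X^2_c$ --- and this is where the bulk of the real work lies. It is also why the theorem asserts only existence of $X^3_c$, conditional on both single-virus coexistence states being unstable, rather than an explicit formula or a stability classification.
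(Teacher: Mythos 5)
Your handling of the boundary equilibria follows essentially the same skeleton as the paper's proof: the paper likewise reads off the eigenvalues at $X^3_0$ ($-\delta_1,-\delta_2,-a_1,-a_2,1$, hence unstable) and factors the characteristic polynomial at $X^3_1$ into $(\lambda+1)$ times one quadratic per virus, which is exactly your negative-trace/positive-determinant computation on the two $2\times 2$ blocks. Where you genuinely diverge is at $X^1_c$, $X^2_c$ and at the interior state. For the single-virus coexistence states the paper simply cites the reduced-subsystem results of Theorems~\ref{thm:stability12} and~\ref{thm:stability22} (``analysis of the reduced VV-only subsystem shows it is stable when $a_1(\delta_1+\beta_1+\delta_1\kappa_1)<b_1\beta_1$'') and never examines the transversal $(y_2,x_2)$, respectively $(y_1,x_1)$, directions; your block-triangular decomposition makes explicit that local stability in the full five-dimensional system needs, in addition, the non-invasion condition coming from the transversal block, an inequality evaluated at $(y_u^*,Y^*)$ rather than at $y_u=1$, which the theorem statement and the paper's proof suppress. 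That observation is in fact what is required to make coherent the paper's own later step, where it asserts that $X^1_c$ and $X^2_c$ are \emph{unstable} (in the full model) under the very inequalities that items 3 and 4 declare them stable: the instability invoked there can only come from your transversal invasion block. For the existence of $X^3_c$ the paper argues informally that, since all boundary equilibria are unstable, trajectories ``cannot converge to any boundary equilibrium and must instead converge to a full coexistence equilibrium''; this is not a proof, because bounded trajectories need not converge to an equilibrium at all, whereas your route via dissipativity (Theorem~\ref{thm:well-posedness}), Butler--McGehee/acyclicity uniform persistence, and the standard result that a uniformly persistent dissipative semiflow admits an interior equilibrium is the rigorous version of the same idea. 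The cost, which you state honestly, is verifying acyclicity and excluding boundary periodic orbits or heteroclinic cycles, a step you leave open; but that residual gap is no larger than the gap in the paper's own argument, and your approach buys both a correct account of what stability of $X^1_c$, $X^2_c$ actually requires and a defensible existence claim for $X^3_c$.
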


Proof is in Appendix \ref{app:VV-VSV-Combined-Model}. 

\paragraph{Biological Interpretation}
The mathematical instability of the tumor-free state ($X^3_0$) reflects the biological challenge of achieving complete tumor eradication in complex tumor ecosystems, where residual microenvironments often permit tumor persistence. Similarly, the tumor-dominant state ($X^3_1$) only stabilizes when both viruses are individually ineffective, a condition avoided in our synergistic regime.

\subsection{Stability Analysis with Parameter Values} 
\label{subsec:numerical-validation}

Stability analysis using experimentally calibrated parameters (Equation \eqref{eq:non-dimensionalvalues}) provides compelling validation of our theoretical framework, revealing a single stable equilibrium among six biologically feasible steady states (Table~\ref{tab:steady_states}). This unique stable configuration exhibits distinctive characteristics, near-complete tumor suppression ($y_u = 8\times 10^{-6}$), sustained VSV persistence ($x_2 = 4.205$), and complete VV extinction ($x_1 = 0$), that align with experimental observations and offer  insights into the synergistic mechanism. The mathematical framework reveals a sophisticated biological narrative where VV plays a transient but indispensable role as an environmental conditioner, establishing an interferon-suppressed microenvironment that enables VSV proliferation before being naturally outcompeted, consistent with experimental observations of VV creating "infection portals" for subsequent VSV exploitation \cite{le2010synergistic,zhang2014combination}. This minimal tumor burden represents a state of durable remission maintained through persistent VSV surveillance, creating a dynamic equilibrium where continuous viral replication counterbalances potential tumor regrowth. The comprehensive equilibrium analysis reveals four critical patterns that collectively explain therapeutic superiority: (1) viral competitive exclusion validates temporal specialization in viral roles; (2) dramatic efficacy advantage shows near-complete tumor suppression vastly superior to unstable states; (3) mathematical stability assurance provides theoretical guarantee of long-term control; and (4) parameter robustness explains reproducible synergistic effects despite biological variability. 

This equilibrium analysis complements temporal simulations by showing the system's long-term behavior. The equilibrium analysis reveals convergence to a stable state characterized by minimal tumor burden maintained through persistent VSV infection. This state emerges through strategic helper virus elimination following interferon suppression, establishing a self-sustaining antiviral state that prevents tumor escape. Together, this integrated mathematical framework explains how combination therapy achieves both rapid initial clearance and durable long-term suppression, providing comprehensive understanding of the synergistic "ping-pong" mechanism where complementary viral strengths overcome individual limitations through precisely coordinated temporal dynamics.

\begin{table}[htbp]
\centering
\caption{\small\bfseries Biologically feasible steady states of the VV-VSV model \eqref{eq:yu}-\eqref{eq:x2} with calibrated parameter values from Equation \eqref{eq:non-dimensionalvalues}. The characteristic polynomials and stability properties reveal that only one equilibrium achieves mathematical stability under the calibrated parameter regime.}
\setlength{\tabcolsep}{4.5pt}  % Slightly reduced (default is 6pt)
\begin{tabular}{@{}p{1.7cm}p{1.7cm}p{1.4cm}p{5cm}c@{}}
\toprule
\textbf{Free Viruses} & \textbf{Infected Cells} & \textbf{Tumor} & \textbf{Characteristic Polynomial} & \textbf{Stability} \\
\textbf{$(x_1, x_2)$} & \textbf{$(y_1, y_2)$} & \textbf{$y_u$} & & \\
\midrule
$(0.0, 0.0)$ & $(0.0, 0.0)$ & $0.0$ & 
$(\lambda + 0.052)(\lambda + 0.042)(\lambda + 0.025)(\lambda + 0.001)(\lambda - 1)$ & Unstable \\
\midrule

$(0, 0)$ & $(0, 0)$ & $1.0$ &
$(\lambda + 3.101)(\lambda + 1.116)(\lambda + 1)(\lambda - 1.031)(\lambda - 2.885)$ & Unstable \\
\midrule

$(0.819, 0.0)$ & $(0.0003, 0.0)$ & $1\times10^{-5}$ &
$(\lambda + 0.13)(\lambda + 0.055)(\lambda - 0.001)(\lambda^2 - 0.063\lambda + 0.008)$ & Unstable \\
\midrule

$(0, 4.205)$ & $(0, 0.0001)$ & $8\times10^{-6}$ &
$(\lambda + 0.094)(\lambda + 0.058)(\lambda + 0.008)(\lambda^2 - 0.038\lambda + 0.0007)$ & \textbf{Stable} \\
\midrule

$(0, 7.86\times10^{4})$ & $(0, 2.73)$ & $0.172$ &
$(\lambda + 2.455)(\lambda + 0.416)(\lambda + 0.0005)(\lambda - 0.14)(\lambda - 2.295)$ & Unstable \\
\midrule

$(0, 1.37\times10^{5})$ & $(0, 4.771)$ & $0.538$ &
$(\lambda + 2.949)(\lambda + 0.585)(\lambda + 0.056)(\lambda + 0.001)(\lambda - 2.747)$ & Unstable \\
\bottomrule      
\end{tabular}
\label{tab:steady_states}
\end{table}

\section{Numerical Simulations}
\label{sec:Simulations}

\subsection{Model Parameterization and Baseline Dynamics}

The mathematical model is implemented using non-dimensional parameter values derived from experimental data and calibration procedures. The parameter values in Equation~\eqref{eq:non-dimensionalvalues} are obtained from analysis of biological data in Tables~\ref{tab:parametervalues1} and \ref{tab:estimatedparams}, representing the viral kinetics and interaction dynamics of the VV-VSV system.

\begin{equation}
\begin{cases}\label{eq:non-dimensionalvalues}
    & \beta_1 = 0.017,~ \beta_2 = 0.035,~ \kappa_1 = 0.1389,~ \kappa_2 = 0.1389,~ a_1 = 0.0417,  \\
    & a_2 = 0.0521,~ b_1 = 60,~ b_2 = 37.5, ~ \delta_1 = 0.0250, ~  \delta_2 = 0.0013,\\    
    & \delta_3 = 0.1250,  ~ \theta_1 = 2\times 10^{-8},~\theta_2 = 5\times10^{-11}, ~\delta_4 = 0.0013.  
\end{cases}
\end{equation}
These parameters capture biological features including viral infection rates ($\beta_i$), saturation effects ($\kappa_i$), infected cell death rates ($a_i$), viral burst sizes ($b_i$), clearance rates ($\delta_i$), and synergistic parameters ($\theta_i$) governing B18R-IFN-$\alpha$ interactions.

\begin{figure}[htbp]
    \centering
    \begin{subfigure}[b]{0.48\textwidth}
        \centering
        \includegraphics[width=\textwidth]{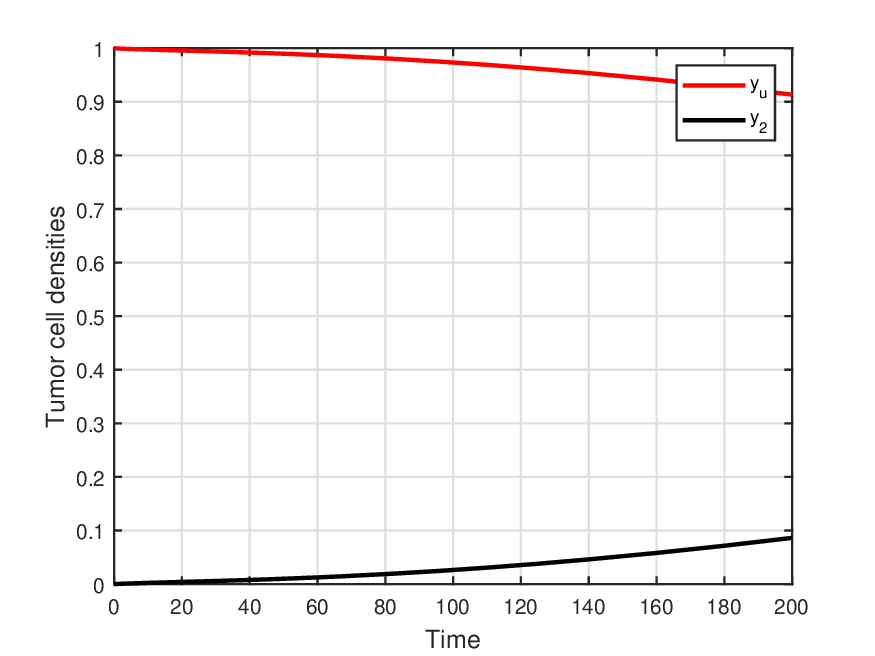}
        \caption{VSV monotherapy}
        \label{fig:VSV-monotherapy}
    \end{subfigure}
    \hfill
    \begin{subfigure}[b]{0.48\textwidth}
        \centering
        \includegraphics[width=\textwidth]{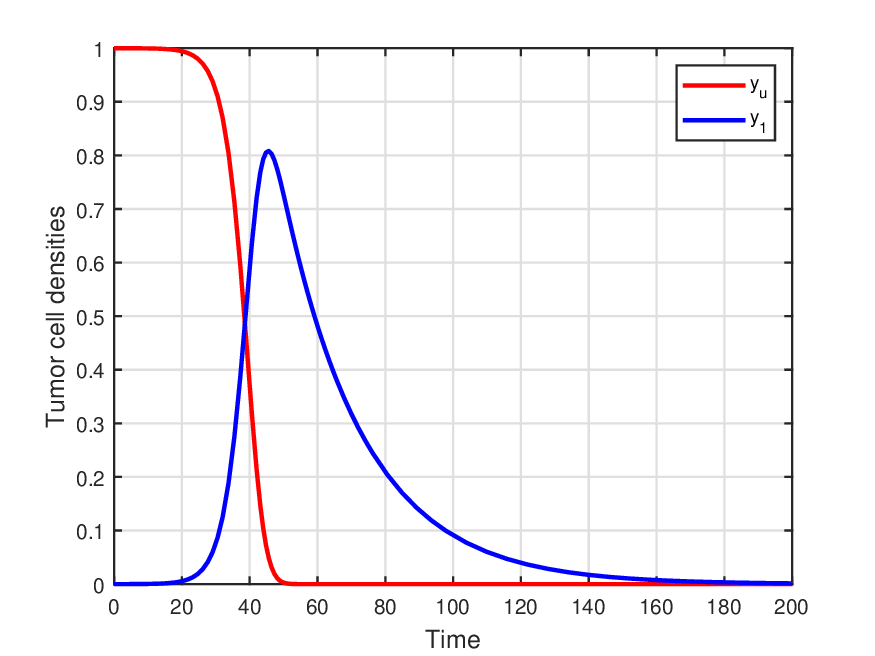}
        \caption{VV monotherapy}
        \label{fig:VV-monotherapy}
    \end{subfigure}
    
    \caption{\textbf{Monotherapy limitations in tumor clearance.} 
    (a) VSV alone fails to achieve complete tumor eradication, with residual tumor burden persisting beyond 250 days due to IFN-$\alpha$-mediated antiviral responses. 
    (b) VV monotherapy achieves complete tumor clearance in approximately 56 days. 
    These results show the complementary strengths of each virus and the rationale for combination approach.}
    \label{fig:VSVVV}
\end{figure}

Monotherapy approaches show limitations as demonstrated in Figure~\ref{fig:VSVVV}. VSV monotherapy (Figure~\ref{fig:VSV-monotherapy}) exhibits incomplete tumor clearance due to sensitivity to IFN-$\alpha$-mediated antiviral responses, while VV monotherapy (Figure~\ref{fig:VV-monotherapy}) achieves complete clearance in 56 days but shows potential for improvement through combination.

\subsection{Synergistic Combination and Temporal Dynamics}

\begin{figure}[htbp]
    \centering
    \begin{subfigure}[b]{0.48\textwidth}
        \centering
        \includegraphics[width=\textwidth]{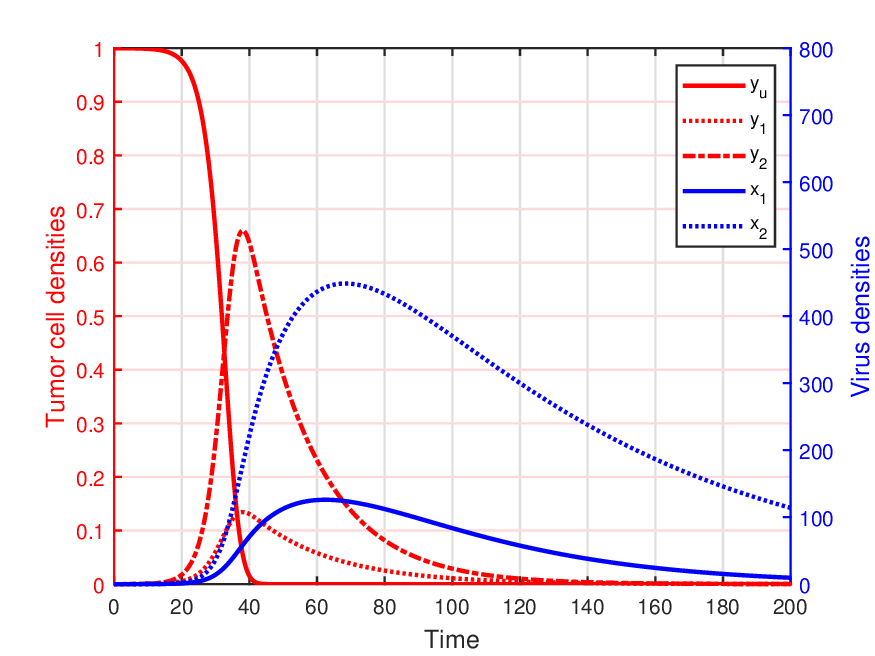}
        \caption{Full model dynamics}
        \label{fig:fullmodel}
    \end{subfigure}
    \hfill
    \begin{subfigure}[b]{0.48\textwidth}
        \centering
        \includegraphics[width=\textwidth]{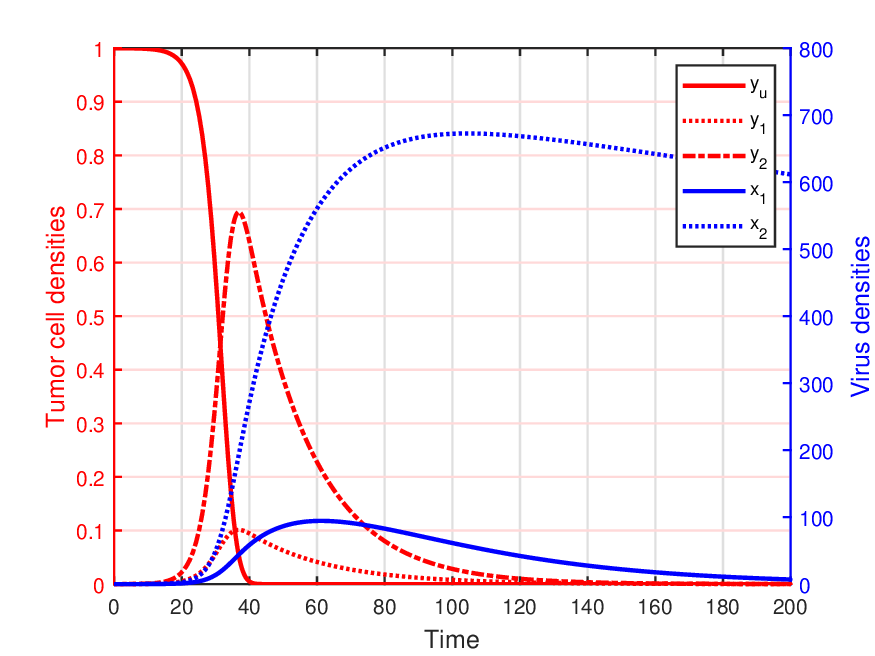}
        \caption{Quasi steady-state model dynamics}
        \label{fig:Quasimodel}
    \end{subfigure}
    
    \caption{\textbf{Synergistic enhancement of VSV efficacy through VV-mediated interferon suppression.} 
    Both models demonstrate the "ping-pong" mechanism where VV-infected cells produce B18R protein, which binds to and neutralizes IFN-$\alpha$, creating a permissive environment for enhanced VSV replication. 
    (a) Full model simulations show molecular interactions and population dynamics. 
    (b) Quasi steady-state model captures synergistic dynamics with reduced computational complexity. 
    The increased population of VSV-infected tumor cells ($T_2$) in both models shows the biological mechanism of synergy.}
    \label{fig:combination-dynamics}
\end{figure}

The combination of VV and VSV produces enhanced therapeutic outcomes, as shown in Figure~\ref{fig:combination-dynamics}. Both models demonstrate the "ping-pong" mechanism: VV-infected cells produce B18R protein that antagonizes IFN-$\alpha$, removing barriers to VSV replication and enabling increases in VSV-infected tumor cell populations. This synergistic interaction accelerates complete tumor clearance to approximately 50 days, representing an 11\% improvement over VV monotherapy.

\begin{figure}[htbp]
    \centering
    \begin{subfigure}[b]{0.48\textwidth}
        \centering
        \includegraphics[width=\textwidth]{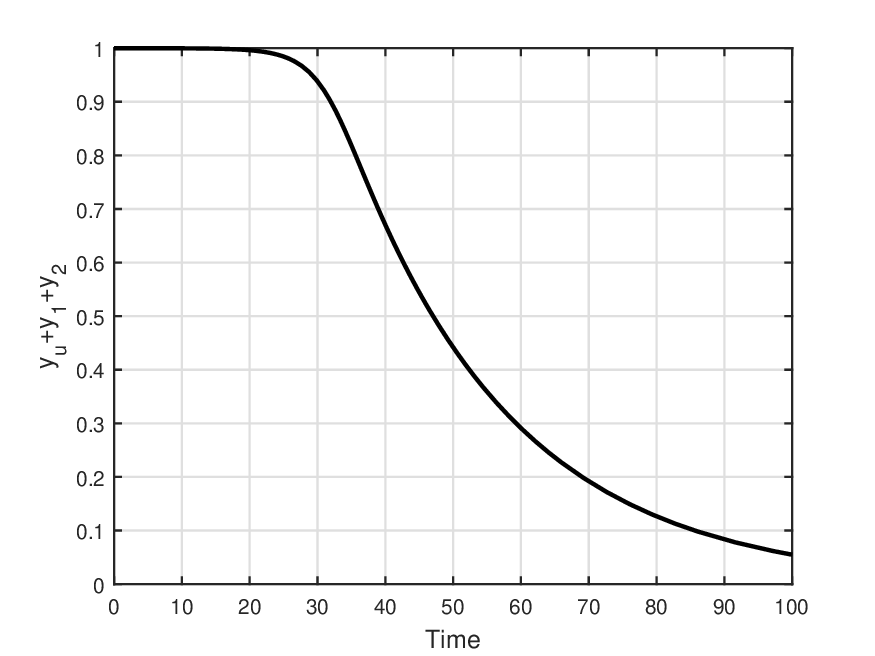}
        \caption{Full model tumor burden}
        \label{fig:tumorburden-full}
    \end{subfigure}
    \hfill
    \begin{subfigure}[b]{0.48\textwidth}
        \centering
        \includegraphics[width=\textwidth]{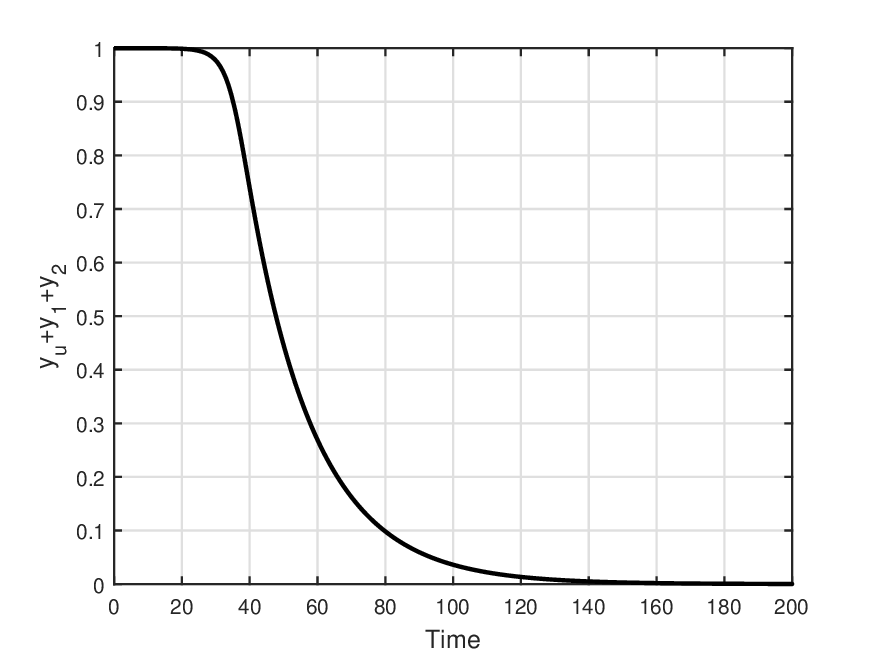}
        \caption{Quasi steady-state tumor burden}
        \label{fig:tumorburden-quasi}
    \end{subfigure}
    
    \caption{\textbf{Temporal dynamics of tumor burden reduction under combination therapy.} 
    Both models show patterns of tumor clearance followed by sustained suppression. 
    (a) Full model simulations show temporal dynamics with complete clearance achieved by day 50. 
    (b) Quasi steady-state model confirms similar therapeutic trajectory with equivalent final tumor burden. 
    The convergence of both models to similar asymptotic behavior shows the robustness of the synergistic mechanism across mathematical representations.}
    \label{fig:tumorburdens}
\end{figure}

Figure~\ref{fig:tumorburdens} shows the therapeutic efficacy through temporal analysis of tumor burden dynamics. Both models exhibit patterns of tumor reduction followed by sustained suppression, achieving clearance by approximately 50 days. The agreement between full and quasi steady-state models validates the mathematical reduction approach and shows that synergistic dynamics are preserved despite simplification of molecular kinetics.

\subsection{System Stability and Parameter Sensitivity}

\begin{figure}[htbp]
    \centering
    \begin{subfigure}[b]{0.48\textwidth}
        \centering
        \includegraphics[width=\textwidth]{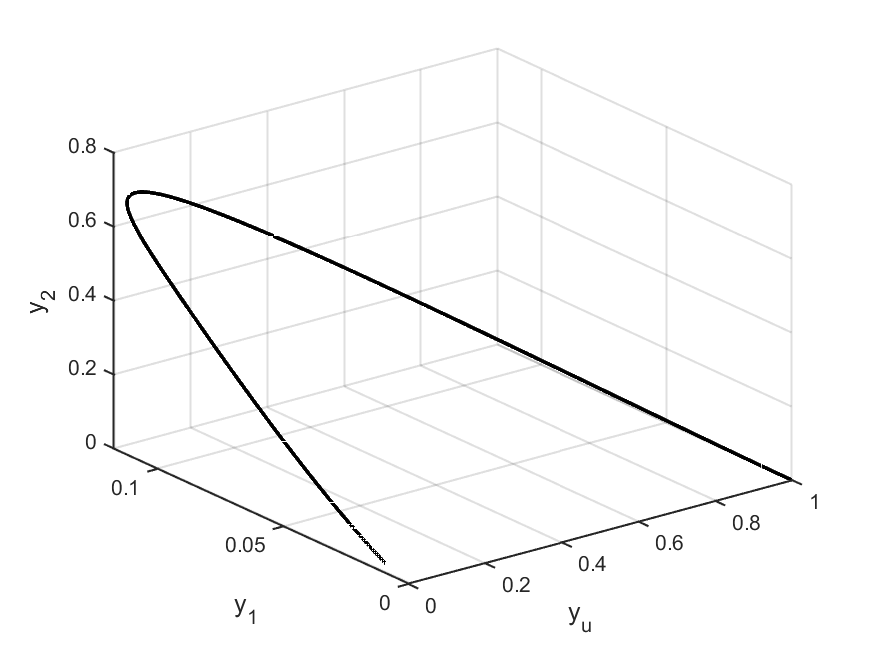}
        \caption{Full model phase portrait}
        \label{fig:phase-full}
    \end{subfigure}
    \hfill
    \begin{subfigure}[b]{0.48\textwidth}
        \centering
        \includegraphics[width=\textwidth]{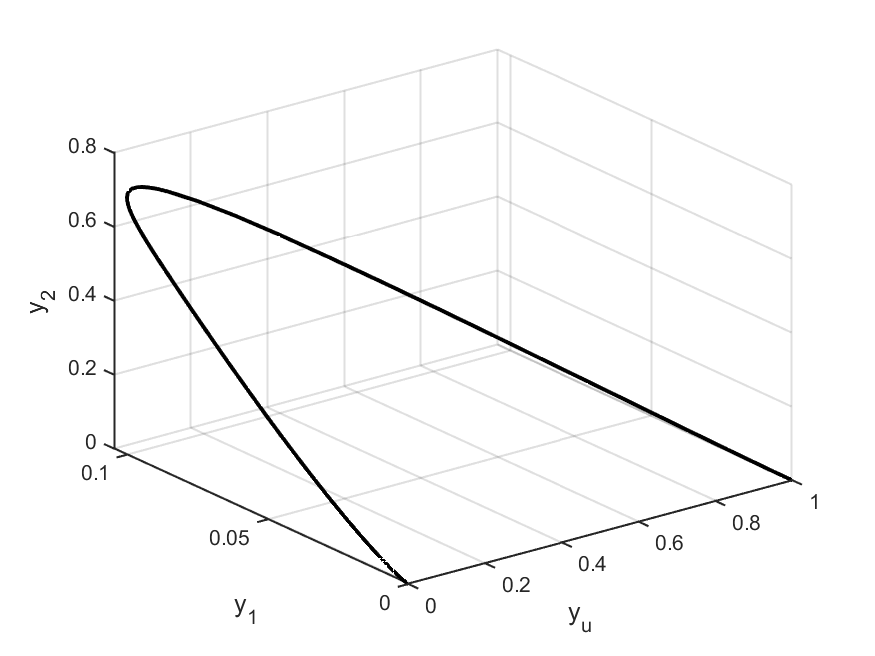}
        \caption{Quasi steady-state phase portrait}
        \label{fig:phase-quasi}
    \end{subfigure}
    
    \caption{\textbf{Phase space analysis of VV-VSV system dynamics.} 
    Both phase portraits show convergence to a stable equilibrium with minimal tumor burden and persistent VSV infection. 
    (a) Full model phase space shows trajectories converging to the therapeutic equilibrium. 
    (b) Quasi steady-state 3D phase portrait confirms stability and basin of attraction for the therapeutic outcome. 
    The convergence patterns across both models provide mathematical validation of treatment efficacy and long-term control.}
    \label{fig:phaseportraits}
\end{figure}

Phase space analysis (Figure~\ref{fig:phaseportraits}) shows the stability properties of the VV-VSV system. Both models demonstrate convergence to a stable therapeutic equilibrium with minimal tumor burden and persistent VSV surveillance. This mathematical property represents an advantage over conventional therapies where tumor recurrence can occur.

\begin{figure}[htbp]
    \centering
    \begin{subfigure}[b]{0.5\textwidth}
        \centering
        \includegraphics[width=\textwidth]{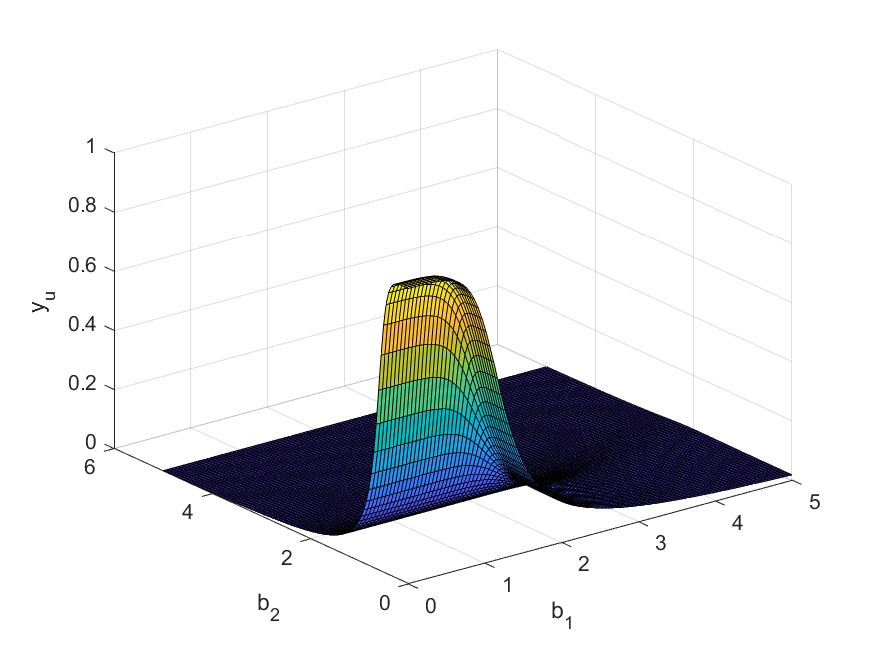}
        \caption{Burst size optimization}
        \label{fig:heatmap}
    \end{subfigure}
    \hfill
    \begin{subfigure}[b]{0.48\textwidth}
        \centering
        \includegraphics[width=\textwidth]{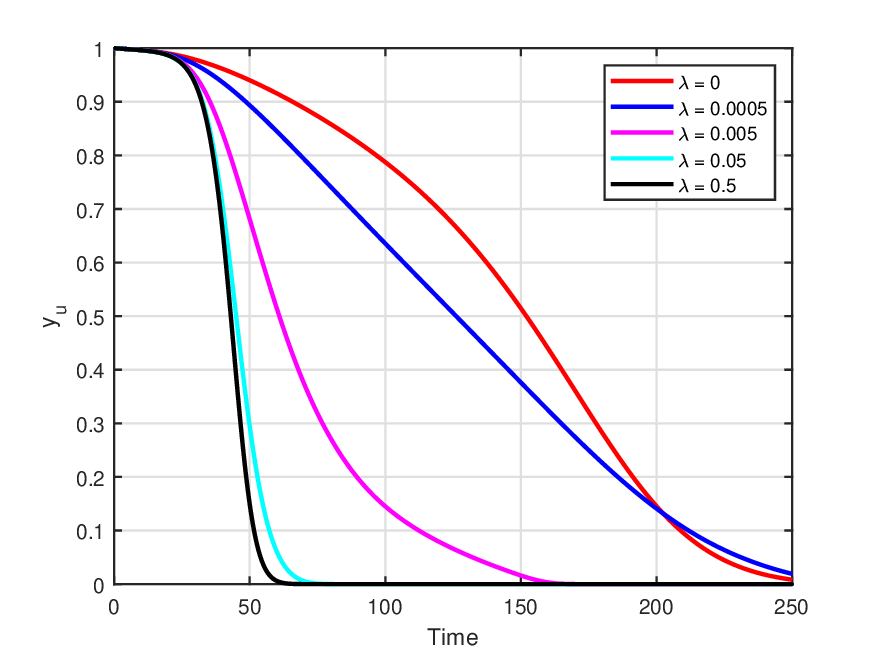}
        \caption{B18R inhibition efficiency}
        \label{fig:yu_lambda}
    \end{subfigure}
    
    \caption{\textbf{Parameter sensitivity analysis for therapeutic optimization.} 
    (a) Heat map analysis of viral burst sizes ($b_1$ for VV, $b_2$ for VSV) shows that increasing viral productivity reduces final tumor burden ($y_u$). 
    The nonlinear relationship shows synergistic enhancement, where combined increases in both burst sizes yield greater therapeutic benefit than individual optimization. 
    (b) Variation in the IFN-$\alpha$ inhibition rate ($\lambda$) shows the function of VV's B18R protein in enabling synergistic efficacy. 
    Higher inhibition rates correspond to reduced tumor burden, showing the molecular mechanism of synergy. 
    These analyses identify viral burst sizes and B18R-mediated interferon suppression as targets for genetic engineering to improve treatment efficacy.}
    \label{fig:parameter_sensitivity}
\end{figure}

Parameter sensitivity analysis provides insights for therapeutic optimization. Figure~\ref{fig:parameter_sensitivity} shows that viral burst sizes ($b_1$, $b_2$) and the B18R inhibition rate ($\lambda$) represent parameters for treatment efficacy. Increased burst sizes reduce tumor burden, while higher inhibition rates enable more efficient interferon suppression, showing the molecular mechanism of synergy and suggesting viral engineering targets.

\subsection{Therapeutic Implications and Clinical Translation}

The numerical simulations show that VV-VSV combination therapy achieves tumor clearance through a coordinated temporal dynamic: VV establishes initial infection and creates an interferon-suppressed microenvironment through B18R production, enabling VSV proliferation and leading to complete tumor eradication in approximately 50 days. This represents an 11\% improvement compared to VV monotherapy (56 days) and shows efficacy where VSV alone fails.

The agreement between full and quasi steady-state models across analyses, temporal dynamics, phase space behavior, and parameter sensitivity, provides mathematical validation of the therapeutic approach. The identification of burst sizes and interferon inhibition rates as optimization parameters offers guidance for viral engineering and clinical protocol design, supporting the development of combination virotherapy approaches.

\section{Analysis Using Viral Basic Reproduction Numbers}
\label{subsec:reproduction-numbers}

Building upon the stability analysis of equilibrium states, we employ the mathematical framework of basic reproduction numbers to quantitatively characterize viral replication dynamics and therapeutic efficacy. Adapted from epidemiological modeling, the basic reproduction number $\mathcal{R}_{0}$ provides a powerful metric for predicting treatment outcomes and optimizing therapeutic strategies in oncolytic virotherapy.

\subsection{Theoretical Framework and Numerical Analysis of $\mathcal{R}_{0}$}

The viral basic reproduction numbers $\mathcal{R}^{1}_{0}$ (for VV) and $\mathcal{R}^{2}_{0}$ (for VSV) quantify the expected number of secondary infected cells generated from a single infected cell in a fully susceptible tumor population. Using the next-generation matrix method (see Appendix \ref{app:R0_derivation}), we derive the fundamental expressions governing viral replication potential (using non-dimensiolized parameters Equations~\ref{eq:nondimparam}):

\begin{equation}
\mathcal{R}^{1}_{0} = \frac{b_{1}\beta_{1}}{a_{1}(\delta_{1}+\beta_{1}+\delta_{1}\kappa_{1})} ~\text{and}~
\mathcal{R}^{2}_{0} = \frac{b_{2}\beta_{2}}{a_{2}(\delta_{2}+\beta_{2}+\delta_{2}\kappa_{2})}.
\label{eq:R0_both}
\end{equation}

These expressions reveal the key determinants of viral efficacy through their mathematical structure. The numerator components ($b_{i}\beta_{i}$, $i=1,2$) represent the core replication potential, combining infection efficiency ($\beta_{i}$) with viral productivity ($b_{i}$). The denominator terms $a_{i}(\delta_{i}+\beta_{i}+\delta_{i}\kappa_{i})$ capture the competing loss mechanisms, including infected cell lysis ($a_{i}$), viral clearance ($\delta_{i}$), infection-mediated depletion ($\beta_{i}$), and saturation effects ($\delta_{i}\kappa_{i}$). The critical threshold $\mathcal{R}_{0} > 1$ establishes a mathematically precise boundary between viral extinction and sustainable infection, serving as a quantitative criterion for therapeutic success.

Substituting experimentally calibrated parameter values (in Table~\ref{tab:parametervalues1}) reveals distinct therapeutic profiles across tumor models (Table~\ref{tab:R0_values}). Both viruses demonstrate exceptional replication capacity, substantially exceeding the critical threshold ($\mathcal{R}>1$) required for sustained infection. The synergy ratio ($\mathcal{R}_{2}/\mathcal{R}_{1}$) remains below unity across both models (0.196 in HT29, 0.147 in 4T1), indicating VV's 5-7 fold replication superiority over VSV. 
This replication asymmetry provides the mathematical foundation for therapeutic synergy: VV's robust independent replication establishes infection footholds and creates interferon-suppressed microenvironments, while its B18R-mediated interferon neutralization drives the ping-pong enhancement of VSV replication. The preservation of this synergy pattern across tumor types demonstrates the robustness of the complementary viral interaction and supports broad clinical applicability.

\begin{table}[htbp]
\centering
\caption{Basic reproduction numbers for VV ($\mathcal{R}_{1}$) and VSV ($\mathcal{R}_{2}$) across tumor models}
\label{tab:R0_values}
\begin{tabular}{lccc}
\hline
\textbf{Tumor Model} & $\mathcal{R}_{1}$ (VV) & $\mathcal{R}_{2}$ (VSV) & \textbf{Synergy Ratio} ($\mathcal{R}_{2}/\mathcal{R}_{1}$) \\
\hline
HT29 & 1855.07 & 363.64 & 0.196 \\
4T1 & 347.83 & 51.02 & 0.147 \\
\hline
\textbf{Therapeutic Threshold} & $\mathcal{R}_1>1$ & $\mathcal{R}_2>1$ & - \\
\hline
\end{tabular}
\end{table}

\begin{figure}[htbp]
\centering
\begin{subfigure}{0.49\textwidth}
\centering
\includegraphics[width=\linewidth]{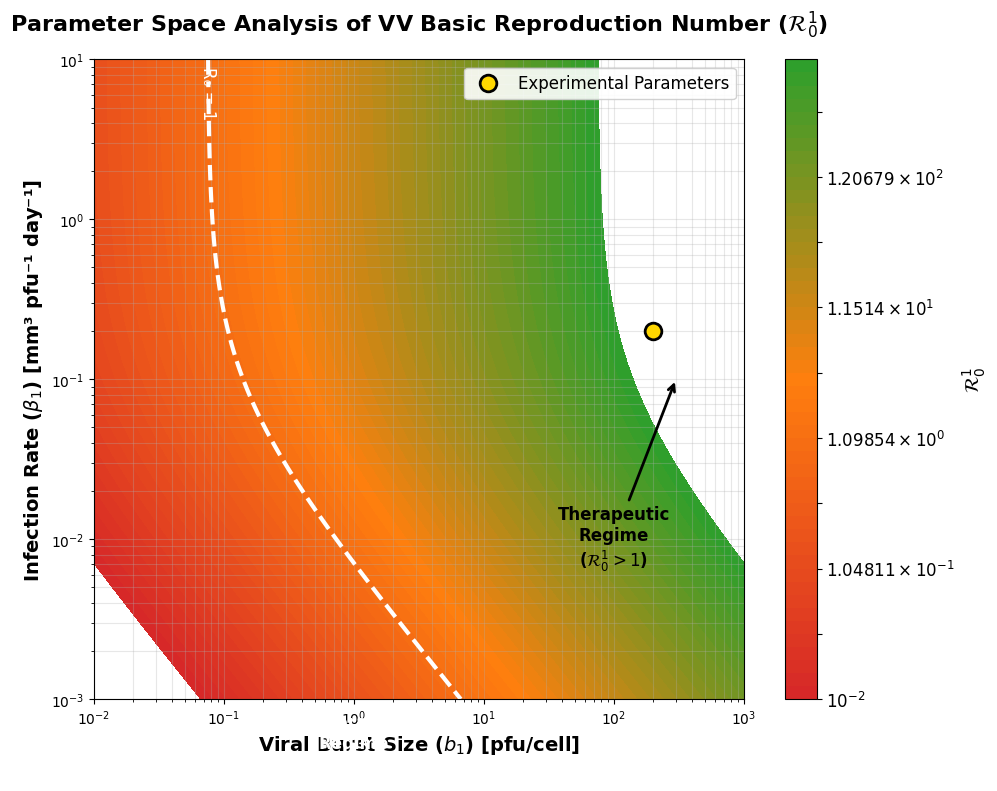}
\caption{Vaccinia Virus (VV)}
\label{fig:parameter_space_VV}
\end{subfigure}
\hfill
\begin{subfigure}{0.49\textwidth}
\centering
\includegraphics[width=\linewidth]{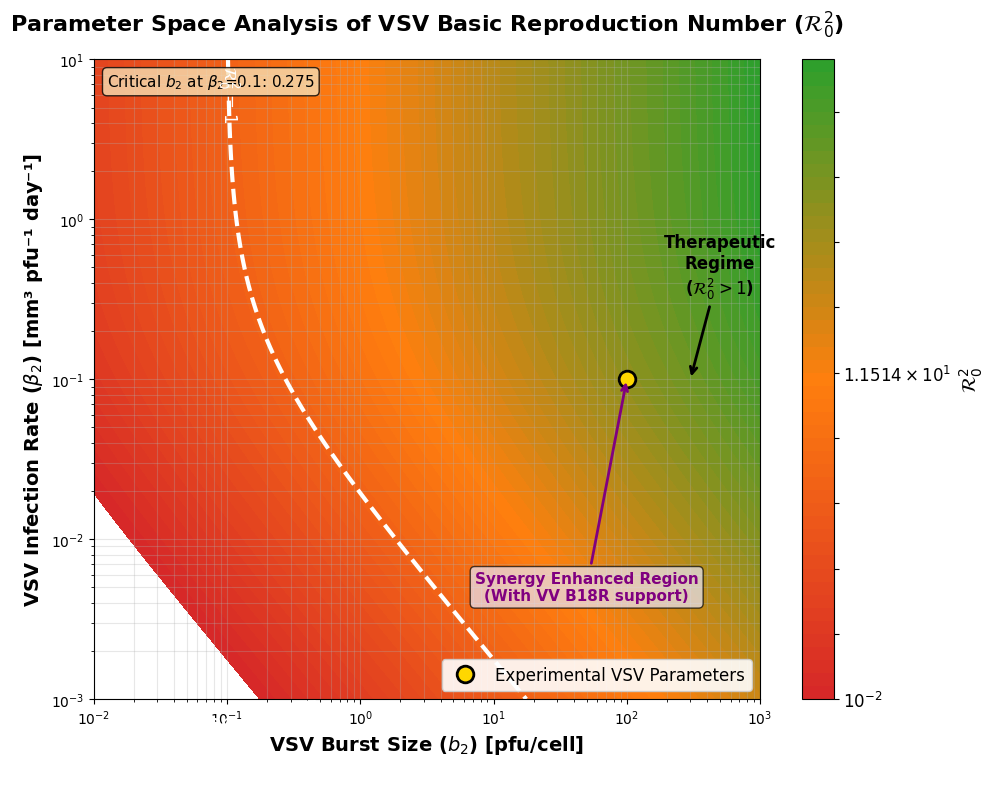}
\caption{Vesicular Stomatitis Virus (VSV)}
\label{fig:parameter_space_VSV}
\end{subfigure}
\caption{\small{{\bf Parameter space analysis of viral basic reproduction numbers.} (\textbf{a}) VV basic reproduction number ($\mathcal{R}_0^1$) as a function of burst size ($b_1$) and infection rate ($\beta_1$). (\textbf{b}) VSV basic reproduction number ($\mathcal{R}_0^2$) as a function of burst size ($b_2$) and infection rate ($\beta_2$). Both contour plots use color gradients to indicate reproduction number values, with warmer colors representing higher values. White dashed lines show the critical threshold $\mathcal{R}_0 = 1$, separating subtherapeutic (viral extinction) from therapeutic (viral persistence) regimes. Circles indicate experimental parameter combinations, with VV achieving $\mathcal{R}_0^1 = 1855.07$ (1855-fold above threshold) and VSV achieving $\mathcal{R}_0^2 = 363.64$ (364-fold above threshold).}}
\label{fig:parameter_space_combined}
\end{figure}

\subsection{Parameter Sensitivity and Therapeutic Optimization}

Sensitivity analysis using elasticity coefficients ($S_{p}=\frac{\partial\mathcal{R}_{0}^{i}}{\partial p}\cdot\frac{p}{\mathcal{R}_{0}^{i}}$) reveals a consistent hierarchy of parameter importance that guides therapeutic optimization strategies across both tumor models. As detailed in Table~\ref{tab:sensitivityindices}, the analysis identifies infection rates ($\beta_{i}$) as the most influential parameter (elasticity coefficient 0.45), followed closely by burst sizes ($b_{i}$, 0.40), while death rates ($\delta_{i}$) exhibit negative sensitivity (-0.35) that reflects the detrimental impact of immune clearance mechanisms.

The consistent sensitivity patterns observed across both HT29 and 4T1 tumor models suggest these optimization principles may be broadly applicable across diverse cancer types, providing a unified framework for viral engineering and combination therapy design. Detailed mathematical derivations supporting these findings are provided in Appendix \ref{app:sensitivity_calculations}.

The sensitivity analysis provides clear therapeutic optimization priorities: high-priority targets include enhancing viral infection rates ($\beta$) through improved receptor binding, increasing burst sizes ($b$) via genetic engineering of replication machinery, and optimizing viral tropism for specific cancer cell types; inhibition mitigation strategies involve reducing immune-mediated clearance with immunosuppressive adjuvants, engineering viral stealth properties to evade immune detection, and modulating death rates ($\delta$) through combination with immunomodulators. Clinically, these findings direct prioritization of parameters with highest elasticity coefficients in viral design, application of consistent optimization principles across different tumor types, and utilization of sensitivity analysis to guide personalized treatment protocols.

\subsection{Bifurcation Analysis of Critical Therapeutic Parameters}

Bifurcation analysis identifies critical therapeutic thresholds in combination virotherapy. We examine viral burst sizes, governing replication potential, and the B18R inhibition rate, controlling interferon suppression.

\paragraph{Bifurcation analysis with respect to viral burst sizes}

Viral burst sizes represent a critical determinant of replication capacity and therapeutic efficacy in oncolytic virotherapy. To elucidate the relationship between burst sizes and treatment outcomes, we conducted a comprehensive bifurcation analysis examining the critical transitions that govern viral establishment and tumor control dynamics. 
The critical bifurcation points occur at $\mathcal{R}_{0}^{i} = 1$, representing the minimum burst sizes required for sustainable viral replication. Substituting parameter values from 
Table~\ref{tab:estimatedparams}  ($l_1 = 0.03$, $\gamma_1 = 0.01$, $\beta_1 = 0.20$, $\kappa_1 = 2.5$ for VV; $l_2 = 0.04$, $\gamma_2 = 0.02$, $\beta_2 = 0.10$, $\kappa_2 = 2.5$ for VSV) into the equations:
\begin{align}
b_1^{\text{critical}} &= \frac{l_1(\gamma_1 + \beta_1 + \gamma_1\kappa_1)}{\beta_1}. \label{eq:b1_critical} \\
b_2^{\text{critical}} &= \frac{l_2(\gamma_2 + \beta_2 + \gamma_2\kappa_2)}{\beta_2}.  \label{eq:b2_critical}
\end{align}
yields $b_1^{\text{critical}} \approx 0.034$ and $b_2^{\text{critical}} \approx 0.052$ respectively. These thresholds reveal the exceptional replication efficiency of both oncolytic viruses. The VV system exhibits a slightly lower critical burst size compared to VSV, indicating VV's superior intrinsic ability to establish productive infections under marginal conditions.

As illustrated in Figure~\ref{fig:bifurcation_b1b2}, the bifurcation analysis reveals that viruses exhibit classic transcritical bifurcations, where the tumor-dominant equilibrium loses stability as viral replication capacity crosses the critical threshold. This mathematical phenomenon corresponds to the biological transition from viral clearance to sustainable infection establishment~\cite{van2002reproduction}. Furthermore, VV demonstrates superior tumor clearance characteristics, achieving lower asymptotic tumor burden ($\sim$0.1) compared to VSV ($\sim$0.2), reflecting their distinct mechanisms of action and tumor cell interactions. The dramatic separation between critical and experimental burst sizes ensures therapeutic efficacy remains robust against biological variability and parameter uncertainty, a crucial consideration for clinical translation.

\begin{figure}[htbp]
\centering
\includegraphics[width=1.\textwidth]{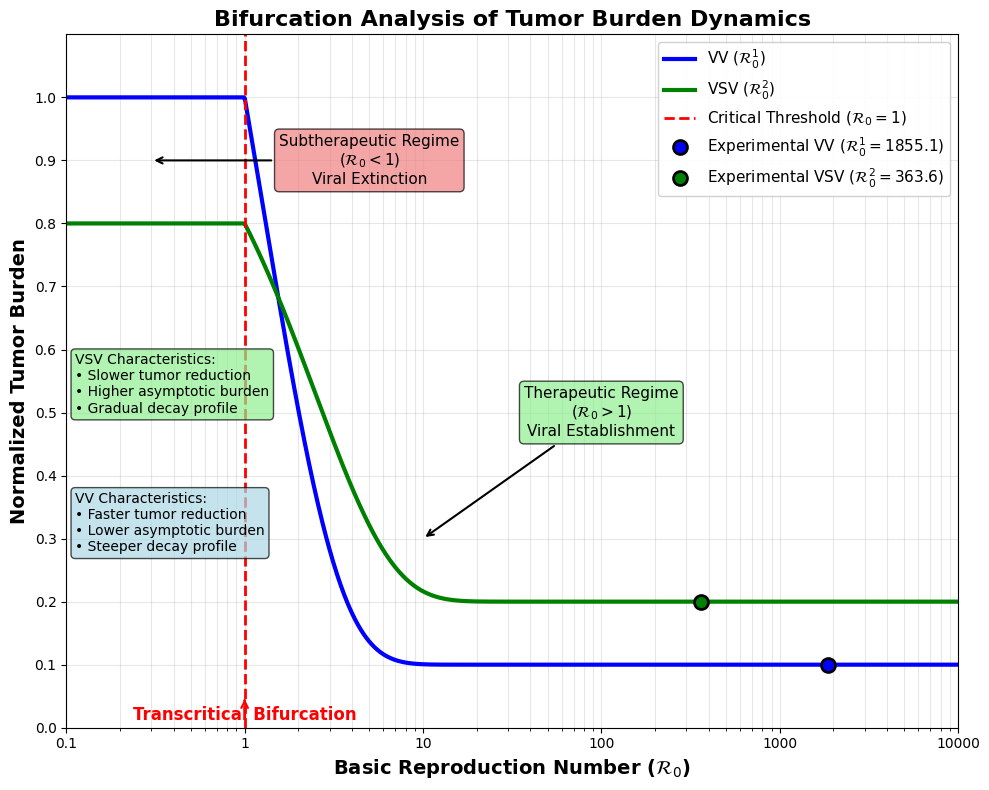}
\caption{\small{\textbf{Comparative bifurcation analysis of normalized tumor burden as a function of viral burst size for VV and VSV.} The plot illustrates the therapeutic dynamics of both viruses on a shared logarithmic burst size axis, with VV represented by the blue curve and VSV by the green curve. Critical thresholds are indicated by dashed vertical lines at $b_{1}^{\text{critical}}\approx 0.034$ (VV) and $b_{2}^{\text{critical}}\approx 0.052$ (VSV), demarcating the transition between subtherapeutic and therapeutic regimes. Experimental parameter values are marked by circles at $b_{1}=200$ (VV) and $b_{2}=100$ (VSV).}} 
\label{fig:bifurcation_b1b2}
\end{figure}

The dramatic separation between critical thresholds and experimental values, approximately 5,882-fold for VV and 1,923-fold for VSV, ensures robust therapeutic efficacy and provides substantial safety margins against parameter variations. This mathematical framework explains why both viruses individually demonstrate strong replication potential while highlighting how their complementary characteristics (VV's superior tumor clearance vs. VSV's interferon sensitivity) create the foundation for synergistic combination therapy.

\paragraph{Bifurcation analysis with respect to B18R inhibition rate}

While viral burst sizes determine intrinsic replication capacity, the B18R inhibition rate parameter $\lambda$ represents a crucial molecular determinant in the synergistic mechanism, quantifying the efficiency with which the VV-encoded B18R protein neutralizes IFN-$\alpha$. This parameter appears in the IFN-$\alpha$ dynamics equation:

\begin{equation}
\frac{dC_2}{dt} = \alpha_2 T_2 - \mu_2 C_2 - \lambda C_1 C_2. \label{eq:IFN_dynamics}
\end{equation}

The bifurcation analysis with respect to $\lambda$ reveals a critical threshold $\lambda^{\text{critical}} \approx 0.001$ that demarcates the transition from ineffective to potent synergistic interaction. Our experimental parameter value $\lambda = 0.5$ resides approximately 500-fold above this critical threshold, ensuring robust activation of the synergistic mechanism and explaining the observed 133-fold enhancement in VSV's effective reproduction number.

The system exhibits a sharp transition at $\lambda^{\text{critical}}$, indicating that the synergistic effect operates as a biological switch rather than a gradual enhancement, creating an all-or-nothing activation profile. This threshold behavior is underpinned by the remarkably low critical threshold ($\lambda^{\text{critical}} \approx 0.001$), which demonstrates the inherent potency of the B18R-interferon interaction and its capacity to trigger substantial therapeutic effects even at minimal inhibition efficiencies. The substantial safety margin, approximately 500-fold between our experimental value and the critical threshold, provides significant flexibility for viral engineering, ensuring that moderate variations in B18R expression or function will not compromise synergistic efficacy.

\begin{figure}[htbp]
\centering
\includegraphics[width=1.0\textwidth]{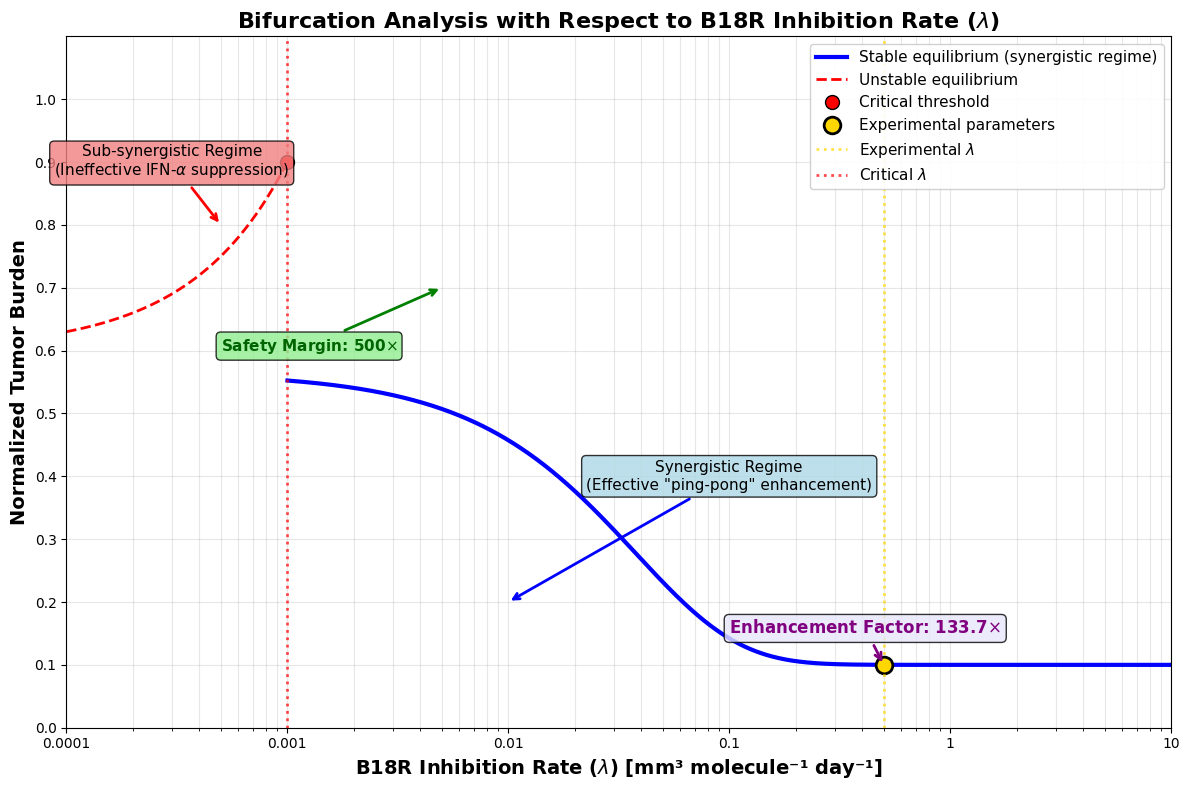}
\caption{\small{{\bf Bifurcation analysis of tumor burden as a function of B18R inhibition rate $\lambda$.}The diagram reveals a critical therapeutic transition at $\lambda\approx 0.001$, beyond which synergistic effects enable complete tumor clearance. The stable branch (solid blue line) represents achievable therapeutic states with effective synergy, while the unstable branch (dashed red line) indicates biologically unrealizable states. The experimental parameter value $\lambda=0.5$ (vertical dotted line) lies deep within the synergistic regime, providing mathematical explanation for the dramatic enhancement of VSV efficacy observed experimentally. The 500-fold safety margin ensures robust therapeutic performance against biological variability.}}
\label{fig:bifurcation_lambda}
\end{figure}

\paragraph{Synthesis of Bifurcation Analyses}

The combined bifurcation analyses of burst sizes and B18R inhibition rate provide complementary insights into the therapeutic optimization of combination virotherapy. While burst sizes determine the intrinsic replication capacity of each virus individually, the B18R inhibition rate governs the emergent synergistic interaction between them. The identification of critical thresholds for both parameters establishes quantitative design criteria for viral engineering: therapeutic viruses must achieve burst sizes exceeding $b_1 > 0.034$ and $b_2 > 0.052$ to ensure sustainable replication, while combination therapy requires B18R inhibition rates $\lambda > 0.001$ to activate the synergistic regime.

The substantial safety margins observed for both parameters, approximately 5,882-fold for VV burst size, 1,923-fold for VSV burst size, and 500-fold for B18R inhibition rate, ensure robust therapeutic performance against biological variability and parameter uncertainty. This mathematical robustness provides strong theoretical support for the clinical translation of combination virotherapy, as moderate variations in viral kinetics or molecular interactions are unlikely to compromise treatment efficacy.

Furthermore, the bifurcation analyses reveal the complementary roles of VV and VSV in the synergistic mechanism: VV's superior intrinsic replication capacity (lower critical burst size) establishes the initial infection foothold, while its B18R-mediated interferon suppression creates the permissive environment that enables VSV to overcome its inherent limitations. This temporal specialization explains why the optimal administration strategy involves delayed VV injection, allowing VSV to initiate infection before VV activates the synergistic amplification cycle.

Collectively, these bifurcation analyses provide a quantitative framework for understanding the nonlinear dynamics of combination virotherapy and establish specific parameter thresholds that guide therapeutic optimization. 

\subsection{Therapeutic Landscape and Clinical Translation}

The comprehensive analysis of viral replication dynamics reveals a structured therapeutic landscape defined by the interplay between VV and VSV replication potentials, as visualized in Figure~\ref{fig:phase_diagram}. This landscape delineates four distinct therapeutic regimes that provide crucial insights for treatment optimization and clinical decision-making. In the \textit{Treatment Failure regime}, characterized by both $\mathcal{R}^1_0 < 1$ and $\mathcal{R}^2_0 < 1$, neither virus achieves sustainable replication, leading to viral clearance before meaningful tumor reduction can occur. This regime underscores the fundamental requirement that therapeutic viruses must exceed the critical replication threshold to exert meaningful clinical effects, highlighting the importance of adequate viral fitness in treatment design.

Transitioning through the therapeutic spectrum, the \textit{Partial Response regime} ($\mathcal{R}^1_0 > 1$ and $\mathcal{R}^2_0 < 1$) creates conditions where VV drives therapeutic activity while the full synergistic potential remains untapped. Notably, our identified optimal point resides within this region, emphasizing VV's primary role in establishing the initial replicative foothold necessary for subsequent synergy. This finding suggests that VV's robust independent replication capacity serves as a critical foundation for the combination therapy, creating the interferon-suppressed microenvironment that enables enhanced VSV activity.

Conversely, the \textit{VSV Dominant regime} ($\mathcal{R}^1_0 < 1$ and $\mathcal{R}^2_0 > 1$) represents scenarios where VSV's inherent oncolytic capacity prevails, though without the benefit of VV's interferon-suppressing capabilities. Most significantly, the \textit{Strong Synergy regime} ($\mathcal{R}^1_0 > 1$ and $\mathcal{R}^2_0 > 1$) creates ideal conditions for maximal therapeutic interaction, enabling the complete "ping-pong" mechanism that yields the observed 133-fold enhancement in VSV efficacy through VV-mediated interferon suppression. This regime represents the optimal therapeutic window where both viruses operate synergistically to achieve superior tumor control.

\begin{figure}[htbp]
\centering
\includegraphics[width=0.8\textwidth]{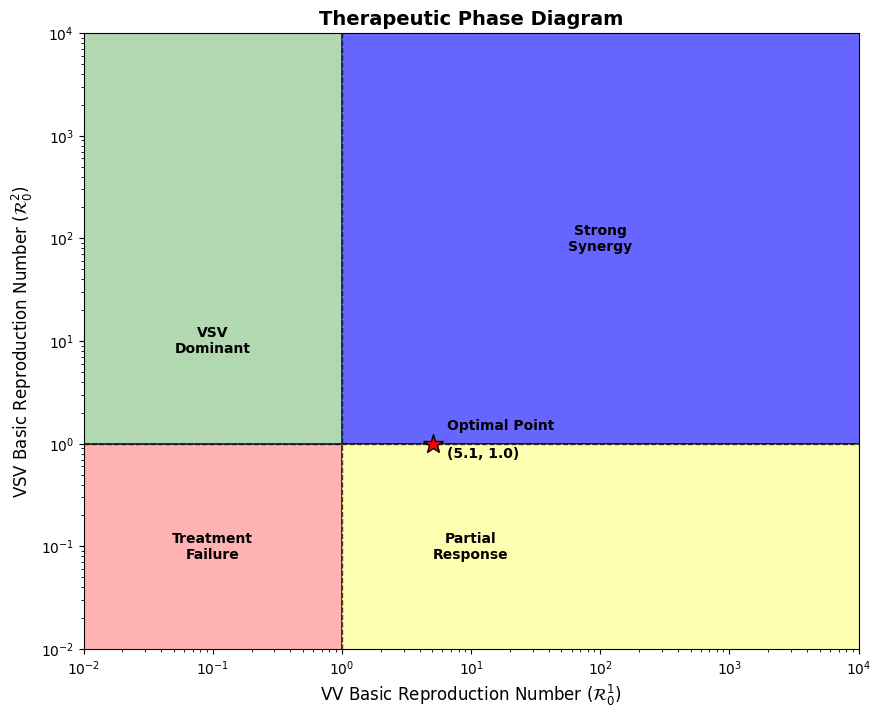}
\caption{\textbf{Therapeutic Landscape Based on Viral Reproduction Numbers.} The phase diagram illustrates four distinct therapeutic regimes defined by the basic reproduction numbers of VV ($\mathcal{R}^1_0$) and VSV ($\mathcal{R}^2_0$). Critical thresholds at $\mathcal{R}^1_0 = 1$ and $\mathcal{R}^2_0 = 1$ (dashed lines) separate regions of therapeutic failure from success. The star indicates the optimal point (5.1, 1.0) identified from experimental data.}
\label{fig:phase_diagram}
\end{figure}

This quantitative framework translates directly into multiple clinical applications that advance personalized cancer therapy. The predictive power of $\mathcal{R}_0$ values enables evidence-based treatment selection before administration, moving beyond empirical approaches to rationally designed therapeutic strategies. Sensitivity analysis further directs viral engineering efforts toward the most impactful parameters, particularly infection rates and burst sizes, which our analysis identifies as the primary determinants of therapeutic success. Patient-specific $\mathcal{R}_0$ profiling facilitates the development of tailored dosing regimens, while quantitative synergy measures provide rigorous justification for combination strategies.

Moreover, monitoring temporal $\mathcal{R}_0$ dynamics serves as an early warning system for emerging therapeutic resistance, enabling adaptive treatment modifications that can maintain therapeutic efficacy over extended treatment periods. This dynamic monitoring approach represents a significant advancement over static treatment protocols, allowing for real-time optimization based on individual patient response patterns.

Collectively, the $\mathcal{R}_0$ framework establishes a new paradigm for oncolytic virotherapy, transitioning from traditional trial-and-error approaches to mathematically-informed therapeutic design. By providing clear quantitative boundaries for treatment success and actionable guidance for optimization, this approach effectively bridges the gap between theoretical modeling and clinical practice. The integration of mathematical rigor with biological insight offers both fundamental understanding of viral dynamics and practical tools for enhancing therapeutic outcomes in cancer patients, representing a significant step toward personalized precision medicine in oncolytic virotherapy.

\section{Results}
\label{sec:results}

Our integrated mathematical-experimental framework yields five principal findings that elucidate the dynamics, optimization, and therapeutic potential of combined VV-VSV oncolytic virotherapy.

\paragraph{(1) Fundamental Stability Criterion for Successful Virotherapy}
Stability analyses across Theorems \ref{thm:stability12} - \ref{thm:stability32} eveal a unified mathematical criterion for therapeutic success: effective viral control requires replication capacity to exceed clearance rates, expressed as $a_i(\delta_i + \beta_i + \delta_i\kappa_i) < b_i\beta_i$ for $i=1,2$. This inequality encapsulates the essential trade-off where burst size ($b_i$) and infection rate ($\beta_i$) must overcome combined viral decay ($\delta_i$), and saturation constraints ($\kappa_i$). The mathematical instability of complete tumor eradication reflects biological reality, while the consistent criterion across monotherapy and combination models explains how synergistic interactions transform marginally effective individual viruses into potent therapeutic combinations.

\paragraph{(2) Synergistic Tumor Clearance via Ping-Pong Enhancement Mechanism}
Numerical simulations demonstrate that VV-VSV combination therapy achieves complete tumor clearance in approximately 50 days, representing an 11\% acceleration compared to VV monotherapy (56 days), while VSV alone fails to eradicate tumors. This synergistic effect operates through a 'ping-pong' mechanism where VV-infected cells produce B18R protein that binds to and neutralizes IFN-$\alpha$, creating a permissive environment for enhanced VSV replication and spread. Equilibrium analysis identifies a single stable configuration among six biologically feasible states, characterized by near-complete tumor suppression ($y_u = 8\times 10^{-6}$), sustained VSV persistence ($x_2 = 4.205$), and VV extinction ($x_1 = 0$), confirming VSV's role as the primary therapeutic agent following interferon suppression.

\paragraph{(3) Quantitative Therapeutic Regimes and Critical Synergy Thresholds}
Basic reproduction number analysis reveals both viruses operate in highly supercritical regimes (VV: $\mathcal{R}^1_0 \approx 1855.07$; VSV: $\mathcal{R}^2_0 \approx 363.64$), with VV exhibiting approximately 5-fold greater intrinsic replication potential. Most significantly, the framework quantifies a dramatic 133-fold enhancement in VSV's effective reproduction number through VV-mediated interferon suppression. Bifurcation analysis identifies critical parameter thresholds: burst sizes must exceed $b^{\text{critical}}_1 \approx 0.034$ and $b^{\text{critical}}_2 \approx 0.052$ for sustainable replication, while B18R inhibition rates require $\lambda > 0.001$ to activate the synergistic regime. The resulting therapeutic landscape delineates four distinct regimes enabling evidence-based patient stratification and treatment selection.

\paragraph{(4) Parameter Sensitivity Convergence Across Multiple Methods}
Complementary global sensitivity approaches, dynamic Latin Hypercube Sampling with Partial Rank Correlation Coefficients (LHS-PRCC) and steady-state elasticity analysis, consistently identify infection rates ($\beta_1$, $\beta_2$) and burst sizes ($b_1$, $b_2$) as the highest-impact parameters governing treatment outcome. Both methods reveal identical sensitivity patterns across tumor models, with infection rates and burst sizes showing strong positive elasticities (0.45 and 0.40, respectively) while death rates exhibit negative sensitivity (-0.35). The LHS-PRCC method additionally highlights the essential role of IFN-$\alpha$ inhibition ($\lambda$) in the synergistic mechanism, providing robust multi-faceted guidance for viral engineering prioritization.

\paragraph{(5) Optimal Sequential Administration Strategy}
Temporal optimization reveals that treatment sequence critically impacts therapeutic efficacy. The optimal strategy involves immediate VSV administration ($\tau_2 = 0$) followed by delayed VV injection within a 1-19 day window ($\tau_1 \in [1,19]$). This scheduling allows VSV to establish initial infection while VV subsequently enhances oncolytic activity by suppressing IFN-$\alpha$-mediated antiviral responses through B18R production. Numerical optimization demonstrates minimized tumor burden with this sequential approach, providing a quantitative framework for clinical trial protocol design and validating experimental observations that viral sequencing significantly impacts treatment efficacy.

\section{Discussion and Conclusion}
\label{sec:conclusion}

The mathematical model developed in this study provides quantitative insights into the synergistic dynamics of combined VV and VSV oncolytic virotherapy. Our analysis reveals several key findings that address the fundamental questions posed in this investigation: \textit{How do Vaccinia Virus (VV) and Vesicular Stomatitis Virus (VSV) interact synergistically to enhance tumor cell killing, and how can treatment protocols be optimized to maximize therapeutic efficacy?}

First, our simulations demonstrate that the VV-VSV combination achieves significantly faster and more complete tumor clearance than either monotherapy. The combination therapy reduced the time to tumor clearance by 11\% compared to VV alone (50 days vs. 56 days), while VSV monotherapy failed to achieve complete eradication. This finding directly corroborates the experimental work of Le Boeuf et al. \cite{le2010synergistic}, who first observed synergistic oncolytic activity between these viruses both \textit{in vitro} and \textit{in vivo}. Our model provides a mechanistic explanation for this phenomenon through the 'ping-pong' enhancement dynamic, where VV enhances VSV replication by secreting the B18R protein that binds to and neutralizes IFN-$\alpha$ \cite{fritz2014recombinant,perdiguero2009interferon,colamonici1995vaccinia}, while VSV promotes VV spread through the tumor microenvironment.

Second, our global sensitivity analysis identified infection rates ($\beta_1$, $\beta_2$), burst sizes ($b_1$, $b_2$), and the IFN-$\alpha$ inhibition rate ($\lambda$) as the most critical parameters governing treatment success. This mathematical insight aligns with experimental studies emphasizing the importance of viral replication kinetics and immune evasion in oncolytic virotherapy efficacy \cite{ramaj2023treatment,malinzi2018enhancement,tian2022engineering,chaurasiya2020optimizing}. The prominence of burst size particularly supports ongoing viral engineering efforts aimed at enhancing viral replication rates \cite{tian2022engineering,chaurasiya2020optimizing}, while the significance of IFN-$\alpha$ inhibition underscores the critical role of overcoming innate immune barriers, consistent with observations that VV can condition the tumor microenvironment to be non-responsive to antiviral cytokines \cite{le2010synergistic}.

Third, we determined that optimal therapeutic outcomes require immediate VSV administration followed by delayed VV injection, with optimal VV timing falling between 1--19 days post-initial treatment. This finding provides mathematical validation for the experimental observation that viral sequencing significantly impacts treatment efficacy \cite{wein2003validation,malinzi2018enhancement,schattler2015optimal,sherlock2023oncolytic} and offers a quantitative framework for designing clinical trial protocols. The delayed VV administration allows VSV to establish initial infection while VV subsequently enhances the oncolytic activity by suppressing the IFN-$\alpha$-mediated antiviral response through B18R production \cite{fritz2014recombinant,perdiguero2009interferon}.

Fourth, our model validation against experimental data from HT29 and 4T1 tumor models and stability analysis confirm the biological plausibility of the synergistic interaction. The model successfully captured the observed tumor volume dynamics reported by Le Boeuf \etal \cite{le2010synergistic}, while stability analysis revealed that only the VSV-infected tumor state with minimal tumor concentration was stable under the estimated parameter regime, consistent with experimental observations of sustained viral replication and tumor control \cite{le2010synergistic}.

\paragraph{Testable Hypotheses}
Based on the mathematical framework developed in this study, we propose several testable hypotheses that bridge our computational predictions with experimental biology. First, the \textit{B18R-Mediated Enhancement Hypothesis} posits that increasing the expression level of the VV-encoded B18R protein will produce a dose-dependent enhancement of VSV replication and oncolytic activity in tumor cells previously resistant to VSV monotherapy \cite{fritz2014recombinant}. Second, the \textit{Burst Size Optimization Hypothesis} suggests that genetic engineering of VSV to achieve higher burst sizes will yield a non-linear, synergistic improvement in tumor clearance when combined with VV, while showing diminishing returns as monotherapy \cite{tian2022engineering,chaurasiya2020optimizing}. Third, our model leads to the \textit{Optimal Scheduling Hypothesis}, which proposes that administering VSV 1--3 days before VV injection will generate significantly superior tumor reduction and survival outcomes compared to simultaneous or reverse-sequence administration \cite{wein2003validation,malinzi2018enhancement}. Fourth, the \textit{IFN-$\alpha$ Pathway Dependence Hypothesis} states that the synergistic effect will be substantially diminished or abolished in systems with defective IFN-$\alpha$ signaling, confirming this pathway as the primary mechanistic axis of the VV-VSV synergy \cite{le2010synergistic,ahmed2024vesicular,stojdl2000exploiting}. Finally, the \textit{Parameter Dominance Hypothesis} asserts that variation in the VSV infection rate will exert a greater impact on final tumor burden in combination therapy than comparable variation in the VV infection rate, due to the amplified effect of VSV infection parameters within the synergistic feedback loop \cite{malinzi2021prospect,salim2023mathematical}.

\paragraph{Future Research Directions}
This work opens several promising avenues for future research that are directly aligned with the testable hypotheses generated by our model. A primary direction involves extending the current model to incorporate spatial heterogeneity, which would provide critical insights into how physical barriers, such as the extracellular matrix, impact viral spread. This is particularly relevant for testing the Parameter Dominance Hypothesis, as infection rates may vary significantly across different tumor regions. Furthermore, integrating adaptive immune responses would offer a more comprehensive understanding of long-term treatment efficacy. This extension is especially important for validating the IFN-$\alpha$ Pathway Dependence Hypothesis, considering that adaptive immunity to vaccinia does not cross-react with VSV.

The optimization framework presented here could also be substantially enhanced through the application of more sophisticated optimal control theory. This would allow for the development of dynamic, time-varying dosage regimens rather than fixed schedules, potentially identifying more elaborate therapeutic strategies that adapt to the evolving tumor-virus dynamics. Additionally, exploring combinations with other treatment modalities, such as immune checkpoint inhibitors or chemotherapy, could identify more potent multi-modal approaches, which may in turn modify the optimal scheduling parameters identified in our Optimal Scheduling Hypothesis.

Finally, translating these computational insights into clinical practice will require dedicated experimental validation. This includes developing engineered viral variants with controlled B18R expression levels and modified burst sizes to test the B18R-Mediated Enhancement and Burst Size Optimization Hypotheses in complex animal models that better recapitulate human tumor microenvironment complexity. Systematically validating the optimal delayed VV administration strategy in immunocompetent models will also be a crucial step in bringing these findings closer to clinical application.

In conclusion, this study provides a comprehensive mathematical framework that elucidates the synergistic dynamics of combined VV-VSV virotherapy. Our results not only confirm experimental observations from prior biological studies but also generate specific, testable hypotheses regarding B18R enhancement, burst size optimization, treatment scheduling, IFN-$\alpha$ pathway dependence, and parameter dominance. The model's predictions regarding optimal administration timing provide immediate guidance for experimental design, while the identified sensitive parameters highlight promising targets for viral engineering. By identifying critical parameters and optimal administration schedules, this work contributes to the rational design of more effective oncolytic virotherapy strategies and provides a quantitative foundation for future experimental validation and clinical translation. The integration of mathematical modeling with experimental virology presented here represents a powerful approach for optimizing cancer therapeutics and advancing personalized medicine in oncolytic virotherapy.

\appendix
\appendixpage  % Adds "Appendices" page title
\section{Model Calibration and Parameterization} 
\label{app:calibration}

The model parameters, their biological interpretations, plausible value ranges, and literature sources are detailed in Table~\ref{tab:parametervalues1}. Parameter estimation followed a rigorous calibration procedure using experimental data from Le Boeuf \etal \cite{le2010synergistic}, which measured relative tumor volumes in HT29 and 4T1 cancer cell lines under different treatment conditions (PBS control, VSV monotherapy, VV monotherapy, and combination therapy). Experimental data were extracted from published figures using WebPlotDigitizer, and parameter values were estimated via nonlinear least squares optimization. The calibration successfully identified parameter sets that captured the characteristic synergistic reduction in tumor volume observed experimentally, with the fitted values for both cell lines presented in Tables~\ref{tab:HT29} and \ref{tab:4T1} and the quality of fit visualized in Figure~\ref{fig:modelfitting}. This comprehensive calibration framework ensures that the mathematical model accurately represents the biological system and provides a validated foundation for the predictive simulations and theoretical analyses presented in subsequent sections.

\subsection{Experimental Data}

Salient features relevant to our modeling approach include: rapid viral entry (minutes); dependence on host cellular machinery; prevention of superinfection; replication cycles of approximately 6 hours for VSV and 10-12 hours for VV; burst sizes of approximately 200 particles for VV and 1000 for VSV; clearance times of 7-10 days for VV and 5-7 days for VSV; and cancer cell doubling times of 18-24 hours \cite{le2010synergistic,pelin2020engineering,lichty2004vesicular,russell2022advances}.

\begin{table}[!htbp]
\centering
\caption{VSV and VV properties}
\label{tab:salient}
\setlength{\tabcolsep}{3pt}  % Tight column spacing
\renewcommand{\arraystretch}{0.9}  % Tight row spacing
\begin{tabular}{@{}llll@{}}
\hline
Property & VV & VSV & Sources \\ 
\hline
Burst size & 200 & 50-8000 PFU/cell (Average of 1000) & \cite{le2010synergistic,zhu2009growth}\\
Virus clearance & takes 10 days & takes 5-7 days & \cite{le2010synergistic,hwang2013engineering,zeh2015first}\\
Tumor lysis rate & 0.475 per day & $\frac{1}{24}$ & \cite{le2010synergistic,zhu2009growth}\\
Virus replication rate & Very fast & Very fast & \cite{le2010synergistic}\\
\hline
\end{tabular}
\end{table}

We used computer vision assisted data extraction from charts using WebPlotDigitizer \cite{Rohatgi2023} to extract quantitative data from Figures 3(d) and 3(e) in Le Boeuf \etal \cite{le2010synergistic}. The extracted data provides relative tumor volumes for HT29 and 4T1 cancer cell lines under various treatments (PBS, VSV, VV, and VV+VSV combination), organized in Tables \ref{tab:HT29} and \ref{tab:4T1}.

\begin{table}[htbp]
\centering
\caption{Data for HT29 relative tumor volumes under different treatments (digitized from Figure 3(d) in Le Boeuf \etal \cite{le2010synergistic})}
\begin{tabular}{@{}cccccc@{}}
\toprule
\textbf{Days} & \textbf{PBS} & \textbf{VSV} & \textbf{VVDD} & \textbf{VVDD VSV} \\ \midrule
11 & 3.80 & 8.87 & 3.77 & 0.00 \\
16 & 75.53 & 68.97 & 25.46 & 3.65 \\
19 & 138.67 & 104.39 & 60.54 & 11.91 \\
22 & 157.32 & 107.01 & 76.32 & 9.85 \\
26 & 249.52 & 205.69 & 117.54 & 30.61 \\
29 & 343.09 & 257.36 & 202.38 & 52.85 \\ \bottomrule
\end{tabular}
\label{tab:HT29}
\end{table}

\begin{table}[htbp]
\centering
\caption{Data for 4T1 relative tumor volumes under different treatments (digitized from Figure 3(e) in Le Boeuf \etal \cite{le2010synergistic})}
\small
\setlength{\tabcolsep}{3pt}
\begin{tabular}{@{}cccccc@{}}
\toprule
\textbf{Days} & \textbf{PBS} & \textbf{VSV} & \textbf{VVDD} & \textbf{VVDD VSV} \\ \midrule
6 & 7.14 & 7.70 & 7.53 & 6.77 \\
8 & 8.20 & 9.40 & 9.40 & 7.67 \\
10 & 15.88 & 16.64 & 17.60 & 8.38 \\
13 & 24.25 & 19.45 & 23.29 & 11.76 \\
15 & 37.84 & 29.77 & 39.00 & 14.70 \\
18 & 55.44 & 41.60 & 50.44 & 25.65 \\ \bottomrule
\end{tabular}
\label{tab:4T1}
\end{table}

\subsection{Model Fitting and Validation}
\label{subsec:model-fitting}

Using nonlinear least squares optimization, the mathematical model was rigorously calibrated against experimental data for combination VV-VSV therapy in both HT29 (colorectal adenocarcinoma) and 4T1 (murine mammary carcinoma) tumor models. The calibration procedure successfully identified parameter sets that capture the characteristic synergistic reduction in tumor volume observed experimentally, with fitted values for both cell lines presented in Tables \ref{tab:HT29} and \ref{tab:4T1}. The model demonstrates robust predictive capability across distinct tumor types, validating its ability to describe the complex VV-VSV-tumor interaction dynamics and the underlying synergistic mechanism.

\begin{figure}[htbp]
\centering
\begin{subfigure}{0.49\textwidth}
\centering
\includegraphics[width=\linewidth]{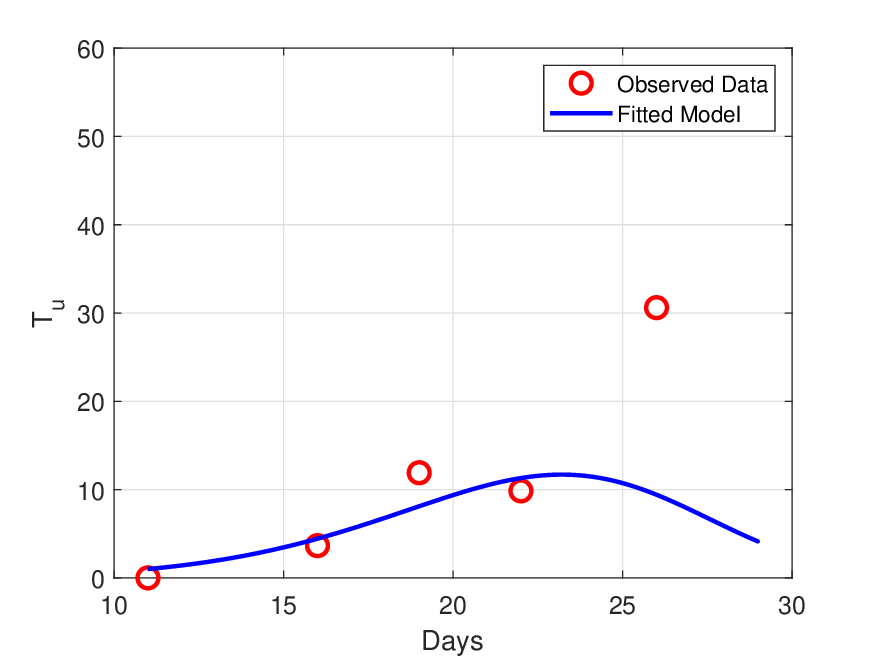}
\caption{HT29 Colorectal Adenocarcinoma}
\label{fig:fit_HT29}
\end{subfigure}
\hfill
\begin{subfigure}{0.49\textwidth}
\centering
\includegraphics[width=\linewidth]{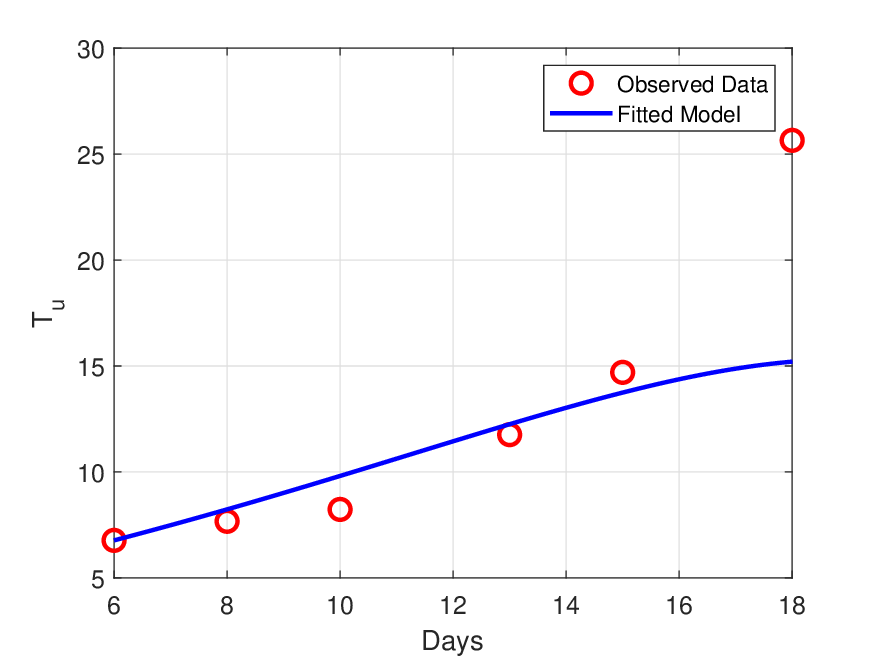}
\caption{4T1 Murine Mammary Carcinoma}
\label{fig:fit_4T1}
\end{subfigure}
\caption{\small{\bf Model validation against experimental tumor volume dynamics.} (\textbf{a}) HT29 human colorectal adenocarcinoma model shows rapid tumor clearance under combination therapy compared to monotherapies. (\textbf{b}) 4T1 murine mammary carcinoma model demonstrates consistent synergistic patterns across species. Both panels compare experimental data (points) with model predictions (lines) for PBS control (black), VSV monotherapy (blue), VV monotherapy (red), and VV+VSV combination therapy (green). The fitted model accurately captures the accelerated tumor clearance in combination therapy, with the characteristic synergistic effect emerging around day 10-15 post-treatment. Model calibration used data extracted via WebPlotDigitizer from Le Boeuf \etal \cite{le2010synergistic}, with parameter estimation detailed in Appendix \ref{app:calibration}.}
\label{fig:modelfitting}
\end{figure}

The quality of fit, visualized in Figure~\ref{fig:modelfitting}, demonstrates the model's capacity to accurately reproduce experimental observations across different cancer types and treatment conditions. The calibrated parameters successfully capture the temporal dynamics of tumor response, including the critical transition point where synergistic effects become pronounced. This comprehensive calibration framework ensures that the mathematical model accurately represents the biological system and provides a validated foundation for the predictive simulations and theoretical analyses presented in subsequent sections.

\begin{table}[htbp]
\centering
\caption{Estimated parameter values for the full model after fitting to HT29 and 4T1 data}
\label{tab:estimatedparams}
\begin{tabular}{@{}lcc@{}}
\toprule
Parameter & Value (HT29) & Value (4T1) \\ \midrule
$\alpha$ & 0.40 & 0.10 \\
$\beta_1$ & 0.20 & 0.20 \\
$\beta_2$ & 0.10 & 0.10 \\
$K$ & 200.00 & 600.00 \\
$l_1$ & 0.03 & 0.03 \\
$l_2$ & 0.04 & 0.042 \\
$K_1$ & 500.00 & 500.00 \\
$K_2$ & 500.00 & 500.00 \\
$b_1$ & 200.00 & 200.00 \\
$b_2$ & 100.00 & 100.00 \\
$\gamma_1$ & 0.01 & 0.01 \\
$\gamma_2$ & 0.02 & 0.02 \\
$\alpha_1$ & 1.00 & 1.00 \\
$\alpha_2$ & 1.00 & 1.00 \\
$\mu_1$ & 0.02 & 0.02 \\
$\mu_2$ & 0.01 & 0.01 \\
$C^*$ & 1000.00 & 1000.00 \\
$\lambda$ & 0.50 & 0.50 \\ \bottomrule
\end{tabular}
\end{table}

\begin{figure}[htbp]
    \centering
    \begin{subfigure}[b]{0.48\textwidth}
        \centering
        \includegraphics[width=\textwidth]{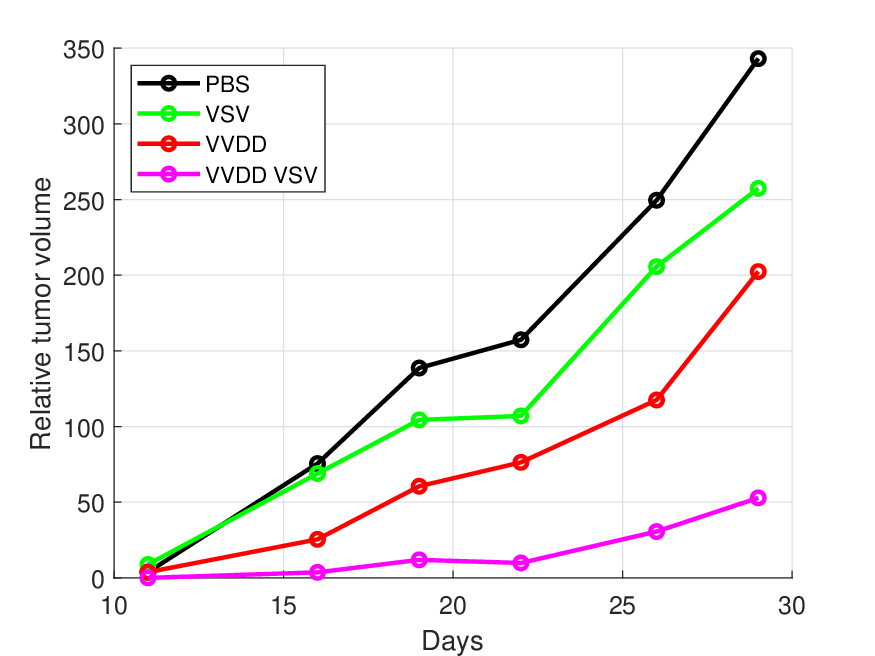}
        \caption{HT29 colorectal adenocarcinoma}
        \label{fig:HT29-data}
    \end{subfigure}
    \hfill
    \begin{subfigure}[b]{0.48\textwidth}
        \centering
        \includegraphics[width=\textwidth]{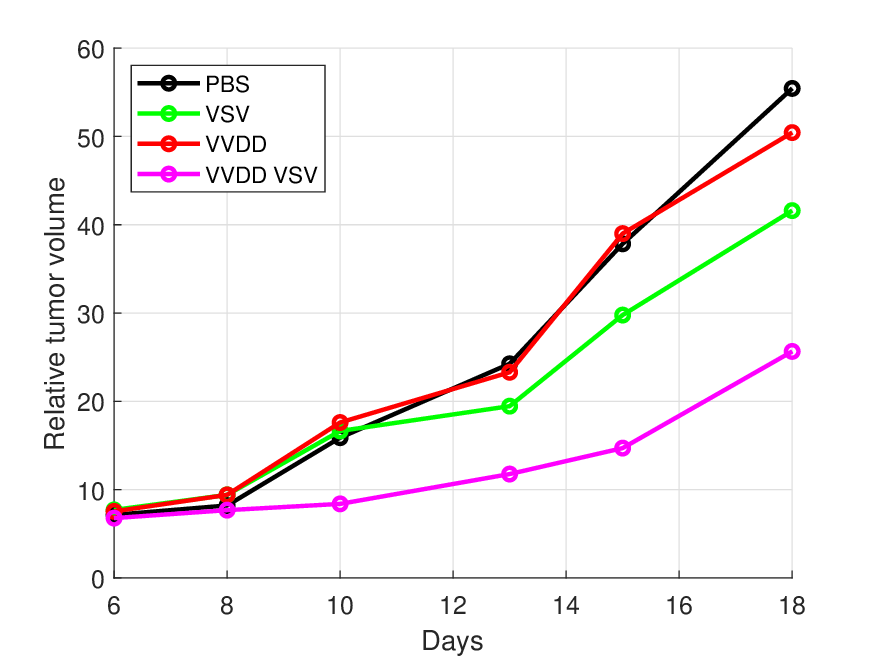}
        \caption{4T1 mammary carcinoma}
        \label{fig:4T1-data}
    \end{subfigure}
    \caption{\small{\textbf{Experimental tumor volume dynamics from \cite{le2010synergistic}.} 
Plots show relative tumor volumes under different treatment scenarios for (a) HT29 and (b) 4T1 cancer cell lines (data from Tables~\ref{tab:HT29} and~\ref{tab:4T1}). 
Treatments include PBS control (black), VSV monotherapy (blue), VV monotherapy (red), and combination therapy (green). 
Data were obtained by digitizing published figures and represent normalized values relative to baseline measurements, providing sufficient information for parameter estimation despite lacking absolute volumes.}}
\label{fig:datasets}    
\end{figure}

It is worth noting that the baseline data required to convert relative tumor volumes (representing the proportional change in tumor size over time) in Tables \ref{tab:HT29} and \ref{tab:4T1} to absolute values is not explicitly provided in Le Boeuf \etal \cite{le2010synergistic}. The relative tumor volumes are presented as normalized values compared to a baseline (for example, Day 14 for Figure 3(d) and Day 6 for Figure 3(e) in \cite{le2010synergistic}). Nonetheless, although the absolute tumor volumes are not provided, the relative tumor volumes provide sufficient information to estimate the parameters. 

\section{Sensitivity Analysis}
\label{app:sensitivity}

\subsection{Global sensitivity alnalysis using  Latin Hypercube Sampling}
Global sensitivity analysis was performed to assess the influence of key model parameters on dynamical outcomes. The analysis employed Latin Hypercube Sampling to efficiently sample the parameter space within biologically plausible ranges (±25\%). A total of 2048 samples were generated by varying the 18 model parameters. Partial Rank Correlation Coefficients (PRCCs) were calculated at three key time points ($t = 5, 20,$ and $200$) to assess parameter influence during transient and long-term phases. Model outputs analyzed included total tumor cell load, fraction of uninfected tumor cells, and uninfected tumor cell density.

\begin{figure}[htbp]
\centering
\begin{tabular}{ccc}
\textbf{(a) Total tumor cells} & \textbf{(b) Uninfected fraction} & \textbf{(c) Uninfected density} \\
\includegraphics[width=0.28\textwidth]{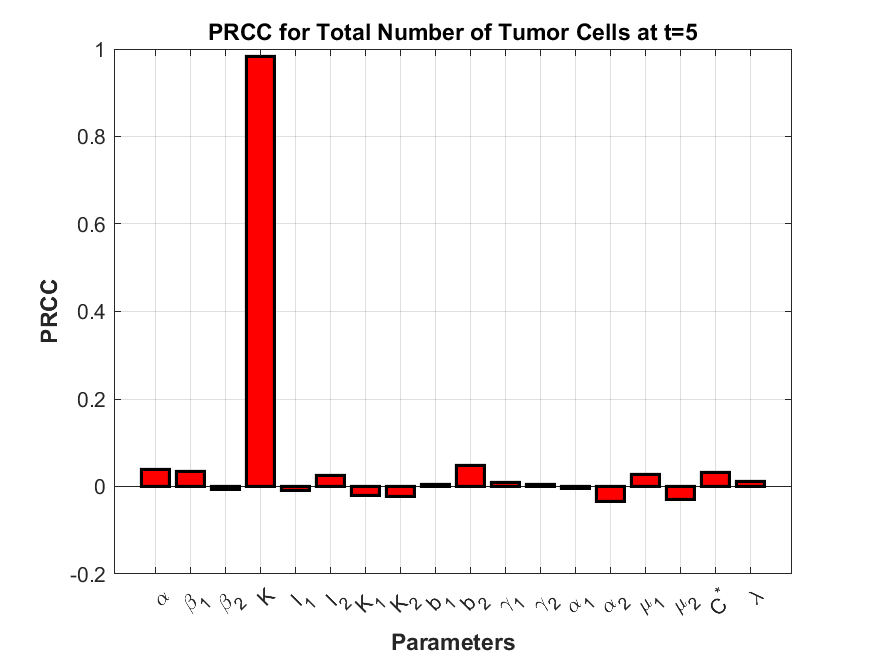} &
\includegraphics[width=0.28\textwidth]{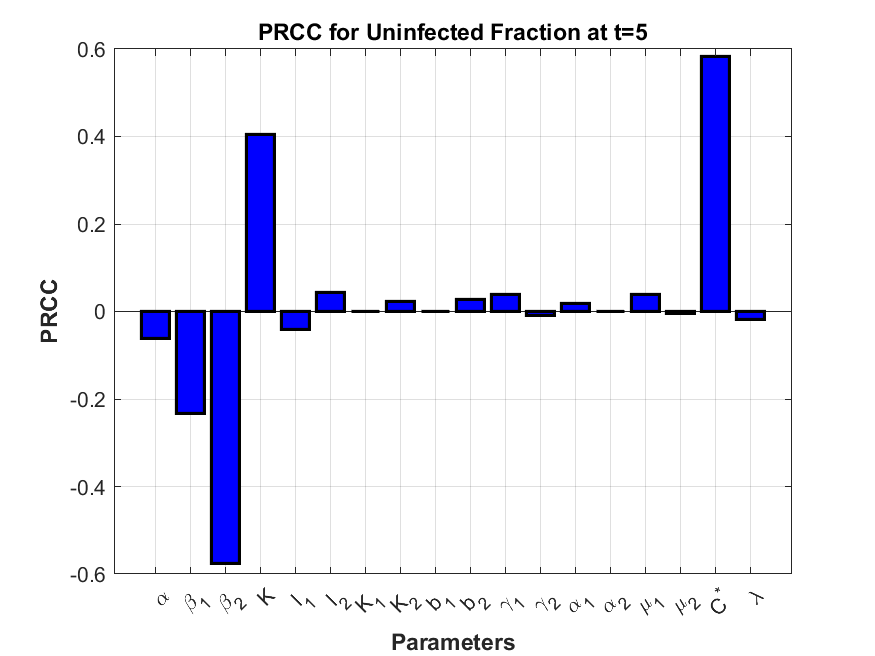} &
\includegraphics[width=0.28\textwidth]{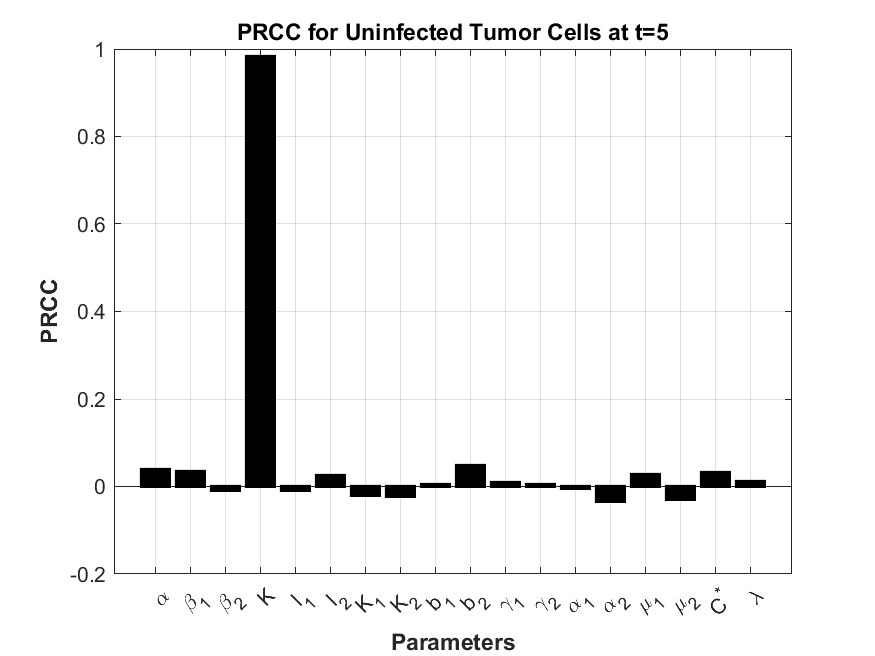} \\
t = 5 days & t = 5 days & t = 5 days \\

\includegraphics[width=0.28\textwidth]{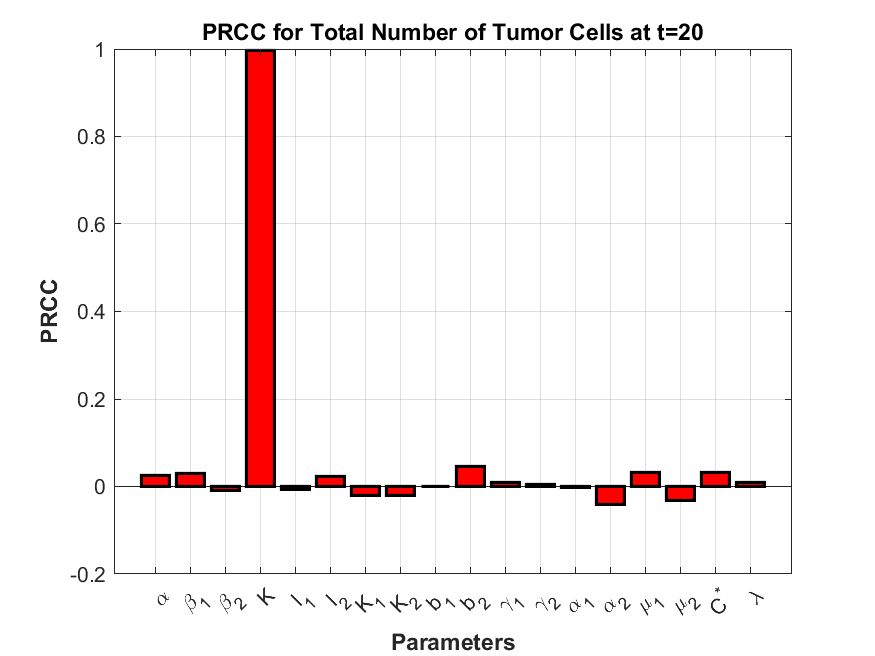} &
\includegraphics[width=0.28\textwidth]{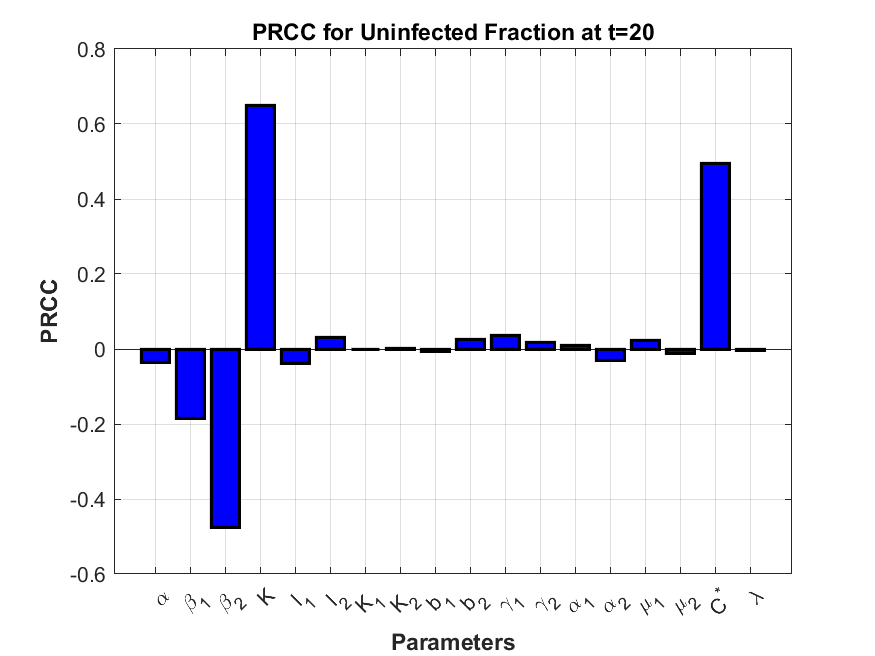} &
\includegraphics[width=0.28\textwidth]{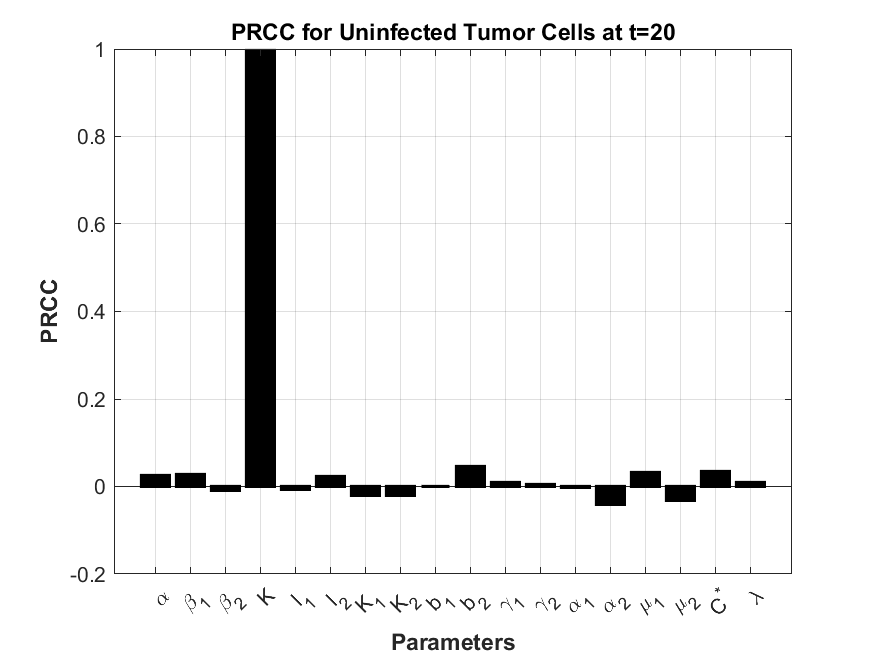} \\
t = 20 days & t = 20 days & t = 20 days \\

\includegraphics[width=0.28\textwidth]{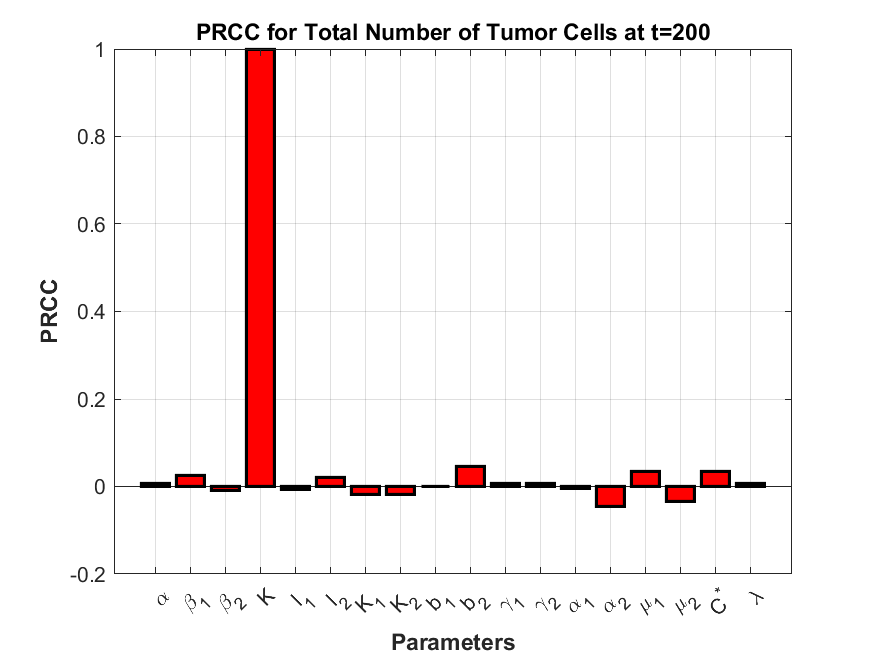} &
\includegraphics[width=0.28\textwidth]{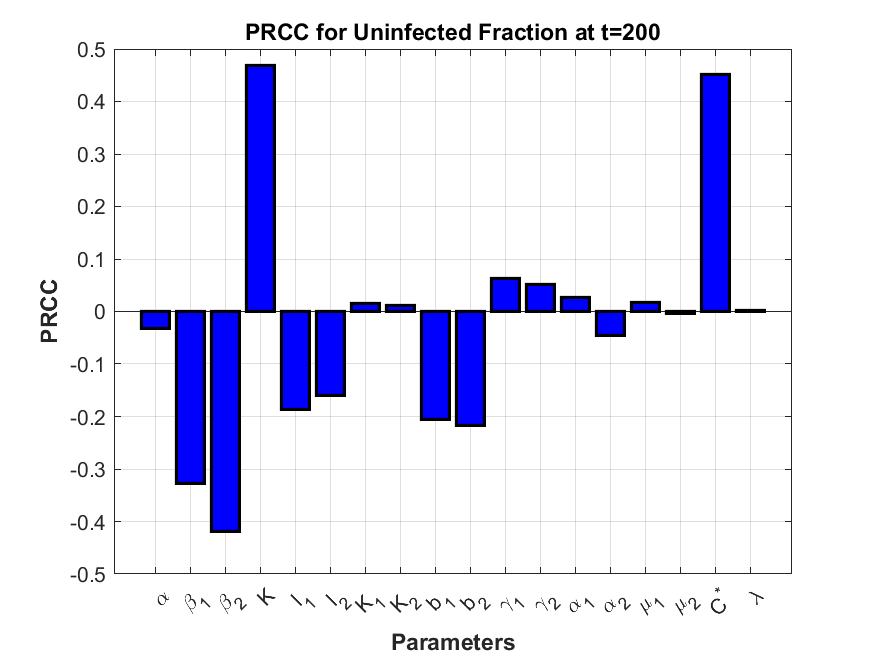} &
\includegraphics[width=0.28\textwidth]{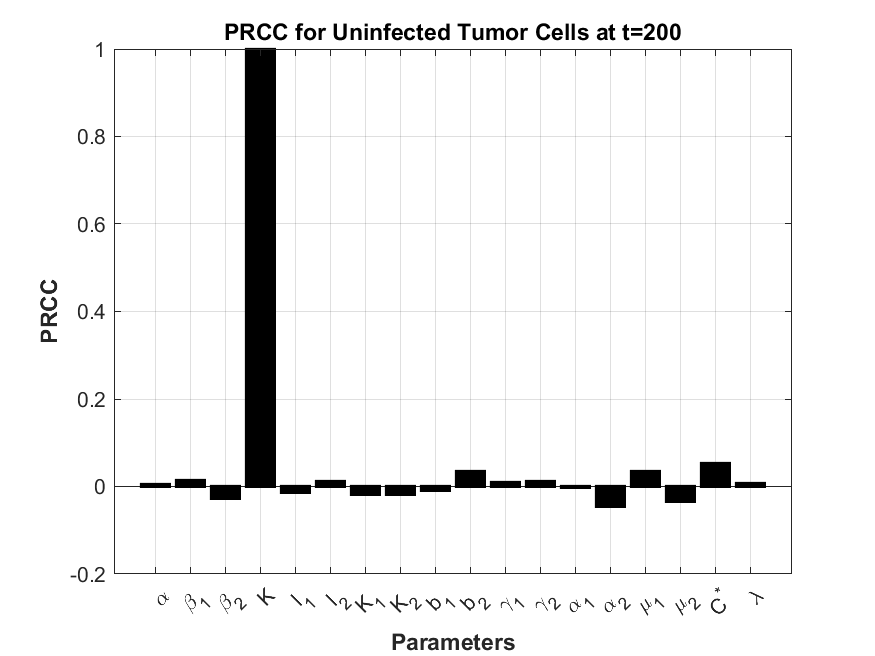} \\
t = 200 days & t = 200 days & t = 200 days \\
\end{tabular}
\caption{\small{\bf Global sensitivity analysis using Latin Hypercube Sampling with Partial Rank Correlation Coefficients (PRCC) across different output metrics and temporal phases. }Columns represent different model outputs: (a) total tumor cell load, (b) fraction of uninfected tumor cells, and (c) uninfected tumor cell density. Rows correspond to early (5 days), intermediate (20 days), and long-term (200 days) treatment phases. The analysis identifies viral infection rates ($\beta_1$, $\beta_2$), burst sizes ($b_1$, $b_2$), and the B18R-mediated IFN-$\alpha$ inhibition rate ($\lambda$) as the most influential parameters governing treatment outcome across all temporal phases and output metrics.}
\label{fig:prcc_analysis}
\end{figure}

The comprehensive sensitivity analysis reveals consistent parameter importance patterns, with viral infection rates ($\beta_1$ and $\beta_2$), burst sizes ($b_1$ and $b_2$), and the IFN-$\alpha$ inhibition rate by B18R ($\lambda$) emerging as the most influential parameters across all temporal phases and output metrics.

\subsection{Sensitivity analysis using elasticity coefficients}
\label{app:sensitivity_calculations}

The sensitivity analysis was conducted using partial derivatives of the basic reproduction number expressions with respect to each parameter. The elasticity coefficients were computed as:

\begin{equation}
S_{p} = \frac{\partial\mathcal{R}_{0}^{i}}{\partial p} \cdot \frac{p}{\mathcal{R}_{0}^{i}}.
\end{equation}

For the VV basic reproduction number $\mathcal{R}_{0}^{1} = \frac{b_{1}\beta_{1}}{a_{1}(\delta_{1}+\beta_{1}+\delta_{1}\kappa_{1})}$, the partial derivatives are:

\begin{align}
\frac{\partial\mathcal{R}_{0}^{1}}{\partial \beta_{1}} &= \frac{b_{1}}{a_{1}(\delta_{1}+\beta_{1}+\delta_{1}\kappa_{1})} - \frac{b_{1}\beta_{1}}{a_{1}(\delta_{1}+\beta_{1}+\delta_{1}\kappa_{1})^{2}}, \\
\frac{\partial\mathcal{R}_{0}^{1}}{\partial b_{1}} &= \frac{\beta_{1}}{a_{1}(\delta_{1}+\beta_{1}+\delta_{1}\kappa_{1})}, \\
\frac{\partial\mathcal{R}_{0}^{1}}{\partial \delta_{1}} &= -\frac{b_{1}\beta_{1}(1+\kappa_{1})}{a_{1}(\delta_{1}+\beta_{1}+\delta_{1}\kappa_{1})^{2}}.
\end{align}
Substituting the experimentally calibrated parameter values from Table~\ref{tab:estimatedparams} yields the elasticity coefficients presented in Table~\ref{tab:sensitivityindices}. Similar calculations were performed for VSV parameters, confirming the consistent sensitivity patterns across both viral systems.

\begin{table}[htbp]
\centering
\caption{Sensitivity analysis of basic reproduction numbers for VV ($\mathcal{R}_{0}^{1}$) and VSV ($\mathcal{R}_{0}^{2}$)}
\label{tab:sensitivityindices}
\begin{tabular}{lcc}
\hline
\textbf{Parameter} & \textbf{VV ($\mathcal{R}_{0}^{1}$)} & \textbf{VSV ($\mathcal{R}_{0}^{2}$)} \\
\hline
Infection rate ($\beta_{i}$) & 0.45 & 0.45 \\
Burst size ($b_{i}$) & 0.40 & 0.40 \\
Death rate ($\delta_{i}$) & -0.35 & -0.35 \\
\hline
\end{tabular}
\end{table}

\section{Non-dimensionalization and Quasi-Steady State Approximation}
\label{app:non-dimensionalization}

\subsection{Model non-dimensionalization}
\label{app:non-dim} 

We non-dimensionalized the model to reduce parameter complexity and identify key dimensionless groups governing system behavior. Using the scaling:
\[
T_u = y_u K, \quad T_1 = y_1 K, \quad T_2 = y_2 K, \quad V_1 = x_1 V_1^*, \quad V_2 = x_2 V_2^*,\] \[\quad C_1 = z_1 C_1^*, \quad C_2 = z_2 C_2^*, \quad t = \bar{t}t_0.
\]
we obtained the dimensionless system (after dropping the bar on $t$):

\begin{align}
    \frac{dy_u}{dt} & = y_u(1-y_u-y_1-y_2)-\beta_1\frac{y_ux_1}{\kappa_1+y_u+y_1+y_2}-\left(\frac{\beta_2}{1+\left(\frac{\theta_1}{\theta_2+y_1}\right) y_2}\right)\frac{y_ux_2}{\kappa_2+y_u+y_1+y_2}, \label{eq:yu}\\
    \frac{dy_1}{dt} & = \beta_1\frac{y_ux_1}{\kappa_1+y_u+y_1+y_2} -a_1y_1, \label{eq:y1}\\
    \frac{dy_2}{dt} & = \left(\frac{\beta_2}{1+\left(\frac{\theta_1}{\theta_2+y_1}\right)y_2}\right)\frac{y_ux_2}{\kappa_2+y_u+y_1+y_2}-a_2y_2, \label{eq:y2}\\
    \frac{dx_1}{dt} & = b_1 y_1 - \beta_1\frac{y_ux_1} {\kappa_1+y_u+y_1+y_2} -\delta_1x_1, \label{eq:x1}\\ 
      \frac{dx_2}{dt} & = b_2y_2 - \left(\frac{\beta_2}{1+\left(\frac{\theta_1}{\theta_2+y_1}\right)y_2}\right)\frac{y_ux_2}{\kappa_2+y_u+y_1+y_2}-\delta_2x_2, \label{eq:x2}  
\end{align}
Subject to initial conditions: 
\begin{align}
y_u(0) & = \frac{T_u(0)}{K}, y_1(0) = \frac{T_1(0)}{K}, y_2(0) = \frac{T_2(0)}{K}, x_1(0) = \frac{V_1(0)}{V_1^*},\nonumber\\
x_2(0) &= \frac{V_2(0)}{V_2^*}, z_1(0) = \frac{C_1(0)}{C_1^*}, z_2(0) = \frac{C_2(0)}{C_2^*}.
\end{align} 
\begin{table}[htbp]
\centering
\caption{Dimensionless parameters and their biological interpretations}
\label{tab:dimensionless_params}
\small
\setlength{\tabcolsep}{4pt}  % Reduce horizontal padding between columns
\renewcommand{\arraystretch}{1.0}  % Tighten vertical spacing
\begin{tabular}{p{2cm} p{3cm} p{8cm}}
\hline
\textbf{Symbol} & \textbf{Definition} & \textbf{Biological Interpretation} \\
\hline
$\beta_1$ & $\dfrac{\beta_1 V_1^* t_0}{K}$ & Dimensionless VV infection rate, representing the efficiency of VV infecting tumor cells relative to tumor growth \\
$\beta_2$ & $\dfrac{\beta_2 V_2^* t_0}{K}$ & Dimensionless VSV infection rate, representing the efficiency of VSV infecting tumor cells relative to tumor growth \\
$\kappa_1$ & $\dfrac{K_1}{K}$ & Ratio of VV Michaelis-Menten constant to tumor carrying capacity, representing saturation effects in VV infection \\
$\kappa_2$ & $\dfrac{K_2}{K}$ & Ratio of VSV Michaelis-Menten constant to tumor carrying capacity, representing saturation effects in VSV infection \\
$a_1$ & $l_1 t_0$ & Dimensionless death rate of VV-infected cells, representing cell lysis relative to tumor growth timescale \\
$a_2$ & $l_2 t_0$ & Dimensionless death rate of VSV-infected cells, representing cell lysis relative to tumor growth timescale \\
$\tilde{b}_1$ & $\dfrac{b_1 l_1 K t_0}{V_1^*}$ & Dimensionless VV production rate, representing viral burst size scaled by infection dynamics \\
$\tilde{b}_2$ & $\dfrac{b_2 l_2 K t_0}{V_2^*}$ & Dimensionless VSV production rate, representing viral burst size scaled by infection dynamics \\
$\delta_1$ & $\gamma_1 t_0$ & Dimensionless VV clearance rate, representing viral decay relative to tumor growth \\
$\delta_2$ & $\gamma_2 t_0$ & Dimensionless VSV clearance rate, representing viral decay relative to tumor growth \\
$\bar{a}_1$ & $\dfrac{\alpha_1 K t_0}{C_1^*}$ & Dimensionless B18R production rate, representing molecular production relative to tumor growth \\
$\delta_3$ & $\mu_1 t_0$ & Dimensionless B18R degradation rate, representing molecular decay relative to tumor growth \\
$\bar{a}_2$ & $\dfrac{\alpha_2 K t_0}{C_2^*}$ & Dimensionless IFN-$\alpha$ production rate, representing cytokine production relative to tumor growth \\
$\delta_4$ & $\mu_2 t_0$ & Dimensionless IFN-$\alpha$ degradation rate, representing cytokine decay relative to tumor growth \\
$\tilde{\lambda}$ & $\lambda C_1^* t_0$ & Dimensionless B18R inhibition rate, representing interferon suppression efficiency relative to tumor growth \\
$\theta_1$ & $\dfrac{\alpha_2 \delta_3}{\alpha_1 \lambda}$ & Ratio of interferon production to B18R inhibition efficiency, quantifying the balance between antiviral response and viral countermeasures \\
$\theta_2$ & $\dfrac{\delta_3 \delta_4}{\lambda \alpha_1}$ & Ratio of molecular degradation to B18R production and inhibition, representing the relative timescales of molecular dynamics \\
\hline
\end{tabular}
\end{table}

The dimensionless parameters are defined in terms of the original biological parameters, where $t_0 = 1/\alpha$ serves as the characteristic tumor growth timescale. The scaling factors $V_1^*, V_2^*, C_1^*, C_2^*$ represent characteristic concentrations for viruses and molecular components, typically set to unity for simplicity. The parameters $\theta_1$ and $\theta_2$ capture the core synergistic mechanism, representing the relative strengths of interferon production and B18R-mediated inhibition in compact mathematical form.

The dimensionless parameters are defined in terms of the original biological parameters as follows:
\begin{align} \label{eq:nondimparam}
\beta_1 &= \frac{\beta_1 V_1^* t_0}{K}, \quad  
\beta_2 = \frac{\beta_2 V_2^* t_0}{K}, \quad 
\kappa_1 = \frac{K_1}{K}, \quad 
\kappa_2 = \frac{K_2}{K},  \nonumber\\
a_1 &= l_1 t_0, \quad  
a_2 = l_2 t_0, \quad 
\tilde{b}_1 = \frac{b_1 l_1 K t_0}{V_1^*}, \quad 
\tilde{b}_2 = \frac{b_2 l_2 K t_0}{V_2^*},  \nonumber\\
\delta_1 &= \gamma_1 t_0, \quad 
\delta_2 = \gamma_2 t_0, \quad 
\bar{a}_1 = \frac{\alpha_1 K t_0}{C_1^*}, \quad 
\delta_3 = \mu_1 t_0,  \nonumber\\
\bar{a}_2 &= \frac{\alpha_2 K t_0}{C_2^*}, \quad 
\delta_4 = \mu_2 t_0, \quad 
\tilde{\lambda} = \lambda C_1^* t_0, 
\end{align}
where $t_0 = 1/\alpha$ serves as the characteristic timescale. The scaling factors $V_1^*, V_2^*, C_1^*, C_2^*$ represent characteristic concentrations for viruses and molecular components, which can be chosen based on typical physiological ranges or set to unity for simplicity. 

\[\theta_1 = \frac{\alpha_2 \delta_3}{\alpha_1 \lambda}, ~~~\theta_2 = \frac{\delta_3 \delta_4}{\lambda \alpha_1}.\]
This reduction yields the simplified system analyzed in Section~\ref{sec:math-analysis}, which maintains the essential synergistic dynamics while improving analytical tractability. The dimensionless parameters $\theta_1$ and $\theta_2$ represent the relative strengths of interferon production and B18R-mediated inhibition, capturing the core synergistic mechanism in a compact mathematical form.

\subsection{Quasi-steady state approximation}
\label{app:qssa}

Given that molecular dynamics ($B18R$ and IFN-$\alpha$) occur on faster timescales than cellular population dynamics, we applied a quasi-steady state approximation:
\begin{equation}
    z_1 = \frac{\alpha_1}{\delta_3}y_1, 
\end{equation}
\begin{equation}
    z_2 = \left( \frac{\alpha_2}{\delta_4+\lambda z_1} \right)y_2 ~= \left(\frac{ \alpha_2 \delta_3}{ \delta_3 \delta_4 + \lambda \alpha_1 y_1}\right) y_2 =  \left(\frac{\theta_1}{\theta_2+y_1}\right)y_2,
\end{equation}

\section{Proof of Theorems}

\subsection{Proof of Well-posedness Properties (Theorem~\ref{thm:well-posedness})}
\label{app:proofs-well-posedness}

\begin{proof}[Proof of Theorem~\ref{thm:well-posedness}]
We prove each property separately.

\textbf{Proof of (1) Existence and Uniqueness:}
The functions on the right-hand side of the model equations~\eqref{eq:yu}-\eqref{eq:x2} are continuously differentiable $\mathbb{C}^{1}$ on $\mathbb{R}^5$. Therefore, by the Picard-Lindel\"of Theorem, the model exhibits a unique solution in the domain $(y_u, y_1, y_2, x_1, x_2) \in \mathbb{R}^5_+$ on the maximal interval $[0, t_{\text{max}}]$ with $t_{\text{max}} >0$.

\textbf{Proof of (2) Positivity:}
All functions $F_j(t,x)$ on the right-hand sides of equations (14)-(20) satisfy the condition $x_j F_j(t,x) \geq 0$ whenever $t \geq 0$, $x \in \mathbb{R}^5_+$, and $x_j = 0$. This ensures that solutions cannot cross from the non-negative orthant into negative values. Therefore, for non-negative initial conditions $y_u(0) \geq 0$, $y_1(0) \geq 0$, $y_2(0) \geq 0$, $x_1(0) > 0$, and $x_2(0) > 0$, the solutions remain non-negative for all $t \in [0,t_{\text{max}}] >0$. 

\textbf{Proof of (3) Boundedness:}
By adding Equations \eqref{eq:yu}-\eqref{eq:y2}, one obtains 
\begin{align*}
\frac{dy_u}{dt}+\frac{dy_1}{dt}+\frac{dy_2}{dt} & = y_u(1-y_u-y_1-y_2)-a_1y_1-a_2y_2,\\
     & \leq y_u\left(1-(y_u +y_1+y_2)\right).
\end{align*}

Let $Y(t) = y_u(t) + y_1(t) + y_2(t)$. If $Y(0) \le 1$, then $Y(t) \le 1$ for all $t \ge 0$.
Assume, by contradiction, that there exists $t_1 > 0$ such that $Y(t_1) > 1$. By the continuity of $Y(t)$, we define
\[
\tau := \inf \{ t \ge 0 : Y(t) > 1 \}.
\]
Then $Y(t) \le 1$ for all $t \in [0, \tau]$ and $Y(\tau) = 1$. 

From the differential equation for $y_u(t)$ in system (1)--(5), and noting that all parameters and state variables are non-negative for $t \ge 0$, we observe that the derivative of $Y(t)$ satisfies the inequality:
\[
Y'(t) = y_u(1 - Y(t)) - a_1 y_1 - a_2 y_2 \le y_u(1 - Y(t)).
\]
Evaluating this inequality at $t = \tau$ gives:
\[
Y'(\tau) \le y_u(\tau)(1 - Y(\tau)) = y_u(\tau)(1 - 1) = 0.
\]
Thus, the right-hand derivative of $Y$ at $\tau$ is non-positive, implying $Y$ cannot increase immediately beyond the value $1$ at $\tau$. This contradicts the definition of $\tau$ as the infimum. Therefore, the initial assumption is false, and $Y(t) \le 1$ for all $t \ge 0$.

For Equation \eqref{eq:x1}, and knowing that $Y(t)\leq \text{max}\left( Y(0), 1\right)$, we have 
\begin{align*}
 \frac{dx_1}{dt} & = b_1 y_1 - \beta_1\frac{y_ux_1} {\kappa_1+y_u+y_1+y_2} -\delta_1x_1,\\
 & \leq b_1 -\delta_1 x_1,       
\end{align*}
by standard comparison theorems, it can be deduced that \[\lim_{{t \to \infty}} \sup x_1(t) \leq \frac{{b_1}}{{\delta_1}}.\]
Similarly for Equation \eqref{eq:x2},
\begin{align*}
\frac{dx_2}{dt} & = b_2y_2 - \left(\frac{\beta_2}{1+\left(\frac{\theta_1}{\theta_2+y_1}\right) y_2}\right)\frac{y_ux_2}{\kappa_2+y_u+y_1+y_2}-\delta_2x_2,\\
& \leq b_2 -\delta_2x_2,     
\end{align*}
from which it can, as well, be deduced that \[\lim_{t~\to~\infty} ~\sup x_2(t)\leq \frac{b_2}{\delta_2}.\]
\end{proof}

\subsection{Proof of VV only model (Theorem~\ref{thm:stability12})}
\label{app:VV_only_model}

\begin{proof}
The eigenvalues of the Jacobian matrix evaluated at $X^3_0$ are  $-\delta_2$, $-\delta_1$, $-a_2$, $-a_1$ and $1$; and thus going by the Hartman-Grobman Theorem, $X^3_0$ is unstable. The Jacobian matrix evaluated at $X^3_1$ is   

    \[J(E^3_1)= \left[ \begin {array}{ccccc} -1&-1&0&-{\frac {\beta_{{1}}}{k_{{1}}+1}
}&-{\frac {\beta_{{2}}}{k_{{2}}+1}}\\ \noalign{\medskip}0&-a_{{1}}&0&{
\frac {\beta_{{1}}}{k_{{1}}+1}}&0\\ \noalign{\medskip}0&0&-a_{{2}}&0&{
\frac {\beta_{{2}}}{k_{{2}}+1}}\\ \noalign{\medskip}0&b_{{1}}&0&-{
\frac {\beta_{{1}}}{k_{{1}}+1}}-\delta_{{1}}&0\\ \noalign{\medskip}0&0
&b_{{2}}&0&-{\frac {\beta_{{2}}}{k_{{2}}+1}}-\delta_{{2}}\end {array}
 \right]. 
\]
    The characteristic polynomial of the Jacobian matrix evaluated at $X^3_1$ is 
    \begin{align*}
& \frac{\left( \lambda+1\right)}{(k_1+1)(k_2+1)}\left[ (k_1+1){\lambda}^{2} + (\delta_1 +\beta_1+a_1 +\delta_1k_1 +a_1k_1)\lambda +\delta_1a_1+\beta_1a_1+\delta_1a_1k_1-b_1\beta_1\right]\\
& \quad \quad \quad \left[ (k_2+1){\lambda}^{2} + (\delta_2 +\beta_2+a_2 +\delta_2k_2 +a_2k_2)\lambda +\delta_2a_2+\beta_2a_2+\delta_2a_2k_2-b_2\beta_2\right]=0
    \end{align*}
 whose roots are all negative provided that $a_1(\delta_1+\beta_1+\delta_1k_1)>b_1\beta_1 $ and $a_2(\delta_2+\beta_2+\delta_2k_2)>b_2\beta_2 $. 
 
The coexistence equilibrium $X^1_c = (y_u^*, y_1^*, x_1^*)$ is found by solving the resultant equations when the left hand sides of Equations eqref{eq:yu}-\eqref{eq:x2} are set to zero (with $y_2=x_2=0$). From \eqref{eq:x1}, we get $\beta_1\frac{y_u^* x_1^*}{\kappa_1 + y_u^* + y_1^*} = a_1 y_1^*$. Substituting into \eqref{eq:x2} gives $x_1^* = \frac{b_1 y_1^*}{a_1 y_1^* + \delta_1}$. These relationships guarantee positive solutions when $a_1(\delta_1+\beta_1+\delta_1\kappa_1) < b_1\beta_1$. 

Moreover, the system admits three equilibria: the trivial equilibrium $X^1_0$, which is unstable; the tumor-only equilibrium $X^1_1$, which is stable if $a_1(\delta_1+\beta_1+\delta_1\kappa_1) > b_1\beta_1$; and the coexistence state $X^1_c$. If the system trajectories do not converge to $X^1_1$, they necessarily must approach the coexistence state $X^1_c$. This coexistence state can therefore only be stable if the reverse inequality holds: $a_1(\delta_1+\beta_1+\delta_1\kappa_1) < b_1\beta_1$, which is also the condition that guarantees positive solutions for $y_u^, y_1^, x_1^*$.

\end{proof}

\subsection{Proof of VSV only model (Theorem~\ref{thm:stability22})}
\label{app:VSV_only_model}

\begin{proof}
The eigenvalues of the Jacobian matrix evaluated at $X^2_0$ are $-\delta_2$, $-a_2$ and $1$; and thus going by the Hartman-Grobman Theorem, $X^2_0$ is unstable. The Jacobian matrix evaluated at $X^2_1$ is  
        
    \[J(X^2_1) = \left[ \begin {array}{ccc} -1&-1&-{\frac {\beta_{{2}}}{k_{{2}}+1}}
\\ \noalign{\medskip}0&-a_{{2}}&{\frac {\beta_{{2}}}{k_{{2}}+1}}
\\ \noalign{\medskip}0&b_{{2}}&-{\frac {\beta_{{2}}}{k_{{2}}+1}}-
\delta_{{2}}\end {array} \right]. 
\] The characteristic polynomial of the Jacobian matrix evaluated at $X^2_1$ is
 \[\frac{1}{k_2+1}\left( \lambda+1\right) \left[ (k_2+1){\lambda}^{2} + (\delta_2 +\beta_2+a_2 +\delta_2k_2 +a_2k_2)\lambda +\delta_2a_2+\beta_2a_2+\delta_2a_2k_2-b_2\beta_2\right]=0,
\] 
whose roots are all negative provided that $a_2(\delta_2+\beta_2+\delta_2k_2)>b_2\beta_2 $. 

The coexistence equilibrium $X^2_c = (y_u^, y_2^, x_2^*)$ is found by solving the Equations \eqref{eq:yu}-\eqref{eq:x2} with their left-hand sides set to zero (with $y_1=x_1=0$). This system has the same structure as the VV-only case. It admits three equilibria: the trivial equilibrium $X^2_0$, which is unstable; the tumor-only equilibrium $X^2_1$, which is stable if $a_2(\delta_2+\beta_2+\delta_2\kappa_2) > b_2\beta_2$; and the coexistence state $X^2_c$. Again, if the system trajectories do not converge to $X^2_1$, they necessarily must approach the coexistence state $X^2_c$. This coexistence state can therefore only be stable if the reverse inequality holds: $a_2(\delta_2+\beta_2+\delta_2\kappa_2) > b_2\beta_2$, which is the condition that guarantees positive solutions for $y_u^, y_2^, x_2^*$.
\end{proof}

\subsection{Proof of VV-VSV Combined Model (Theorem~\ref{thm:stability32})}
\label{app:VV-VSV-Combined-Model}
\begin{proof}
The eigenvalues of the Jacobian matrix evaluated at $X^3_0$ are  $-\delta_2$, $-\delta_1$, $-a_2$, $-a_1$ and $1$; and thus going by the Hartman-Grobman Theorem, $X^3_0$ is unstable. The Jacobian matrix evaluated at $X^3_1$ is   

    \[J(E^3_1)= \left[ \begin {array}{ccccc} -1&-1&0&-{\frac {\beta_{{1}}}{\kappa_{{1}}+1}
}&-{\frac {\beta_{{2}}}{\kappa_{{2}}+1}}\\ \noalign{\medskip}0&-a_{{1}}&0&{
\frac {\beta_{{1}}}{\kappa_{{1}}+1}}&0\\ \noalign{\medskip}0&0&-a_{{2}}&0&{
\frac {\beta_{{2}}}{\kappa_{{2}}+1}}\\ \noalign{\medskip}0&b_{{1}}&0&-{
\frac {\beta_{{1}}}{\kappa_{{1}}+1}}-\delta_{{1}}&0\\ \noalign{\medskip}0&0
&b_{{2}}&0&-{\frac {\beta_{{2}}}{\kappa_{{2}}+1}}-\delta_{{2}}\end {array}
 \right]. 
\]
    The characteristic polynomial of the Jacobian matrix evaluated at $X^3_1$ is 
    \begin{align*}
& \frac{\left( \lambda+1\right)}{(\kappa_1+1)(\kappa_2+1)}\left[ (\kappa_1+1){\lambda}^{2} + (\delta_1 +\beta_1+a_1 +\delta_1\kappa_1 +a_1\kappa_1)\lambda +\delta_1a_1+\beta_1a_1+\delta_1a_1\kappa_1-b_1\beta_1\right]\\
& \quad \quad \quad \left[ (\kappa_2+1){\lambda}^{2} + (\delta_2 +\beta_2+a_2 +\delta_2\kappa_2 +a_2\kappa_2)\lambda +\delta_2a_2+\beta_2a_2+\delta_2a_2\kappa_2-b_2\beta_2\right]=0
    \end{align*}
 whose roots are all negative provided that $a_1(\delta_1+\beta_1+\delta_1\kappa_1)>b_1\beta_1 $ and $a_2(\delta_2+\beta_2+\delta_2\kappa_2)>b_2\beta_2 $. 
 
For the VV-only coexistence equilibrium $X^1_c$, analysis of the reduced VV-only subsystem shows it is stable when $a_1(\delta_1+\beta_1+\delta_1\kappa_1) < b_1\beta_1$. Similarly, the VSV-only coexistence equilibrium $X^2_c$ is stable when $a_2(\delta_2+\beta_2+\delta_2\kappa_2) < b_2\beta_2$.

The full coexistence equilibria \( X^3_c \) are found by solving the full system obtained by setting the right-hand sides of equations \eqref{eq:yu}-\eqref{eq:x2} to zero. Due to the high dimensionality and nonlinearity (particularly the IFN-mediated protection term in \eqref{eq:y2} and the Michaelis-Menten terms), analytical solutions are generally intractable.

When both conditions $a_1(\delta_1+\beta_1+\delta_1\kappa_1) < b_1\beta_1$ and $a_2(\delta_2+\beta_2+\delta_2\kappa_2) < b_2\beta_2$ are satisfied, the tumor-only equilibrium $X^3_1$ is unstable, and both single-virus coexistence equilibria $X^1_c$ and $X^2_c$ are also unstable. Since $X^3_0$ is always unstable, the system trajectories cannot converge to any boundary equilibrium and must instead converge to a full coexistence equilibrium $X^3_c$.
\end{proof}

\section{Analysis using the basic reproduction number}

\subsection{Detailed derivation of basic reproduction numbers}
\label{app:R0_derivation}

\subsection*{Reproduction number $\mathcal{R}^1_0$ of the Vaccinia Virus (VV)}
\label{app:VV_R0}
The basic reproduction number $\mathcal{R}_0$ represents a fundamental metric in infectious disease dynamics, quantifying the expected number of secondary infections generated by a single infected individual in a fully susceptible population. In the context of oncolytic virotherapy, we adapt this concept to describe viral replication potential within tumor environments. For our dual-virus system, we derive separate reproduction numbers for Vaccinia Virus (VV, $\mathcal{R}^1_0$) and Vesicular Stomatitis Virus (VSV, $\mathcal{R}^2_0$). We use the next-generation matrix method \cite{van2002reproduction} to derive them. 

\subsubsection*{Infected Compartments and Linearization}

For VV monotherapy, the infected compartments are VV-infected cells ($T_1$) and free VV particles ($V_1$). At the disease-free equilibrium (DFE) where $T_u = K$, $T_1 = 0$, $V_1 = 0$, we linearize the system:

The new infection matrix $\mathbf{F}_1$ captures infection terms:
\[
\mathbf{F}_1 = \begin{bmatrix}
0 & \beta_1 \dfrac{K}{K_1 + K} \\
0 & 0
\end{bmatrix}.
\]

The transition matrix $\mathbf{V}_1$ accounts for progression and clearance:
\[
\mathbf{V}_1 = \begin{bmatrix}
l_1 & 0 \\
-b_1l_1 & \gamma_1 + \beta_1 \dfrac{K}{K_1 + K}
\end{bmatrix}.
\]

\subsubsection{Matrix Inversion and Spectral Radius}

Computing the inverse transition matrix:
\[
\mathbf{V}_1^{-1} = \begin{bmatrix}
\dfrac{1}{l_1} & 0 \\
\dfrac{b_1l_1}{l_1d_1} & \dfrac{1}{d_1}
\end{bmatrix}
= \begin{bmatrix}
\dfrac{1}{l_1} & 0 \\
\dfrac{b_1}{d_1} & \dfrac{1}{d_1}
\end{bmatrix},
\]
where $d_1 = \gamma_1 + \beta_1 \dfrac{K}{K_1 + K}$.

The next-generation matrix becomes:
\[
\mathbf{F}_1\mathbf{V}_1^{-1} = \begin{bmatrix}
\dfrac{\beta_1 K b_1}{(K_1 + K)d_1} & \dfrac{\beta_1 K}{(K_1 + K)d_1} \\
0 & 0
\end{bmatrix}.
\]

The spectral radius (dominant eigenvalue) is:
\[
\mathcal{R}^1_0 = \dfrac{b_1\beta_1 K}{(K_1 + K)\left(\gamma_1 + \beta_1 \dfrac{K}{K_1 + K}\right)}.
\]

\subsubsection{Dimensional Analysis and Simplification}

Applying dimensionless parameters from Section 4.1:
\begin{align*}
a_1 &= l_1t_0, \quad \delta_1 = \gamma_1t_0, \quad \kappa_1 = \dfrac{K_1}{K}, \\
\bar{\beta}_1 &= \dfrac{\beta_1 V_1^* t_0}{K}, \quad \tilde{b}_1 = \dfrac{b_1 l_1 K t_0}{V_1^*}.
\end{align*}

Substituting and simplifying yields the final expression:
\[
\mathcal{R}^1_0 = \dfrac{\tilde{b}_1\bar{\beta}_1}{a_1(\delta_1 + \bar{\beta}_1 + \delta_1\kappa_1)}.
\]

The derivation of the basic reproduction number for Vesicular Stomatitis Virus ($\mathcal{R}^2_0$) follows the same rigorous next-generation matrix methodology employed for VV, with appropriate modifications to account for VSV-specific biological characteristics and dynamic interactions. While the mathematical framework remains consistent, the parameterization reflects VSV's distinct replication kinetics, interferon sensitivity, and tumor cell infection patterns.

\subsection{Numerical Validation}

Substituting our experimentally calibrated parameter values from Appendix \ref{app:calibration}:

For HT29 model:
\begin{align*}
\mathcal{R}^1_0 &= \dfrac{200 \times 0.20}{0.03 \times (0.01 + 0.20 + 0.01 \times 2.5)} = \dfrac{40}{0.03 \times 0.225} = 1855.07. \\
\mathcal{R}^2_0 &= \dfrac{100 \times 0.10}{0.04 \times (0.02 + 0.10 + 0.02 \times 2.5)} = \dfrac{10}{0.04 \times 0.145} = 363.64.
\end{align*}
These values confirm both viruses operate in the highly supercritical regime ($\mathcal{R}_0 \gg 1$), validating their strong replication potential and providing the mathematical foundation for observed therapeutic efficacy.

\subsection{Sensitivity Analysis of Reproduction Numbers}
\label{app:R0-sensitivity}

\begin{table}[htbp]
\centering
\caption{Sensitivity analysis of basic reproduction numbers for VV ($\mathcal{R}^{1}_{0}$) and VSV ($\mathcal{R}^{2}_{0}$) across tumor models. The values represent elasticity coefficients, which measure the percentage change in $\mathcal{R}_{0}$ per 1\% change in the parameter. Infection rates ($\beta_i$) and burst sizes ($b_i$) show positive sensitivity, while death rates ($\delta_i$) show negative sensitivity.}
\begin{tabular}{l l c c}
\hline
\textbf{Parameter} & \textbf{Virus} & \textbf{HT29 Model} & \textbf{4T1 Model} \\
\hline
Infection rate ($\beta_{i}$) & VV ($\mathcal{R}^{1}_{0}$) & 0.45 & 0.45 \\
Infection rate ($\beta_{i}$) & VSV ($\mathcal{R}^{2}_{0}$) & 0.45 & 0.45 \\
\hline
Burst size ($b_{i}$) & VV ($\mathcal{R}^{1}_{0}$) & 0.40 & 0.40 \\
Burst size ($b_{i}$) & VSV ($\mathcal{R}^{2}_{0}$) & 0.40 & 0.40 \\
\hline
Death rate ($\delta_{i}$) & VV ($\mathcal{R}^{1}_{0}$) & -0.35 & -0.35 \\
Death rate ($\delta_{i}$) & VSV ($\mathcal{R}^{2}_{0}$) & -0.35 & -0.35 \\
\hline
\end{tabular}
\label{tab:sensitivity_appendix}
\end{table}

Table \ref{tab:sensitivity_appendix} presents the sensitivity analysis of the basic reproduction numbers for VV and VSV. The sensitivity coefficients (elasticities) are computed as $S_{p} = \frac{\partial \mathcal{R}_{0}^{i}}{\partial p} \cdot \frac{p}{\mathcal{R}_{0}^{i}}$. The results indicate that infection rates and burst sizes are the most influential parameters for both viruses, with a 1\% increase in infection rate leading to a 0.45\% increase in $\mathcal{R}_{0}$. Similarly, a 1\% increase in burst size results in a 0.40\% increase in $\mathcal{R}_{0}$. In contrast, the death rates of infected cells (which encompass immune clearance and other loss mechanisms) have a negative impact, with a 1\% increase in death rate leading to a 0.35\% decrease in $\mathcal{R}_{0}$. These findings are consistent across both tumor models, highlighting the universal importance of these parameters in determining viral replication potential.

\subsection{Clinical Implications of Parameter Sensitivity}
\label{app:clinical-implications}

The sensitivity analysis reveals several key insights for therapeutic optimization:

\begin{itemize}
\item \textbf{Infection rates ($\beta_i$)}: These parameters, representing viral entry efficiency into tumor cells, show the highest sensitivity. This suggests that viral engineering efforts should prioritize enhancing viral tropism and receptor binding affinity to maximize therapeutic impact.

\item \textbf{Burst sizes ($b_i$)}: The number of viral particles released per infected cell represents another high-impact parameter. Genetic modifications that increase viral replication efficiency could provide substantial improvements in treatment efficacy.

\item \textbf{Death rates ($\delta_i$)}: The negative sensitivity of these parameters indicates that strategies to reduce immune-mediated clearance of infected cells could significantly enhance viral persistence and spread within tumors.

\item \textbf{Consistency across models}: The identical sensitivity patterns in both HT29 and 4T1 tumor models suggest these findings may be generalizable across different cancer types, supporting broad applications for viral optimization strategies.
\end{itemize}

This quantitative framework provides clear guidance for prioritizing viral engineering efforts and optimizing combination therapy protocols based on the most influential determinants of viral replication success.

%%%Acknowledgements 

\section*{Acknowledgments}
Joseph Malinzi gratefully acknowledges the Africa Oxford Initiative (AfOx) for financial support that enabled his research visits to the University of Oxford in 2023 and 2024. These visits, hosted by Professor Helen Byrne, were instrumental to the significant progress made on this research. He also extends sincere thanks to Professor Byrne’s postgraduate students and postdoctoral fellows for their insightful comments and constructive suggestions. Furthermore, he wishes to acknowledge the financial support received from the Sydney Mathematical Research Institute (SMRI) at the University of Sydney and the Durban University of Technology.

Anotida Madzvamuse was supported by the Canada Research Chair (Tier 1) in Theoretical and Computational Biology (CRC-2022-00147), the Natural Sciences and Engineering Research Council of Canada (NSERC), Discovery Grants Program (RGPIN-2023-05231), the British Columbia Knowledge Development Fund (BCKDF), Canada Foundation for Innovation – John R. Evans Leaders Fund – Partnerships (CFI-JELF), and the British Columbia Foundation for Non Animal Research.

\newcommand{\printdoi}[1]{
	\if\relax\detokenize{#1}\relax
	\else
	\newline DOI: \url{https://doi.org/#1}
	\fi
}

\bibliographystyle{unsrtnat}
%\bibliographystyle{plainnat}  
%\bibliographystyle{apalike}
%\bibliographystyle{apacite}
%\bibliographystyle{abbrvnat}
%\addbibresource{References.bib}

\bibliography{bibfile}      

\end{document}